\newcommand{\proofend}{\hfill\fbox\\\medskip}
\newtheorem{observation}[theorem]{Observation}
\newcommand{\tr}{\mathop{}\mathopen{}\mathrm{tr}}	
\DeclarePairedDelimiter{\abs}{\lvert}{\rvert}
\newcommand{\newreptheorem}[2]{%
\newenvironment{rep#1}[1]{%
 \def\rep@title{#2 \ref{##1}}%
 \begin{rep@theorem}}%
 {\end{rep@theorem}}}
\definecolor{martin}{rgb}{0,.1,1}
\definecolor{jens}{RGB}{40,180,40}
\definecolor{Emilio}{RGB}{255,128,0}
\newcommand{\e}{\mathrm{e}}
\newcommand{\cov}{\mathrm{cov}}
\renewcommand{\i}{\mathrm{i}}
\DeclareMathOperator{\Tr}{Tr}
\DeclareMathOperator{\vect}{vec}
\newcommand{\ad}{\mathrm{ad}}
\def\equationautorefname~#1\null{eq.~(#1)\null}
\newcommand{\EE}{\mathbb{E}}
\newcommand{\CC}{\mathbb{C}}
\newcommand{\E}{\mathbb{E}}
\newcommand{\1}{\mathds{1}}
\newcommand{\RR}{\mathds{R}}
\newcommand{\U}{\mathds{U}}
\newcommand{\SU}{\mathds{SU}}
\newcommand{\mc}[1]{\mathcal{#1}}
\renewcommand{\O}{O}
\renewcommand{\H}{\mc{H}}
\newcommand{\LL}{\mathcal{L}}
\newcommand{\norm}[1]{\Vert #1 \Vert}
\newcommand{\kw}[1]{\frac{1}{#1}}
\newcommand{\Haar}{\mathrm{Haar}}
\newcommand{\SLH}{\mathrm{SLH}}
\DeclareMathOperator{\su}{\mathfrak{su}}
\renewcommand{\u}{\operatorname{\mathfrak{u}}}
\begin{document}
\title{Mixing properties of stochastic quantum Hamiltonians}
\author{E.\ Onorati$^1$,  O.\ Buerschaper$^1$, M.\ Kliesch$^1$, W.\ Brown$^{1}$, A. H.\ Werner$^{1,2}$,   J.\ Eisert$^1$}
\institute{1 Dahlem Center for Complex Quantum Systems, Freie Universit\"{a}t Berlin, 14195 Berlin, Germany\\
2 Department of Mathematical Sciences, University of Copenhagen, Universitetsparken 5, DK-2100 Copenhagen, Denmark}

\date{}
\vspace*{-4cm}
\begin{minipage}{\linewidth}
\maketitle
\end{minipage}

\begin{abstract}
Random quantum processes play a central role both in the study of fundamental mixing processes in quantum mechanics related to equilibration, thermalisation and fast scrambling by black holes, as well as in quantum process design and quantum information theory. In this work, we present a framework describing the mixing properties of continuous-time unitary evolutions originating from local Hamiltonians having time-fluctuating terms, reflecting a Brownian motion on the unitary group. The induced stochastic time evolution is shown to converge to a unitary design. As a first main result, we present bounds to the mixing time. By developing tools in representation theory, we analytically derive an expression for a local $k$-th moment operator that is entirely independent of $k$, giving rise to approximate unitary $k$-designs and quantum tensor product expanders. As a second main result, we introduce tools for proving bounds on the rate of decoupling from an environment with random quantum processes. By tying the mathematical description closely with the more established one of random quantum circuits, we present a unified picture for analysing local random quantum and classes of Markovian dissipative processes, for which we also discuss applications.
\end{abstract}

{\hypersetup{linkcolor=cyan}
\begin{minipage}{\linewidth}
\tableofcontents
\end{minipage}

\section{Introduction and motivation}

In recent years, several ramifications of quantum processes having a random component
have become prominent in the literature \cite{Convergence,Scrambling,Speedups,Decoupling,Oliveria,RandomErrorCorrection,RandomHamiltonian,SuperpolySpeedup,FastScrambling}.
These are quantum mechanical processes, but ones which have a classical random component.  \emph{Random circuits} are discrete processes of this type, so quantum circuits composed of unitary quantum gates, each quantum gate being randomly drawn according to some probability measure \cite{RandomCircuitsLow,Speedups,SuperpolySpeedup,BoutenHandel,BrandaoHarrowHorodecki,Pseudorandom}. Continuous-time processes belong to this class too, e.g.  \emph{stochastically fluctuating local Hamiltonians} \cite{FastScrambling,Belton}.
For both discrete and continuous-time evolutions, similar questions arise. This is, for instance, the question
how long it takes or what depth of the quantum circuit is required until suitable mixing -- in a sense made precise
below -- is achieved, meaning that they approximate a so-called  \emph{unitary design} \cite{UnitaryDesigns}.

This recent development parallels and further develops an established body of literature on fully classical random processes:
 \emph{Markov chain mixing} provides tools to capture details of the convergence of a Markov chain to its stationary distribution, giving tight bounds on relevant time scales for mixing, hitting or cover times \cite{MarkovChains}. Applications of this powerful mathematical framework range from algorithms design in computer science to the understanding of processes of equilibration and thermalisation in classical statistical mechanics. The famous cut-off phenomenon of card-shuffling is epitomic for the many intriguing insights the theory has to offer, showing that decks of $52$ cards have to be shuffled seven times until  the distribution is suddenly close in variation distance to the uniform mixture \cite{ShufflingCards}.

Such  \emph{random quantum processes}, as they will be called in this work, again have applications in algorithms design, now  \emph{quantum algorithms design} \cite{Speedups,SuperpolySpeedup,BriegelMixing}. They are used in  \emph{quantum process tomography}
and  \emph{low rank matrix recovery} \cite{tomographydesigns,Efficient} and  \emph{benchmarking} \cite{Benchmarking}, where they provide powerful tools
to avoid significant overheads otherwise necessary with naive deterministic prescriptions.
They play a key role in notions of  \emph{decoupling} \cite{Decoupling}, the task of approximately bringing a
quantum mechanical system into a tensor product state with its environment, which constitutes a key property of
quantum mixing processes. It also is an important primitive in quantum information theory \cite{SzeDuTomRen13,Decoupling}:
Indeed, it plays a central role in arguments of state merging \cite{Merging},  the task of conveying a subsystem from a sender to a receiver.
It also is key to the proof of the quantum reverse Shannon Theorem \cite{ReverseShannon} and is useful to capture
quantum channel capacities \cite{Buscemi}.
 \emph{Error correcting codes}, so codes that protect quantum information against unwanted local decoherence, can be built upon such random processes \cite{RandomErrorCorrection}. It should be clear from this that the analysis of such
processes constitutes a powerful proof tool in the context of quantum information theory.

Maybe most intriguingly, they are used as proxies for natural mixing processes occurring in physical systems
governed by quantum mechanical laws. Clearly, random processes are reminiscent in many ways and sometimes
exactly model  \emph{thermalising dynamics} of interacting quantum systems with many constituents \cite{bigreview}.
This link
has particularly prominently been explored in the context of  \emph{black hole thermalisation}.
This phenomenon is connected to the still unresolved puzzle how quickly black holes release information
about their microscopic state. Based on considerations from string theory
and gauge-gravity correspondences \cite{Banks,Maldacena},
it is increasingly becoming clear that black holes do not destroy information when evaporating. This
insight raises the question on what time scales this release of information precisely happens.
It has been suggested that the time scale is set by the time it takes to  \emph{``scramble''}
the microscopic degrees of freedom of the black hole, in a way that initial local perturbations will
be locally undetectable. Taking this idea seriously, it has been suggested in the famous \emph{ ``fast scrambling conjecture''} that black holes should indeed be perfect scramblers,
taking a time logarithmic in the number of degrees of freedom \cite{FastScrambling,FastScramblingConjecture}. Unfortunately, the microscopic models under consideration, most importantly the so-called
 \emph{matrix models} \cite{Banks}, involve highly non-local interactions in interacting models that embody both
bosonic and fermionic degrees of freedom, are notoriously difficult to solve, even on modern supercomputers.
For this purpose,
research on the fast scrambling conjecture has focused much on identifying proxies that share
many similarities with the actual physical model, to get a handle on a precise quantitative
understanding of the mechanisms that lead to such a fast scrambling. Classical models
have been considered \cite{Berenstein}, small-dimensional quantum models
\cite{Huebener}, random processes precisely of the kind considered here,
in the form of stochastically fluctuating local Hamiltonians \cite{FastScrambling},
as well as random circuits \cite{Scrambling}. It is one of the key motivations of the present work to provide tools for studies of this kind. Indeed,
stochastically  fluctuating local Hamiltonians are less well understood than
random circuits-- and importantly, a precise understanding
of the equivalences of mixing times seems urgently needed.

It should also be clear that  \emph{time-fluctuating processes} as such are ubiquitous in nature. For such
processes, the dynamics is captured by a family of Hamiltonians of the form
\begin{equation}\label{Hami}
    H_t = H_0 + F_t,
\end{equation}
such that both $H_0$ and $F_t$ are local Hamiltonians of a quantum system with many degree of freedom and $F_t$ is
randomly fluctuating in time.
Any
experimental setting in quantum mechanics will necessarily be interacting with a classical exterior in one way or the other.
Many decoherence mechanisms can well be approximated by a classical degree of freedom
fluctuating randomly in time. In fact, effects like magnetic field fluctuations are of this type, and so are
Gaussian noisy processes in condensed matter physics. This type of noise is usually
seen as a detrimental type of  \emph{decoherence}, deteriorating the coherence present in the
quantum mechanical system. This connection to  \emph{local dissipative dynamics} will be made clear below.
Again, a precise understanding of these effects and their impact seems
desirable.

Again more technologically or pragmatically 
speaking, it should be clear that fluctuating Hamiltonians of the form (\ref{Hami}) by no means have to reflect
unwanted external noise. Quite to the contrary, in many applications in which random quantum circuits are envisioned, one can as well
replace the quantum circuit by the mere time evolution under such a fluctuating Hamiltonian. In many situations this can lead to a significantly
simplified prescription, compared to implementing precisely controlled quantum gates that are designed according to samples of 
some suitable classical probability distribution. That is to say, in a number of instances, fluctuating Hamiltonians can be seen as being 
vastly more feasible than random circuits that require the accurate realization of quantum gates, 
at least from the perspective of  implementation. 

Motivated by these considerations, 
in this work we investigate
mixing properties of random quantum processes in quantum many-body systems.
Specifically, we consider a family of time-fluctuating local Hamiltonians inducing a Brownian motion on the unitary group.
We show that this locally generated Brownian motion gives rise to an efficient approximate unitary $k$-design of arbitrary order, i.e. all
its moment operators converge to those of the Haar measure. Furthermore, the convergence rate is comparable to that of a random quantum circuit in discrete time. Our main technical contribution is a connection between the generator of the local diffusion and the Casimir element of the special unitary group. This allows us in turn to obtain an explicit uniform lower bound on the gap of the local generator, i.e. independent of the order $k$. Hence, our results also provide a class of probability measures on the unitary group, where the set of generated unitaries has the spectral gap property, albeit with an explicitly known constant \cite{bourgain2012spectral, benoistspectral}. This might be an unexpected result, as the convergence time of the $k$-th moment increases with $k$ for many processes.

We also show decoupling with almost linear scaling in the system size. We interpret the time-fluctuating Hamiltonian in the framework of a  \emph{continuous-time random walk}, relating it with the discrete random walk induced by random quantum circuits with Haar distribution. The continuous-time version has been first formalised by Montroll and Weiss \cite{MontWeiss} as a sequence of random transitions (jumps) spaced out by waiting times and been object of successive study
\cite{Weiss}, being applied to a wide range of fields of physics \cite{ZabDenHan,SchuBar,ChauKob}.
In particular, an exact correspondence between the accelerated steps of the random walk induced by random quantum circuits given in  ref.~\cite{RandomCircuitsLow} and the jumps of the continuous-time random walk generated by the fluctuating Hamiltonian infers a close similarity between the discrete circuit and the continuous process and can be hence used to relate results from these two settings.
Much of the present work can hence be seen as providing a unifying framework to capture random quantum processes---continuous and discrete in time---under a single umbrella. By bringing notions of fluctuating Hamiltonians
closer together with those of random quantum circuits, we provide a unified picture of mixing properties of random quantum processes. The results laid
out here are expected to provide powerful technical tools to make further progress in those research questions for which such
quantum processes having a classical random component are relevant, maybe most intriguingly the fast scrambling conjecture.

\section{Preliminaries}
In this section, we introduce basic notions and concepts that will be made use of when stating the main results.
In the focus of attention will be the concept of a unitary design.
A unitary design is a probability distribution over unitary matrices that mimics properties of the Haar measure, in a similar sense as a
spherical design approximates the unit sphere.
In order to capture approximate versions of unitary designs -- which will feature strongly throughout this work -- several norms will be relevant.
We finish this section by introducing the concept of locally generated Brownian motion on the unitary group.

\subsection{Norms}
In this work, all Hilbert spaces and other vector spaces considered are finite dimensional, reflecting finite-dimensional quantum systems.
We denote the vector space of linear operators on a Hilbert space $\H$ by $\LL(\H)$.
The \emph{trace norm} of an operator $X\in\LL(\H)$ is defined by
\begin{equation}
\left\|X\right\|_1\coloneqq\sum_{i} \, s_i(X) ,
\end{equation}
where we denote by $s_i(X)$ the $i$-th singular value of $X$.

For linear maps on operators we introduce the following two norms.
\begin{enumerate}
\item The  \emph{diamond norm} \cite{Watrous1} of $\mathcal{T}\in \LL(\LL(\H))$ is defined to be
\begin{equation}\label{defdiamond}
\left\| \mathcal{T} \right\|_{\diamond} \coloneqq \sup_{d} \sup_{X\neq 0}\, \frac{\left\|\left(\mathcal{T}\otimes \mathcal{I}_{d}\right) X \right \|_1}{\left\|X\right\|_1}
\, ,
\end{equation}
where $\mathcal{I}_{d}$ denotes the identity element in $\LL(\LL(\CC^d))$.
The diamond norm is most meaningful to quantify how close two
quantum channels are, reflecting physical processes.
\item The  \emph{infinity norm} $\| \mathcal{T} \|_\infty$ of $\mathcal{T}$ is given as the largest singular value of $\mc T$.
\end{enumerate}

\subsection{Exact and approximate unitary designs}
In this work, we examine probability measures over the unitary group and their mixing properties. We refer to them as distributions on the unitary group $\U(N)$.
A central role is played by the invariant distribution over all unitaries given by the  \emph{Haar measure}.
In many applications, one is interested in distributions which approximate properties of the Haar measure but can be generated with limited resources (e.g.
quantum circuits of a certain given depth or
Hamiltonian quantum evolution generated by local Hamiltonians for a certain run-time). \emph{Unitary $k$-designs} capture the ability of a distribution to mimic the properties of the Haar measure in the sense that expectation values of polynomials of a certain order $k$ are equal to those of the Haar measure.
As pointed out above, they have a wide range of applications in quantum algorithm design
\cite{Speedups,SuperpolySpeedup,BriegelMixing}, in quantum state and process tomography
\cite{tomographydesigns,Efficient}, and in notions of benchmarking \cite{Benchmarking} -- basically as a powerful tool for \emph{partial de-randomisation}.
Conceptually,
they feature strongly in descriptions of equilibration, thermalisation and scrambling  \cite{FastScrambling,FastScramblingConjecture,Scrambling}.

In order to make this concept of a unitary design precise, we define the \emph{$k$-th moment operator} $M_\mu^k$
on $\mathcal{L}(\mathcal{H}^{\otimes k})$ with respect to a distribution $\mu$ on $\U (N)$ by
\begin{equation}
	\label{eq:k-moment}
	X\mapsto M_{\mu}^{k}(X) \coloneqq \E_{\mu} \left[ U^{\otimes k}X\mkern2mu(U^{\dagger})^{\otimes k} \right] \, .
\end{equation}
Exact unitary designs as well as suitable approximate versions thereof can be defined in terms of $M_\mu^k$ .

\begin{definition}[Unitary designs]\label{defdiamondnorm}
	Let $\mu$ be a distribution over the unitary group $\U (N)$. Then $\mu$ is an  \emph{$\varepsilon$-approximate unitary $k$-design} if
	\begin{equation}
	\left\|  M_{\mu}^{k} - M_{\Haar}^{k}\right\|_{\diamond} \leq \varepsilon .
	\end{equation}
	For $\varepsilon=0$, the distribution $\mu$ is also called an  \emph{exact unitary $k$-design}.
\end{definition}

Physically implementing an exact unitary design is in general neither an obvious nor an efficient task.
Fortunately, for a plethora of applications, exactness of a design is not required.
Instead, we are usually interested in obtaining  \emph{approximate unitary designs}, i.e., distributions which behave similarly as the Haar measure and which can be implemented efficiently. We would also like to note that there are different formal definitions of approximate unitary designs, each of which being equipped
with a different interpretation and being relevant in a different context; a
close examination has been done in ref.\ \cite{PhDLow}.

An important method for obtaining a bound on $\varepsilon$ is to analyse the gap of the moment operator~$M_\mu^k$, leading to the following definition.

\begin{definition}[Tensor product expanders]
	\label{def:TPE}
	A distribution $\mu$ on the unitary group $\U (N)$ is a quantum $(\lambda,k)$-tensor product expander if
	\begin{equation}
	\left\| M_\mu^k - M_\Haar^k \right\|_{\infty} \leq \lambda.
	\end{equation}
\end{definition}

The following lemma links this definition to the one of designs.

\begin{lemma}[{Criterion for being an approximate unitary design \cite[Lemma 2.2.14]{PhDLow}}]\label{lemma:TPE-diamond_relation}
	Let $\mu$ be a distribution on $\U (N)$.
	If $\mu$ is a quantum $(\lambda,k)$-tensor product expander, then $\mu$ is also an $\varepsilon$-approximate $k$-design with $\varepsilon=N^{k}\lambda$.
\end{lemma}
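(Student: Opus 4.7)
The plan is to bound the diamond norm of $\Delta\coloneqq M_\mu^k-M_\Haar^k$ by its $\infty$-norm via elementary Schatten inequalities, combined with a stabilisation result for completely bounded norms. Set $D\coloneqq N^k=\dim(\H^{\otimes k})$, so that $\Delta\in\LL(\LL(\H^{\otimes k}))$ and, by hypothesis, $\nnorm{\Delta}_{\infty}\le\lambda$.

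The first step is to argue that in the supremum in \eqref{defdiamond} the ancillary dimension may be taken to be $d=D$. This is Smith's theorem on cb-norm stabilisation: an extremiser can be assumed to be of rank one, $X=\ket{\psi}\bra{\varphi}$, and Schmidt-decomposing $\ket{\psi}$ and $\ket{\varphi}$ shows that both live in an effective ancilla of dimension $\le D$. Larger ancillas therefore contribute nothing to the supremum. Taking this for granted, it suffices to establish
\begin{equation}
\nnorm{(\Delta\otimes\mc{I}_D)(X)}_1\le D\,\lambda\,\nnorm{X}_1
\qquad\text{for all }X\in\LL(\H^{\otimes k}\otimes\CC^D).
\end{equation}

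For the second step, observe that $(\Delta\otimes\mc{I}_D)(X)$ is an operator on a space of dimension $D^2$, so the Cauchy--Schwarz-type inequality $\nnorm{Y}_1\le\sqrt{\mathrm{rank}(Y)}\,\nnorm{Y}_2$ yields
\begin{equation}
\nnorm{(\Delta\otimes\mc{I}_D)(X)}_1\le D\,\nnorm{(\Delta\otimes\mc{I}_D)(X)}_2.
\end{equation}
Next, use tensor-stability of the $2\to2$ (Hilbert--Schmidt) operator norm, i.e.\ $\nnorm{\Delta\otimes\mc{I}_D}_\infty=\nnorm{\Delta}_\infty\le\lambda$, together with the trivial bound $\nnorm{X}_2\le\nnorm{X}_1$, to obtain
\begin{equation}
\nnorm{(\Delta\otimes\mc{I}_D)(X)}_2\le\lambda\,\nnorm{X}_2\le\lambda\,\nnorm{X}_1.
\end{equation}
Chaining these inequalities and taking the supremum over nonzero $X$ gives $\nnorm{\Delta}_\diamond\le D\,\lambda=N^k\lambda$, which is the claim.

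The main obstacle is the first step: without stabilising the ancilla at dimension $D$, the chain of inequalities above would pick up an uncontrolled factor $\sqrt{d}$ coming from $\nnorm{Y}_1\le\sqrt{\mathrm{rank}(Y)}\,\nnorm{Y}_2$ applied on $\H^{\otimes k}\otimes\CC^d$. Once Smith's reduction is in place, the argument is purely a matter of standard Schatten-norm comparisons and tensor stability of the operator norm, neither of which exploits any specific structure of $M_\mu^k$ or $M_\Haar^k$; in particular, the bound is generic for any Hermiticity-preserving map on $\LL(\H^{\otimes k})$.
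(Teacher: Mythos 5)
Your proof is correct and is the standard dimension-counting argument: reduce the ancilla dimension to $N^k$ via the stabilisation property of the completely bounded trace norm, then chain $\nnorm{Y}_1\le\sqrt{\mathrm{rank}(Y)}\,\nnorm{Y}_2$, tensor-stability of the superoperator $\infty$-norm, and $\nnorm{X}_2\le\nnorm{X}_1$. The paper itself does not prove this lemma but cites it directly from Low's thesis, where essentially the same norm-comparison argument is given; one small remark is that the stabilisation of the diamond norm at ancilla dimension equal to the input dimension is more commonly credited to Kitaev (or stated as a standard fact in Watrous's treatment of the completely bounded trace norm) rather than to Smith, whose theorem concerns the cb operator norm, but the result you invoke is correct and applies to arbitrary linear maps such as $\Delta$.
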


The expectation value of polynomials with respect to the Haar measure over the unitary group can be understood in terms of the
\emph{Schur-Weyl duality}. This says that $M_{\Haar}^{k}$ is an orthogonal projection onto the span of operators representing a permutation
of the $k$~tensor copies of~$\mathcal{H}$ (see ref.~\cite[Proposition 2.2]{SWDuality} for a complete description). This means that
all elements of this space are eigenvectors with unit eigenvalues, while the complement space belongs to the kernel.

A similar connection between eigenvalues and eigenspaces can be made for
\emph{universal distributions} as we will see in Lemma \ref{lemma:k-copy_gapped} below.
In order to amplify closeness of a distribution $\mu$ on $\U (N)$ to the Haar measure,
one can convolute it $\ell$ times with itself and obtain a new measure $\mu^{\star \ell}$
on $\U (N)$. One can effectively draw a unitary $U$ from $\mu^{\star \ell}$ by drawing
$\ell$ unitaries $U_1, U_2, \dots, U_\ell$ independently from $\mu$ and take $U$ as the
product $U = U_1 U_2 \dots U_\ell$. Importantly, it holds that
\begin{equation}
M^k_{\mu^{\star \ell}} = (M^k_{\mu})^\ell \, .
\end{equation}

If the support of $\mu^{\star \ell}$ becomes dense in $\U (N)$ for large $\ell$ we call $\mu$  \emph{universal}. More precisely, a universal distribution can be defined as follows.

\begin{definition}[Universal distribution]
	Let $\mu$ be a distribution on $\U (N)$. Then $\mu$ is said to be  \emph{universal} if for all $V \in U (N)$ and any $\delta >0$
	there exists a positive integer $\ell$ such that
	\begin{equation}
	\mu^{\star \ell}\left(B_\delta(V)\right)>0,
	\end{equation}
	where $B_\delta(V)$ is the neighbourhood of $V$ with radius $\delta>0$.
\end{definition}
Here, the canonical way to capture the radius is in terms of the geodesic distance on $\U (N)$. It should be clear,
however, that any other equivalent metric gives rise to the same definition of universality.
This definition can be seen as a generalisation of a universal gate set: if $\mu$ is the uniform distribution over finitely many unitaries then this set of unitaries is universal if and only if $\mu$ is universal.
Universal distributions induce moment operators satsifying the following property for all orders $k$.

\begin{lemma}[Lemma 3.7 in ref.~\cite{RandomCircuitsLow}]\label{lemma:k-copy_gapped}
	Let $\mu$ be a distribution on $\U  (N)$. Then all eigenvectors of~$M_\Haar^k$ with unit eigenvalue
	are eigenvectors of~$M_\mu^k$ with unit eigenvalue. Additionally, if $\mu$ is universal then $\mu$ is $k$-copy gapped
	for any positive integer $k$. This means that
	\begin{equation}
	\left\| M^k_\mu - M^k_\Haar \right\|_\infty < 1.
	\end{equation}
\end{lemma}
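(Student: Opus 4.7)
The plan is to handle the two assertions in turn, the second being the substantive one. For the first, I observe that $M_{\Haar}^{k}$ is the orthogonal projection onto the commutant $\{Y \in \LL(\H^{\otimes k}) : U^{\otimes k} Y = Y U^{\otimes k}\ \text{for all}\ U \in \U(N)\}$, a standard consequence of the left-invariance of the Haar measure. Any $Y$ with $M_{\Haar}^{k} Y = Y$ thus lies in this commutant, and then $M_{\mu}^{k} Y = \E_{\mu}[U^{\otimes k} Y (U^{\dagger})^{\otimes k}] = \E_{\mu}[Y] = Y$. Setting $P := M_{\Haar}^{k}$, a change of variables against Haar likewise gives $P M_{\mu}^{k} = M_{\mu}^{k} P = P$, so $M_{\mu}^{k} - P$ annihilates the range of $P$ and leaves $\ker P$ invariant; equivalently, $\|M_{\mu}^{k} - P\|_{\infty} = \|M_{\mu}^{k}|_{\ker P}\|_{\infty}$.

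For the gap asserted in part two, I argue by contradiction: suppose $M_{\mu}^{k} Y = \lambda Y$ with $|\lambda| = 1$ and $0 \neq Y \in \ker P$. Then $\|M_{\mu}^{k} Y\|_{2} = \|Y\|_{2}$; writing $M_{\mu}^{k} Y = \E_{\mu}[T_U(Y)]$ with $T_U(Y) := U^{\otimes k} Y (U^{\dagger})^{\otimes k}$ preserving the Hilbert--Schmidt norm, strict convexity of $\|\cdot\|_{2}$ forces $T_U(Y)$ to be $\mu$-a.s.\ constant, and by continuity $T_U(Y) = \lambda Y$ for every $U \in \mathrm{supp}(\mu)$. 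Since $U \mapsto T_U$ is a group homomorphism, iteration yields $T_W(Y) = \lambda^{\ell} Y$ whenever $W = U_{1} \cdots U_{\ell}$ with each $U_i \in \mathrm{supp}(\mu)$, so the complex line $\CC Y$ is $T_W$-invariant for every $W \in \bigcup_{\ell} \mathrm{supp}(\mu^{\star\ell})$. Continuity of $W \mapsto T_W(Y)$ together with the universality of $\mu$, which makes this union dense in $\U(N)$, extends the invariance to every $W \in \U(N)$; the resulting assignment $\chi : \U(N) \to U(1)$ defined by $T_W(Y) = \chi(W) Y$ is then a continuous homomorphism.

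To close the argument, the continuous characters of $\U(N)$ are exactly $U \mapsto (\det U)^{m}$ for $m \in \mathbb{Z}$, whereas the conjugation representation $T$ is identically trivial on the centre $\{e^{i\theta} I : \theta \in \RR\}$, as the phases $e^{\pm i k \theta}$ from $U^{\otimes k}$ and $(U^{\dagger})^{\otimes k}$ cancel. Thus $\chi(e^{i\theta} I) = e^{iNm\theta} \equiv 1$ forces $m = 0$ and $\chi \equiv 1$, so $T_W(Y) = Y$ for every $W \in \U(N)$; this places $Y$ in the commutant, that is, in the range of $P$, contradicting $Y \in \ker P \setminus \{0\}$. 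The main technical subtlety I anticipate is the passage from the sub-semigroup $\bigcup_\ell \mathrm{supp}(\mu^{\star\ell})$ to its closure in $\U(N)$: one must check that the scalars $\lambda^{\ell}$ attached to a single $W$ by different length-$\ell$ factorisations agree (immediate once $Y \neq 0$, since $T_W(Y)$ depends only on $W$) and that multiplicativity and continuity of $\chi$ survive to the dense closure, which is precisely where the universality hypothesis enters.
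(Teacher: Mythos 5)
The paper does not supply its own proof of this lemma (it is cited directly from Harrow--Low \cite{RandomCircuitsLow}), so I assess your argument on its own merits. Your treatment of the first claim is correct, as is the reduction $\|M_\mu^k - M_\Haar^k\|_\infty = \|M_\mu^k|_{\ker P}\|_\infty$. Your argument for the second part -- strict convexity of the Hilbert--Schmidt norm to freeze $T_U(Y)$ $\mu$-a.s., propagation along words in $\mathrm{supp}(\mu)$, density via universality, and identification of the resulting continuous character $\chi$ of $\U(N)$ as trivial by evaluating on the centre -- is a clean and correct way to show that $M_\mu^k|_{\ker P}$ has no eigenvalue of unit modulus, i.e.\ that the \emph{spectral radius} of $M_\mu^k - M_\Haar^k$ is strictly below $1$.

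The gap is that this is not the same thing as $\|M_\mu^k - M_\Haar^k\|_\infty < 1$. In this paper $\|\cdot\|_\infty$ means largest singular value, and $M_\mu^k|_{\ker P}$ is in general not a normal operator, so the vector achieving the operator norm need not be an eigenvector; your argument is launched from ``$M_\mu^k Y = \lambda Y$'' and never touches a non-eigenvector norm-maximiser. The two quantities genuinely can differ here: take $N \geq 2$, $k=1$, unitaries $A,B$ generating a dense subgroup of $\U(N)$, and $\mu = \frac{1}{2}(\delta_A + \delta_B)$, which is universal. Since $A^{-1}B$ cannot be scalar (otherwise $\langle A,B\rangle$ would be abelian), there is a nonzero traceless $Y$ commuting with $A^{-1}B$; then $T_B Y = T_A Y$, so $M_\mu^1 Y = T_A Y$ has the same Hilbert--Schmidt norm as $Y$, whence $\|M_\mu^1 - M_\Haar^1\|_\infty = 1$, even though the spectral radius on $\ker P$ is still $<1$ by your argument. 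What Harrow--Low's Lemma~3.7 actually asserts, and what the present paper actually needs (its moment operator $e^{G^k T}$ has a Hermitian negative-semidefinite generator once the drift is removed, hence is self-adjoint), is precisely the eigenvalue statement. So your proof is essentially the right proof of the right content, but you should state explicitly that you are bounding the spectral radius, which coincides with $\|\cdot\|_\infty$ only when $M_\mu^k$ is normal -- e.g.\ when $\mu$ is symmetric under $U \mapsto U^\dagger$, or for the self-adjoint moment operators arising in this paper.
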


As a consequence, $M^k_{\mu^{\star \ell}}$ converges to $M^k_\Haar$ for $\ell \to \infty$.
For many practical applications, however, a bound on the convergence rate is needed. Below, we will extend such a bound from quantum circuits \cite{BrandaoHarrowHorodecki} to locally generated Brownian motion on $\U(N)$.

\subsection{Locally generated Brownian motion on the unitary group}\label{BM_on_U}
In this section, we define the central objects studied in this work: Brownian motions on the unitary group. These are continuous-time stochastic processes describing the unitary evolution of a quantum system with a fluctuating Hamiltonian and whose increments satisfy specific properties. In this way, a distribution on $\U(N)$ is induced, which changes over time and eventually converges to the Haar measure for arbitrary moments.

A deterministic family of Hamiltonians $t\mapsto H_t$ depending continuously on time $t\in \RR$
generates a unitary time evolution (see refs.~\cite{Dyson,Dollard}) via the time-ordered exponential
\begin{equation}
U_t = \mc T\left[ \exp \left\{ -\i\,\int_0^t H_s\, ds \right\}\right]
\, .
\end{equation}
Conversely, $H_t$ can be recovered from the  \emph{increments}
\begin{equation}
  U_t^\dagger U_{t+\Delta t}
  =
  \1 -\i \Delta t \, \langle H_t \rangle + O(\Delta t^2)
\end{equation}
by taking the limit $\Delta t \to 0$, where
$\langle H_t \rangle$ denotes the time average of $H_s$ over the interval $[t,t+\Delta t)$.

In this work, we investigate how well a Brownian motion $U_t$ that has stochastic
increments of the form of a local fluctuating Hamiltonian $H_t$ generates an approximate
unitary $k$-design as a function of time $t$.
First of all, we define Brownian motion as follows
(c.f. refs.~\cite{ItoBM,Unitary_BM,Lia04}):
\begin{definition}[Brownian motion on the unitary group]
	\label{def:UWP}
	A process $U_t$ on the unitary group $\U(N)$ is called  \emph{Brownian motion} if the following conditions are satisfied.
	\begin{enumerate}
		\item For all $0 < t_1 < t_2 < \dots < t_n$, the (left) increments
		$U_{t_1}U_0^\dagger, U_{t_2} U_{t_1}^\dagger, \dots,
		U_{t_{n}} U_{t_{n-1}}^\dagger$ are independent.
		\item For any time~$t\geq 0$, the increments are stationary, i.e., for any $\Delta t>0$ the increment $ U_{t+\Delta t} U_{t}^\dagger$ is equal in distribution to $U_{\Delta t}U_0^\dagger$.
		\item The paths $t\mapsto U_t$ are continuous almost surely.
	\end{enumerate}
\end{definition}
Brownian motion $U_t$ on the Lie group $\U(N)$ corresponds to Brownian motion $W_t$ on the Lie algebra $\u(N)$ through the exponential map,
which for a matrix Lie group is given by the series $\exp(X)=\sum_{n=0}^\infty {X^n}/{n!}$.
More precisely, one can construct Brownian motion on $\U(N)$ by  \emph{injecting the differential} of a Brownian motion from $\u(N)$ via the product integral of the exponential map,
\begin{equation}\label{eq:def:U_t}
U_t= \lim_{\Delta t \rightarrow 0} \prod_{\ell=t/\Delta t}^{1} \exp \left\{ W_{\ell \Delta t} - W_{(\ell-1)\Delta t} \right\} \, U_0 \, ,
\end{equation}
see ref.~\cite[Chapter 4.8]{McKean} for a complete proof of existence and uniqueness of the process, and additionally ref.~\cite[p.\ 226]{Rogers}.

The \emph{Hamiltonian increments} are denoted by
\begin{equation}
	H_{\ell,\Delta t} \coloneqq \i\, \Theta_{\ell, \Delta t}
\end{equation}
with
		\begin{equation}
		\label{eq:pseudo_derivative}
		\Theta_{\ell, \Delta t} \coloneqq \frac{1}{\Delta t} \,\left[ W_{\ell\Delta t} - W_{(\ell-1)\Delta t} \right]
		\end{equation}
being the increments in the Lie algebra $\u(N)$.

\subsection{Local Hamiltonian increments}\label{sec:loc_increments}
We now turn to describing the \emph{local} Hamiltonian increments on the physical quantum system consisting of $n$ subsystems of dimension $d$, so that
$N$ becomes $d^n$.
Those subsystems interact according to an interaction pattern captured by an interaction graph with vertex set $V$ and edge set $E$.
In the special case of $d=2$, this is referred to the \emph{qubit case}, and the system is an \emph{$n$-qubit system}.

We assume that
$\Theta_{\ell, \Delta t}$ from eq.~\eqref{eq:pseudo_derivative} is local with respect to an interaction graph
$(V,E)$, where each vertex in $V$ corresponds to a $d$-level subsystem.
Only qudits connected by an edge $e\in E$ may interact, i.e.,
\begin{equation}
\label{eq:stochastic_local_H} 
\Theta_{\ell, \Delta t}= \sum_{e\in E} \theta^{(e)}_{\ell,\Delta t} \, ,
\end{equation}
where each local term $\theta^{(e)}_{\ell,\Delta t}$ is supported on~$e$.
The local terms are explicitly given by
\begin{equation}
\label{eq:stochastic_local_H_details}
\theta^{(e)}_{\ell,\Delta t}
=
-\i \,h_0^{(e)}\,  +  \sum_{\mu} A_\mu^{(e)}\, \xi^{(e,\mu)}_{\ell,\Delta t} ,
\end{equation}
where we specify each term in this equation in the following.
$h_0^{(e)}$ are deterministic Hermitian operators reflecting a constant drift in the evolution.
Each noise operator $A_\mu^{(e)}$ acts on the two vertices connected by $e$ as $A_\mu$ and as the identity elsewhere. $\set{A_\mu}_\mu$ is a basis of the real
Lie algebra
\begin{equation}
    \u(d^2) \coloneqq \{X \in \CC^{d^2\times d^2}:  \ X=-X^\dagger \}.
\end{equation}
$\xi_k^{(e,\mu)}$ are real random variables representing the noise.
We assume that the noise satisfies
\begin{align}
\label{eq:noise_mean_zero}
 \EE\left[\xi_{\ell,\Delta t}^{(e,\mu)}\right]
 & =0,
    \\
 \EE\left[\xi_{\ell,\Delta t}^{(e,\mu)}\, \xi_{{\ell',\Delta t}}^{(e',\mu')}\right]
 & = - \frac{a}{\Delta t}\, \delta_{\ell,\ell'}\, \delta_{e,e'}\, \kappa^{-1}_{\mu,\mu'},
    \label{eq:noise_cov}
 \end{align}
 where $a>0$ is an arbitrary constant and the matrix $\kappa$ is defined by
\begin{equation}
    \label{eq:Killing_tensor_defining_rep}
    \kappa_{\mu,
            \nu}
    \coloneqq
     -2
     d^2
     \Tr(A_\mu^\dagger
         A_\nu).
\end{equation}
As we will explain later, this matrix is in fact the \emph{Killing metric tensor} associated with the basis~$\set{A_\mu}_\mu$.

\begin{remark}[Orthonormal basis]
    \label{rmk:white_noise}
    If the basis $\set{A_\mu}_\mu$ is orthonormal then our assumption~\eqref{eq:noise_cov} on the covariance simplifies to
    \begin{equation}
        \label{eq:white_noise_cov}
        \EE\left[\xi_{\ell,\Delta t}^{(e,\mu)},\xi_{\ell',\Delta t}^{(e',\mu')}\right]
        =
        \frac{a}{2d^2\Delta t}\,
         \delta_{\ell,\ell'}\,
         \delta_{e,e'}\,
         \delta_{\mu,\mu'}
    \end{equation}
    which represents \emph{white noise}.
    This happens, for instance, if we choose the Pauli basis
    (see the example processes in Section~\ref{xmp:white_noise} and Section~\ref{sec:main-decoupling}).
\end{remark}

\begin{remark}[Overcomplete sets of operators]
    \label{rmk:basis_of_algebra}
	Additionally, we may consider an overcomplete set of operators $\set{A_\mu}\in \u(d^2)$ as long as they give rise to a negative contribution to the generator~\eqref{eq:loc_g}, since this will increase the gap of the moment operator induced by the stochastic evolution and hence make the convergence even faster.
\end{remark}

The above described Brownian motion with Hamiltonian increments as in
eq.~\eqref{eq:stochastic_local_H} with the specified $\theta^{(e)}_\ell$ induces a Brownian motion $U_t$ on the unitary group.
We denote the distribution of $U_t$ at time $t$ by $\SLH (t)$ and write the according expectation as $\E_{\SLH (t)}$ .

\section{Main results}

In this section, we present the two main results on mixing properties for local stochastic Hamiltonian evolutions. Theorem~\ref{thm_TPE}, together with Corollary \ref{cor_AUD}, asserts that the distribution over unitaries induced by the Brownian motion as in Definition
\ref{def:UWP} is a quantum tensor product expander and hence an approximate unitary $k$-design after a run time scaling polynomially in $k$ and linearly in the system size $n$. This means that it is suitable to efficiently reproduce certain properties of the Haar measure.
Theorem~\ref{Decoupling_main} on decoupling says that any subsystem affected by a stochastic evolution reproducing Brownian motion becomes uncorrelated with respect to a second (possibly initially correlated) subsystem in almost linear run time in system size.

\subsection{Tensor product expanders and approximate unitary designs}
For any of the time-fluctuating local Hamiltonians fulfilling the description in Section~\ref{BM_on_U}, each moment of the generated unitary process becomes close to that of the Haar measure after a sufficiently long run time. This result
can be expressed in terms of quantum tensor product expanders or approximate unitary designs as follows.

\begin{theorem}[Local Brownian motions on $\U (d^n)$ are quantum $(\lambda,k)$-tensor product expanders]
\label{thm_TPE}
Let $U_T$ be a unitary Brownian motions with the increments~\eqref{eq:stochastic_local_H}
	with the interaction graph $(V,E)$ being either a complete
	graph or a $1D$ nearest neighbour graph.
	Then, for any run time
	\begin{equation}
	T \geq 850\lceil\log_d(4k)\rceil^2 d^2 k^5 k^{3.1/ \ln(d)}\, \frac{\ln(1/\lambda)}{a},
	\end{equation}
	$U_T$ is a quantum $(\lambda,k)$-tensor product expander.
\end{theorem}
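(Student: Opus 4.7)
My plan is to first reduce the theorem to a statement about the spectral gap of a generator of a semigroup acting on $\mathcal L(\mathcal H^{\otimes k})$. Since $U_t$ is a Brownian motion with independent stationary increments, the map $t\mapsto M^k_t := M^k_{\mathrm{SLH}(t)}$ forms a semigroup, and Itô calculus applied to $U_t^{\otimes k}X(U_t^\dagger)^{\otimes k}$ (using the covariance structure \eqref{eq:noise_cov}) yields $\tfrac{d}{dt} M^k_t = \mathcal G_k M^k_t$, where the generator decomposes as a sum over edges,
\begin{equation}
\mathcal G_k = \sum_{e\in E} \mathcal G_k^{(e)}, \qquad \mathcal G_k^{(e)} \;=\; -\tfrac{a}{2}\sum_{\mu,\nu}\kappa^{-1}_{\mu,\nu}\,\widehat A_\mu^{(e)}\widehat A_\nu^{(e)} \;+\;\text{drift from }h_0^{(e)} ,
\end{equation}
with $\widehat A_\mu^{(e)}$ denoting the induced action of $A_\mu^{(e)}$ on $\mathcal H^{\otimes k}$ by the Leibniz rule (conjugation on $k$ copies). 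From this, standard semigroup theory gives $\|M^k_T - M^k_\Haar\|_\infty \le \mathrm e^{-T\,\Delta_k}$, where $\Delta_k>0$ is the spectral gap of $-\mathcal G_k$ restricted to the orthogonal complement of the Haar-invariant subspace (which by Lemma~\ref{lemma:k-copy_gapped} is genuinely positive, provided $\mathrm{SLH}(t)$ is universal).

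The key algebraic observation, to be carried out next, is that the local term $\mathcal G_k^{(e)}$ is exactly (up to the overall factor $a/2$) the quadratic Casimir element of $\u(d^2)$ with respect to the Killing form $\kappa$, acting in the $k$-fold tensor representation on the two qudits attached to $e$. The choice \eqref{eq:Killing_tensor_defining_rep} of $\kappa$ is tailored so that this identification is basis-independent. Schur--Weyl duality decomposes $(\CC^{d^2})^{\otimes k}\otimes((\CC^{d^2})^\ast)^{\otimes k}$ into isotypic components labelled by pairs of Young diagrams, on each of which the Casimir acts as an explicit scalar determined by the highest weight. The trivial isotypic component yields the Haar-invariant subspace; what needs to be shown is that the smallest nonzero Casimir eigenvalue admits a lower bound \emph{independent of $k$}. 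This is the step I expect to be the main obstacle: one must identify, among all admissible highest weights, the one minimising the Casimir, and then establish that this minimum is a strictly positive constant depending only on $d$ (not on $k$). This is the crux of the "$k$-independent" character of the theorem and presumably requires a careful weight-multiplicity analysis.

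Once a uniform local gap $g(d)>0$ of $-\mathcal G_k^{(e)}$ is secured, the remaining task is to lift it to a global gap of $-\mathcal G_k = -\sum_{e\in E}\mathcal G_k^{(e)}$ on the relevant interaction graph. For the complete graph the gap is amplified linearly in the number of edges by a straightforward Cheeger/averaging argument, since each vertex couples to all others and no non-Haar eigenvector can be simultaneously annihilated by every edge term. For the 1D nearest-neighbour graph the lift is quantitatively more involved: here I would invoke the continuous-time-to-discrete-time correspondence announced in the introduction, converting $\mathcal G_k$ into the jump generator of a continuous-time random walk whose jumps mimic the two-site random gates of the Brandão--Harrow--Horodecki random circuit, and then porting their $n$- and $k$-dependent gap estimate (which is the origin of the $\log_d(4k)^2 k^5 k^{3.1/\ln d}$ prefactors) across this equivalence. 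The factor $850$ and the explicit exponents in the theorem then emerge directly from the corresponding constants in the BHH bound.

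Combining these ingredients gives $\|M^k_T-M^k_\Haar\|_\infty\le \mathrm e^{-T\,\Delta_k}\le\lambda$ as soon as $T\ge \Delta_k^{-1}\ln(1/\lambda)$, and substituting the explicit lower bound on $\Delta_k$ that arises from the two preceding steps yields exactly the stated run-time requirement. The representation-theoretic bound on the Casimir spectrum is where essentially all the novelty of the proof lies; the transport of the circuit estimate to continuous time, and the Itô derivation of $\mathcal G_k$, are more mechanical.
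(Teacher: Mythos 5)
Your outline follows the paper's own route in its essentials: derive the generator $\mathcal G_k=\sum_{e}g_k^{(e)}$, recognise $g_k^{(e)}$ as (a multiple of) the quadratic Casimir in the mixed-tensor representation $\pi_{k,k}$ on the two qudits of edge $e$, and import the Brand\~ao--Harrow--Horodecki local-to-global gap machinery to control the $1$D graph. That much is a faithful match, and the observation that the $850\lceil\log_d(4k)\rceil^2 d^2 k^5 k^{3.1/\ln d}$ prefactor is the BHH constant is exactly right. However, there are two concrete gaps that keep this from being a proof.

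First and most importantly, you defer precisely the step that carries all the content of the ``$k$-independent gap'' claim. You write that one must minimise the Casimir eigenvalue over ``all admissible highest weights'' and that this ``presumably requires a careful weight-multiplicity analysis'' -- but the whole difficulty is that without identifying which weights are admissible, there is no uniform-in-$k$ lower bound: generic irreducibles of $\su(d^2)$ have Casimir eigenvalues that come arbitrarily close to $0$. The paper's actual insight (Lemmas~\ref{lem:local_gap} and~\ref{lem:Young}) is a combinatorial constraint: because $\pi_{k,k}\simeq(\pi_1\oplus\pi_{\mathrm{ad}})^{\otimes k}$, every irreducible appearing in it has a Young diagram whose number of boxes is divisible by $d^2$, and among all such diagrams only the trivial one (eigenvalue $0$) and the adjoint one (eigenvalue $1$) have Casimir eigenvalue $\le 1$. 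It is this divisibility constraint, proved inductively on tensoring with $\pi_1\oplus\pi_{\mathrm{ad}}$ and then fed through the explicit Casimir formula~\eqref{eq:Casimir_Dynkin}, that pins the nonzero Casimir spectrum away from the interval $(0,1)$ for every $k$. Your proposal never surfaces this constraint, so the $k$-independence is asserted rather than argued.

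Second, your handling of the complete graph is not what the paper does, and as stated it is either wrong or overclaimed. You propose that on the complete graph ``the gap is amplified linearly in the number of edges by a straightforward Cheeger/averaging argument.'' The paper claims no such amplification, and the theorem's run-time bound is identical for the two graphs. What is actually proved (Lemma~\ref{lem:CompleteGraph}) is that the spectral gap of a convex combination of negative-semidefinite operators with a common kernel is \emph{concave}: removing edges from the complete graph (while keeping it connected) can only shrink the gap, so the complete graph is no worse than the $1$D chain, and the $1$D analysis suffices for both. A linear-in-$|E|$ amplification from averaging fails because the local terms are not mutually orthogonal on the non-invariant subspace; the correct and much weaker statement is $\Delta(\sum_i p_i G_i)\ge\sum_i p_i\Delta(G_i)$. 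Finally, a smaller but real omission: you carry the drift term $h_0^{(e)}$ formally into $\mathcal G_k^{(e)}$ but never show it is harmless; the paper (Lemma~\ref{lem:driving}) uses a Trotter argument to factor out the drift as a fixed unitary conjugation, which leaves the singular values of the moment operator, and hence its gap, unchanged.
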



Then, using Lemma \ref{lemma:TPE-diamond_relation}, we immediately obtain the subsequent corollary.

\begin{corollary}[Approximate unitary $k$-designs]\label{cor_AUD}
	For any run time
	\begin{equation}
	T \geq 850\lceil\log_d(4k)\rceil^2 d^2 k^5 k^{3.1/ \ln(d)}\,
	\frac{nk\log(d) +\ln\left(1/\varepsilon\right)}{a},
	\end{equation}
	$U_T$ is an $\varepsilon$-approximate unitary $k$-design.
\end{corollary}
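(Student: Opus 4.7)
The plan is to deduce the corollary as an immediate consequence of Theorem~\ref{thm_TPE} combined with Lemma~\ref{lemma:TPE-diamond_relation}. Recall that $N = d^n$ here, since the physical system consists of $n$ subsystems of dimension $d$. Lemma~\ref{lemma:TPE-diamond_relation} asserts that any quantum $(\lambda,k)$-tensor product expander on $\U(N)$ is automatically an $\varepsilon$-approximate unitary $k$-design with $\varepsilon = N^k \lambda = d^{nk}\lambda$. Thus, to guarantee an $\varepsilon$-approximate $k$-design, it is sufficient to guarantee the tensor product expander property with spectral parameter
\begin{equation}
\lambda \;\leq\; \frac{\varepsilon}{d^{nk}}.
\end{equation}

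Next, I would translate this choice of $\lambda$ into a lower bound on $\ln(1/\lambda)$, namely
\begin{equation}
\ln(1/\lambda) \;\geq\; nk\ln(d) + \ln(1/\varepsilon).
\end{equation}
Substituting this bound into the run-time condition provided by Theorem~\ref{thm_TPE}, one obtains that whenever
\begin{equation}
T \;\geq\; 850\lceil\log_d(4k)\rceil^2 d^2 k^5 k^{3.1/\ln(d)}\,\frac{nk\log(d) + \ln(1/\varepsilon)}{a},
\end{equation}
the process $U_T$ is a quantum $(\lambda,k)$-tensor product expander with $\lambda \leq \varepsilon/d^{nk}$, and hence an $\varepsilon$-approximate unitary $k$-design by Lemma~\ref{lemma:TPE-diamond_relation}.

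There is essentially no hard step here: the derivation is a one-line substitution once Theorem~\ref{thm_TPE} and Lemma~\ref{lemma:TPE-diamond_relation} are in hand. The only minor bookkeeping concern is to match the logarithm conventions (reading $\log(d)$ in the corollary as the natural logarithm to be consistent with the $\ln(1/\lambda)$ appearing in Theorem~\ref{thm_TPE}), and to verify that the factor $d^{nk}$ from the dimension penalty in Lemma~\ref{lemma:TPE-diamond_relation} is precisely what produces the additive $nk\log(d)$ term inside the numerator. All the heavy lifting---the uniform-in-$k$ gap of the local generator via the Casimir element, and the resulting polynomial-in-$k$ mixing rate---is already encoded in Theorem~\ref{thm_TPE}, so nothing further is required for the corollary beyond this direct substitution.
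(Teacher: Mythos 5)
Your proof is correct and is precisely the route the paper takes: the paper states the corollary follows "immediately" from Theorem~\ref{thm_TPE} together with Lemma~\ref{lemma:TPE-diamond_relation}, and your write-up simply spells out the substitution $\lambda = \varepsilon/N^k = \varepsilon/d^{nk}$ that makes this explicit. Your side remark about reading $\log(d)$ as the natural logarithm is also a fair observation, since the substitution actually produces $nk\ln(d)$ rather than a logarithm to some other base.
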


Theorem \ref{thm_TPE} can be seen as a unifying statement on random quantum processes. It extends the results on random local quantum circuits, as considered in ref.~\cite[Corollary 7]{BrandaoHarrowHorodecki}, to continuous time dynamics under fluctuating Hamiltonians. 
Note that the scaling of the minimal runtime required for the generating of a unitary $k$-design is by a factor of $n$ smaller with respect to the circuit setting. This is due to the number of Hamiltonian interactions per time step growing linearly in the system size for a 1D graph (which is, as discussed in Lemma \ref{lem:CompleteGraph}, the slowest setting among all complete graphs), while for local random quantum circuits only one gate per step is applied. 
If we re-scale the stochastic Hamiltonian with a pre-factor $O(1/\sqrt{n})$ so that the $k$-th moment operator may be written in the same form as the one induced by a random quantum circuit, i.e. (cf. eqs.~\eqref{generator_global_moment}-\eqref{eq:derive_g_2} for derivation),
\begin{equation}
M^{k}
=
\frac{1}{n} \sum_{j} \left(m^{k}\right)^{j,j+1} ,
\end{equation}
where $\left(m^{k}\right)^{j,j+1}$ denotes the local moment operator applied on qubits $j$ and $j+1$, then we would instead obtain the same scaling for the mixing time. Therefore, we can consider the two scenarios as perfectly compatible. 
As already discussed in the introduction, the time-continuous case might often be the more natural and easier implementable choice in applications, such as tomography or random benchmarking: They do not require the explicit implementation of random bipartite unitary matrices and often the natural fluctuations present in a system are already a good approximation of a locally generated Brownian motion.\\

The full proof of Theorem~\ref{thm_TPE} is given in Section~\ref{section:designs}. Its outline is the following.
\paragraph{Proof idea of Theorem~\ref{thm_TPE}}
  We express the $k$-th moment operator $M^k_{\SLH (t)}$ in terms of a generator $G_k$ so that $M^k_{\SLH (t)}= \exp(tG_k)$. This generator has again the same locality structure as Hamiltonian increments in eq.~\eqref{eq:stochastic_local_H}, i.e., 
  \begin{equation}
  G_k = \sum_{e\in E} g_k^{(e)}.
  \end{equation}
  The crucial point of our proof is to obtain a lower bound to the spectral gap for the invariant subspace of $g_k^{(e)}$.
  This gap can be recovered thanks to a simple relation between $g_k^{(e)}$ and the so-called \emph{Casimir element} of $\su(d^2)$ in a certain (reducible) representation denoted by $\pi_{k,k}$.
  Since  the eigenvalues of the Casimir element in each irreducible representation of~$\su(N)$ are well known, we can determine the spectrum in the representation~$\pi_{k,k}$ from its irreducible decomposition. In particular, using an argument based on the shape of Young diagrams, we derive a local gap, $\Delta(g_k)$, independently of~$k$ from which we finally deduce, applying  results from ref.~\cite{BrandaoHarrowHorodecki}, a gap for $G_k$. This establishes the bound on the mixing time.
\proofend

\subsection{Fast decoupling with stochastic quantum Hamiltonians}
\label{sec:main-decoupling}

The second result is concerned with fast decoupling properties of the
random evolutions considered in this work. In the following, we restrict our analysis to the qubit case in which $d=2$ and where the noise operators are given by the Pauli matrices $\sigma_0 = \1, \sigma_1, \sigma_2,$ and $\sigma_3$.
We connect to and extend the result on fast decoupling given in ref.~\cite{Decoupling} for a continuous-time evolution.
Inspired by the Hamiltonian given in ref.~\cite{FastScrambling}, for which a reading of the \emph{fast scrambling conjecture} has been studied, we set the increments in eqs.~\eqref{eq:stochastic_local_H} and~\eqref{eq:stochastic_local_H_details} of the Brownian motion to be
\begin{equation}\label{eq:H_Delta_t}
\Theta_{n,\ell,\Delta t}
=
-\i\, H_{n,\ell,\Delta t}
\coloneqq
- \mathrm{i}\, \left(\frac{2}{n(n-1)}\right)^{1/2}\sum_{j< k}\sum_{\alpha, \beta=0}^{3} {\sigma^{j}_{\alpha} \otimes \sigma^{k}_{\beta}\ \xi^{(j, k, \alpha, \beta)}_{\ell,\Delta t}} \, ,
\end{equation}
where $\sigma^{j}_{\alpha} \otimes \sigma^{k}_{\beta}$ means that $\sigma_\alpha \otimes \sigma_\beta$ is applied on qubits labeled $j$ and $k$,
respectively. We recall that $\xi^{(j, k, \alpha, \beta)}_{\ell,\Delta t}$  are i.i.d.\ real random variables with zero mean and covariance
\begin{equation}\label{eq:covariance_xi}
\EE \left[ \xi^{(j, k, \alpha, \beta)}_{\ell,\Delta t} \, \xi^{(j', k', \alpha', \beta')}_{\ell',\Delta t}\right]
=
\frac{1}{\Delta t} \,\delta_{\ell,\ell'} \delta_{k,k'}  \delta_{j,j'} \delta_{\alpha,\alpha'} \delta_{\beta,\beta'}  \  \qquad \forall j,k,\alpha,\beta \,
\end{equation}
which is obtained from eq.~\eqref{eq:white_noise_cov} by choosing $a$.
The pre-factor of $\left({2}/({n(n-1)})\right)^{1/2}$ is chosen so that the initial rate of diffusion of a local operator scales as $O(1/n)$.  This is to normalize the time scale for the diffusion process in order to compare it with the random quantum circuit model in refs.\ \cite{RandomCircuitsLow, Decoupling}, where the probability that a local operator experiences a random gate is $2/n$ per discrete time step.

In what follows, we refer to an $n$-fold tensor product of Pauli operators (including the identity) on qubits as a \emph{Pauli string} and denote it by $\sigma_\alpha$, with an $n$-dimensional index $\alpha= \{0,1,2,3\}^n$ representing the label of each sub-element.

For our results, we need to define the permutation invariance property. This condition is required to deduce a dominant probability distribution on the final Pauli coefficients when starting with an analysis of the evolution of the Pauli weights. Indeed, the random walk on Pauli weights  does not distinguish among strings with same support size but different support, hence it provides the probability distribution for each set of strings with the same support size, but not on Pauli strings taken singularly.

The permutation invariance property has already been debated in the proof of ref.~\cite{RandomCircuitsLow} showing that random quantum circuits with Haar measure are approximate unitary 2-designs. In ref.~\cite{shuffle2design} it has been discussed that this essential condition in the proof had not been granted and an argument making use on random transpositions based on the work of
Diaconis (see refs.~\cite{GenRT,ShufflingCards}) has been put forward solving this issue.
Since we cannot prove that permutation invariance is achieved with sufficiently high probability by the stochastic Hamiltonian evolution itself within a run time scaling almost linearly in $n$, we impose it as a pre-condition for the initial state.
Actually, we can relax the condition and ask for a ``large portion'' of the qubits,  but not necessarily all, to be invariant with respect to an arbitrary permutation. This allows us to apply our result to a larger family of states, for instance those whose support is very small. More formally, we define the permutation invariance property as follows.

\begin{definition}[Permutation invariance property]\label{3P}
	Let $0 \leq \gamma<1$. Let $\sigma_{\pi(\mu)}$ denote a Pauli string whose label is given by interchanging the sub-indices of $\mu$ according to the permutation $\pi$. Then, for an arbitrary quantum state $\rho$ of a $n$-qubit system, we say that it satisfies the  \emph{$\gamma$-permutation invariance property} if there exists a subset of $(1-\gamma)n$ qubits which is invariant with respect to any permutation, i.e., 
	\begin{equation}
	\Tr[\sigma_\mu \rho]= \Tr[\sigma_{\pi(\mu)} \rho]
	\end{equation}
	for every Pauli string $\sigma_\mu$ and every permutation $\pi$ on this subset of qubits.
\end{definition}	
Note that any state $\rho$ with $\left|{\rm supp}(\rho) \right| \leq \gamma n$ is permutation invariant with respect to this definition.

For the decoupling theorem, two initially correlated subsystems, which we denote by $A$ and $E$, are considered and their joint state is given by $\rho_{AE}$. Then, $A$ is affected by a unitary evolution describing Brownian motion. Subsequently a completely positive map $\mathcal{T}$ maps it into another system $B$. The map $\mathcal{T}_{A \rightarrow B}$ can be equivalently described by its  \emph{Choi-Jamiolkowski isomorphism} \cite{Choi}, given by
\begin{equation}
\tau_{A'B}=\left(\mathcal{I}_{A'} \otimes \mathcal{T}_{A \rightarrow B} \right)\left(\ket{\psi}\bra{\psi} \right)_{A'A}
\end{equation}
with
\begin{equation}
  \ket{\psi}_{A'A}=\left|A\right|^{-1/2}\,\sum_j \ket{j}_{A'}\otimes\ket{j}_A
\end{equation}
being a maximally entangled state vector
and $A'$ a copy of $A$ and $|A|$ denoting the Hilbert space dimension of $A$.

A decoupling theorem quantifies the distance, in terms of the 1-norm, of the final state of the above described evolution from the product state $\tau_B\otimes \rho_E$, considering the expectation over the unitary distribution. The bound on this expression is characterised by entropy measures of the initial state $\rho_{AE}$ and of the Choi-Jamiolkowski representation of the map $\mathcal{T}$. More specifically, the entropy measure is the  \emph{conditional collision entropy of $A$ given $B$}, defined as (see ref.~\cite[Definition 2.9]{OneshotDecoupling})
\begin{equation}
H_2 (A|B)_\rho \coloneqq\sup_{\sigma_B \in \mathcal{S}_=(\mathcal{H}_{B})} - \log \tr \left[\left(\left(\1_A \otimes \sigma_B^{-1/4}\right)\rho_{AB}\left(\1_A \otimes \sigma_B^{-1/4}\right)\right)^2\right].
\end{equation}

\begin{figure}
	\begin{center}
	\includegraphics[width=0.6\linewidth]{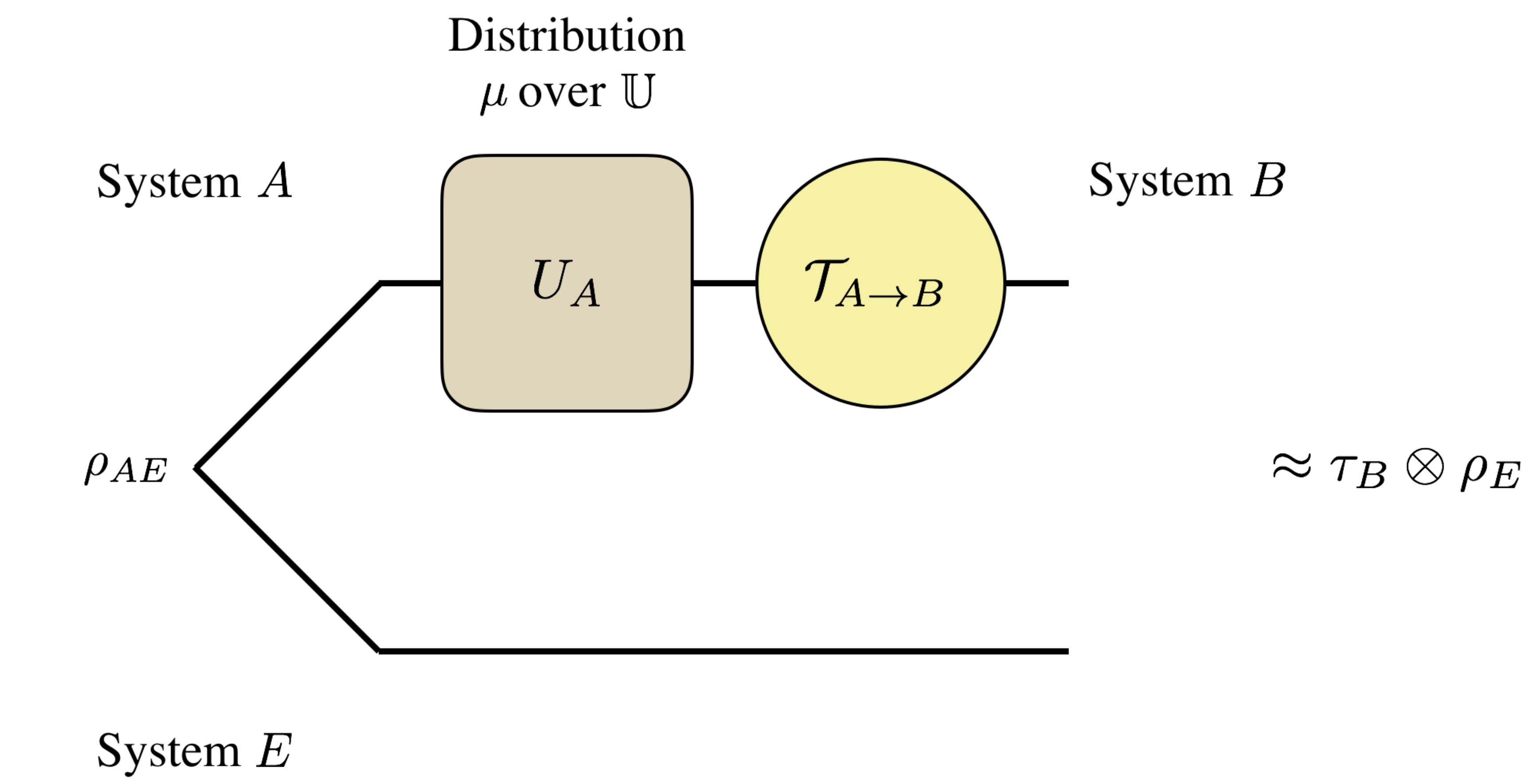}
\end{center}
	\caption{In the decoupling theorem, an initial bipartite state $\rho_{AE}$ is affected by a unitary evolution $U_A$ chosen at according to a certain distribution $\mu$. Then, subsystem $A$ is mapped to another subsystem $B$ through a completely positive map $\mathcal{T}_{A \rightarrow B}$. Finally, the distance between the final state and the product state $\tau_B \otimes \rho_E$ is characterised by entropy measures.}\label{fig:decoupling}
\end{figure}

Our second main result states that, under a unitary evolution describing Brownian motion, decoupling is achieved with a run time scaling almost linear in system size.
We denote the Pauli basis coefficients after a continuous-time evolution with run time $T$ as
\begin{equation}\label{def:Pauli_coefficient}
Q^T(\mu,\nu) \coloneqq \frac{1}{4^n} \Tr\left[\sigma_{\nu} \otimes \sigma_{\nu}\,M_{n,\SLH (T)}^{k=2} (\sigma_\mu \otimes \sigma_\mu)\right]\,.
\end{equation}
First of all, we derive an upper bound on the distance between the distribution of these Pauli coefficients and a distribution which is close to the uniform one.

\begin{theorem}[Mixing condition for Pauli coefficients]\label{thm_4.1}
	For any constants $\delta \in (0,1/16)$, $\eta \in (0,1)$ there exist constants $\varsigma>0$ and $0 <\gamma_0\leq 1/2$ such that for a total run time $T\geq \varsigma \;n\,\log^2 n$ and large enough $n$
	\begin{equation}
	\sum_{\nu\in\left\{0,1,2,3\right\}^n,\nu \neq 0}\left|Q^T(\mu,\nu)-p_\delta(\nu) \right| \leq \frac{1}{(3-\eta)^\ell\binom{n}{\ell}}\frac{1}{{\rm poly}(n)}
	\end{equation}
	where $\sigma_\mu$ is an arbitrary string whose support has size $\ell$ and has a subset of $(1-\gamma)n$ qubits, with $ \gamma<\gamma_0$, which is invariant with respect to any permutation, and $p_\delta$ is a (possibly sub-normalised) distribution on Pauli strings such that:
	\begin{equation}
	p_\delta(\nu) \leq \frac{5^{\delta n}}{4^n-1} \hspace{0.5cm} \forall \nu.
	\end{equation}
\end{theorem}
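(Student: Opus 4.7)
The plan is to analyze the second moment operator applied to $\sigma_\mu \otimes \sigma_\mu$, which by direct calculation with the Brownian Hamiltonian in eq.~\eqref{eq:H_Delta_t} preserves the ``diagonal'' subspace $\mathrm{span}\{\sigma_\nu\otimes\sigma_\nu\}_\nu$, so that the coefficients $Q^T(\mu,\nu)$ from eq.~\eqref{def:Pauli_coefficient} can be read as transition probabilities of a stochastic process on the set of Pauli strings. The aim is to show that this process mixes in time $T = O(n\log^2 n)$ to a distribution uniformly dominated by $p_\delta$, mirroring the discrete random-circuit analysis of ref.~\cite{RandomCircuitsLow} through the continuous-time random walk correspondence highlighted in the introduction.

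First I would derive the explicit form of the 2-local generator $g_2^{(j,k)}$ on the subspace $\{\sigma_\alpha\otimes\sigma_\alpha\otimes\sigma_\beta\otimes\sigma_\beta\}_{\alpha,\beta\in\{0,1,2,3\}}$. The symmetry of the Hamiltonian under exchange of the two qubits, together with the sum over all Pauli pairs, forces the induced walk to depend only on how many of the two sites carry a non-identity Pauli (the local weight) together with the local overlap pattern between $\mu$ and $\nu$. Combined with the assumed $\gamma$-permutation invariance of $\rho_{AE}$, this lets one project the walk on $\{0,1,2,3\}^n$ down to a low-dimensional walk parametrised by the support size $\ell$ of $\nu$ and the overlap between the supports of $\mu$ and $\nu$ on the invariant portion of the qubits.

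Next I would import the spectral analysis of this reduced walk from ref.~\cite{RandomCircuitsLow}, using the argument of ref.~\cite{shuffle2design} to justify the restriction to permutation-invariant sectors, and combine it with the Montroll--Weiss continuous-time random walk formalism: a continuous evolution of time $T$ corresponds to a discrete walk with a Poisson number of jumps whose mean is proportional to $T$, so a Chernoff-type concentration ensures that for $T\geq\varsigma\,n\log^2 n$ sufficiently many elementary steps have been performed with overwhelming probability. The spectral gap then drives the walk to within a polynomial-in-$n$ distance of a biased limit distribution. The normalisation factor $(3-\eta)^\ell\binom{n}{\ell}$ in the stated bound matches the number of Pauli strings of support size $\ell$, with the small slack $3\to 3-\eta$ absorbing the fact that after polynomially many steps the three non-identity Paulis on a fixed support are not perfectly equidistributed; the prefactor $5^{\delta n}$ then accounts for strings of support size $\leq \delta n$ which still carry a small detectable bias above uniform at the stopping time.

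The main obstacle I anticipate is the careful handling of the $\gamma$-permutation invariance hypothesis: the walk on $\nu$ moves probability mass back and forth between the permutation-invariant subset of $(1-\gamma)n$ qubits and its complement, and to bound $\sum_\nu |Q^T(\mu,\nu) - p_\delta(\nu)|$ one must control the leakage of mass into Pauli strings whose support has nontrivial intersection with the non-invariant $\gamma n$ sites. Choosing the threshold $\gamma_0\leq 1/2$ sufficiently small as a function of $\delta$ and $\eta$ should let this residual contribution be absorbed into the $1/\mathrm{poly}(n)$ factor, but establishing a clean quantitative relationship between $\gamma_0$, $\delta$, $\eta$ and the prefactor $\varsigma$ on the run time will be the technical heart of the proof.
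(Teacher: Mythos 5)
Your high-level picture is right and matches the paper's strategy at the coarsest level: restrict the second-moment map to the diagonal subspace $\mathrm{span}\{\sigma_\nu\otimes\sigma_\nu\}$ (which the paper establishes in Lemma~\ref{lem:Hermitian}), read off $Q^T(\mu,\nu)$ as a transition kernel, project to the weight (``zero'') chain of Lemma~\ref{transition_matrix_P}, and exploit the Montroll--Weiss correspondence: the accelerated (conditioned-on-moving) chain~\eqref{0chain_transition_matrix} coincides exactly with the one induced by a Haar random quantum circuit, so the continuous-time process is the discrete walk with i.i.d.\ exponential waiting times between jumps. That is indeed the backbone of Section~\ref{sec:Decoupling_poof}.

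However, the technical mechanism you propose is not the one used, and the difference matters. You plan to ``import the spectral analysis'' of the reduced walk and let the spectral gap drive convergence to a biased stationary distribution. The paper does not use a spectral-gap argument on the zero chain at all; it uses a \emph{hitting-time/Chernoff} argument (Lemma~\ref{lemma:mixing_main} and Lemma~\ref{lemma:waiting_time}). The key quantity being controlled is the probability $\mathbb{P}(E_W^c)$ that the weight has \emph{not} yet reached the window $[r_-,r_+]=[(3/4-\delta)n,(3/4+\delta)n]$ by time $T$, and separately $\mathbb{P}(E_R^c)$ that too few sites have been randomised by the Pauli-relabeling part $R$ of the decomposed accelerated chain $A=\tfrac13 R + \tfrac23 L$ (eq.~\eqref{acc_chain}). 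The $\frac{1}{(3-\eta)^\ell\binom{n}{\ell}}\frac{1}{\mathrm{poly}(n)}$ factor is precisely an upper bound on $\mathbb{P}(E_W^c)$ coming from the waiting-time/minimum-visited-site analysis for an initial weight $\ell$ --- it is \emph{not}, as you suggest, a slack absorbing imperfect equidistribution of the three non-identity Paulis on a fixed support. Likewise, $5^{\delta n}$ is not ``accounting for strings of support size $\leq\delta n$''; it is the uniform blow-up over $\frac{1}{4^n-1}$ arising from the finite width $2\delta n$ of the hitting window $[r_-,r_+]$ combined with the three-case bound on individual $Q^T(\mu,\nu)$.

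Finally, you propose to invoke the \cite{shuffle2design} argument ``to justify the restriction to permutation-invariant sectors.'' This points in the wrong direction: the paper explicitly states it \emph{cannot} show permutation invariance is generated by the dynamics within the desired run time, and therefore \emph{assumes} it as a hypothesis on the initial Pauli string. The $\gamma$-permutation invariance is then consumed in the third case of the final argument (support sizes $\gamma_0 n < k < (1-\gamma_0)n$), together with the randomisation event $E_R$, to bound individual coefficients $Q^T(\mu,\nu) \leq \frac{5^{\delta n}}{4^n-1}$. Without replacing your spectral-gap plan with a hitting-time argument and without the $A=\tfrac13 R+\tfrac23 L$ decomposition plus the three-case bound on single coefficients, you will not recover either the $(3-\eta)^\ell\binom{n}{\ell}$ dependence in the error term or the $5^{\delta n}$ ceiling on $p_\delta$, so the proposal as written has a genuine gap at its technical heart.
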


From Theorem \ref{thm_4.1} we obtain the final result on decoupling, which can be seen as a statement unifying the description of continuous and
	discrete processes. Specifically, it links the stochastic local Hamiltonian evolution (up to the permutation invariance property assumed for the initial state)
	to that of a random quantum circuit under Haar distribution given in ref.~\cite[Theorem 3.2]{Decoupling}, and establishes
	a connection between the discrete random walk induced by a random quantum circuit and the continuous-time random walk implied by the stochastic local Hamiltonian evolution. The actual proof of Theorem \ref{Decoupling_main} from Theorem \ref{thm_4.1} follows by arguments analogous to the random quantum circuit case given in  ref.\ \cite{Decoupling}.
	
\begin{theorem}[Fast decoupling]\label{Decoupling_main}
	Consider a bipartite quantum state $\rho_{AE} \in \mathcal{S}_{AE}$ of an $n$-qubit system $A$ coupled with some other system $E$. Let then $\rho_{AE}$ undergo a unitary evolution $U_t$	induced by stochastic local Hamiltonian increments as in eq.~\eqref{eq:H_Delta_t} acting upon system $A$,  followed by a completely positive trace preserving map $\mathcal{T}:\mathcal{S}_{A} \rightarrow \mathcal{S}_{B}$ which maps from $A$ to another system $B$.
	Let $\tau_{A'B}$ denote the
	Choi-Jamiolkowski  isomorph of $\mathcal{T}$.
	Then, for any $\delta \in(0,1/16)$ there exist $\varsigma>0$ and $0<\gamma_0\leq 1/2$ such that for all $\gamma$-permutation invariant states with $\gamma<\gamma_0$ and total run times $T \geq \varsigma \, n\, \log^2 n$ and for large enough $n$
	\begin{equation}
	\E_{\SLH (T)} \left\{ \left\| \mathcal{T}\left(U_T\,\rho_{AE}U^{\dagger}_T\right) - \tau_B\otimes \rho_E \right\|_1 \right\}
	\leq
	\left({\frac{1}{{\rm poly}(n)} + 5^{\delta n} \cdot 2^{-H_2(A|B)_{\tau}-H_2(A|E)_{\rho}}}\right)^{1/2},
	\end{equation}
	where $\SLH (T)$ denotes the distribution over the unitary group induced by the Brownian motion with run time $T$.
\end{theorem}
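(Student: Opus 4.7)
The plan is to reduce the decoupling bound to the Pauli-coefficient mixing estimate of Theorem~\ref{thm_4.1} via the standard two-moment framework of refs.~\cite{Decoupling,OneshotDecoupling}. First I would apply Jensen's inequality (concavity of $\sqrt{\cdot}$) to pass from the expected $1$-norm to the square root of the expected squared $1$-norm, and then, for the pair $\sigma_B,\sigma_E$ attaining the suprema defining $H_2(A|B)_\tau$ and $H_2(A|E)_\rho$, use H\"older's inequality to insert weights $\sigma_B^{-1/4}\otimes\sigma_E^{-1/4}$ so that the target is controlled by the weighted Hilbert-Schmidt expectation
\begin{equation}
  \E_{\SLH (T)} \bigl\|(\sigma_B^{-1/4}\otimes\sigma_E^{-1/4})\bigl(\mathcal{T}(U_T\rho_{AE}U_T^\dagger)-\tau_B\otimes\rho_E\bigr)(\sigma_B^{-1/4}\otimes\sigma_E^{-1/4})\bigr\|_2^2 .
\end{equation}

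Second, I would compute this weighted $2$-norm by expanding $\rho_{AE}$ and the Choi-Jamiolkowski operator $\tau_{A'B}$ in the Pauli basis on $A$ (resp.\ $A'$). Squaring the $2$-norm makes the expression quadratic in $U_T$, so taking the expectation produces the second moment operator $M^{k=2}_{n,\SLH (T)}$ acting on terms of the form $\sigma_\mu\otimes\sigma_\mu$, whose Pauli matrix elements are precisely the coefficients $Q^T(\mu,\nu)$ of eq.~\eqref{def:Pauli_coefficient}. The $\mu=\nu=0$ contributions cancel the subtracted product $\tau_B\otimes\rho_E$ exactly, leaving a sum over nontrivial $(\mu,\nu)$ weighted by the $B$- and $E$-side Pauli components of $\tau$ and $\rho$ dressed with $\sigma_B^{-1/2}$ and $\sigma_E^{-1/2}$.

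Third, I would apply Theorem~\ref{thm_4.1} for each nonzero $\mu$ of support size $\ell$: it replaces $Q^T(\mu,\nu)$ by the $\mu$-independent sub-distribution $p_\delta(\nu)$ up to an error of size $1/[(3-\eta)^\ell\binom{n}{\ell}{\rm poly}(n)]$. Once $p_\delta(\nu)$ is substituted, the $\mu$- and $\nu$-dependences decouple and factorise into a $B$-side and an $E$-side piece; using $p_\delta(\nu)\leq 5^{\delta n}/(4^n-1)$ together with Parseval in the Pauli basis, those pieces collapse exactly to $2^{-H_2(A|B)_\tau}$ and $2^{-H_2(A|E)_\rho}$, producing the $5^{\delta n}\cdot 2^{-H_2(A|B)_\tau-H_2(A|E)_\rho}$ term. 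The leftover error, summed against the $3^\ell\binom{n}{\ell}$ nontrivial Pauli strings of each support size $\ell$, becomes a geometric series in $\ell$ (convergent because $\eta>0$) whose sum contributes the $1/{\rm poly}(n)$ term. Taking the square root as demanded by the Jensen step yields the claimed bound.

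The principal obstacle will be handling the permutation-invariance hypothesis cleanly: Theorem~\ref{thm_4.1} asks for it of each input Pauli string $\sigma_\mu$, so one must argue term-by-term that the $\gamma$-permutation invariance of $\rho_{AE}$ in the sense of Definition~\ref{3P} confines the relevant Pauli expansion to strings whose supports fit inside the invariant subset of qubits, and that no remaining strings spoil the bound. A secondary delicate point is the joint calibration of $\delta$, $\eta$, and the threshold $\gamma_0$ so that the $\ell$-series converges uniformly in $n$ while the dimensional factor $5^{\delta n}$ stays negligible relative to the entropy gains $2^{-H_2(A|B)_\tau-H_2(A|E)_\rho}$ in the regime of interest.
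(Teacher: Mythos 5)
Your plan mirrors the route the paper takes: the paper does not spell out the passage from Theorem~\ref{thm_4.1} to Theorem~\ref{Decoupling_main}, instead referring the reader to the random-quantum-circuit argument of ref.~\cite{Decoupling}; your Jensen/H\"older reduction to a weighted Hilbert--Schmidt norm, the Pauli expansion in terms of $Q^T(\mu,\nu)$, the substitution of $p_\delta$, and the appeal to Parseval and permutation invariance are exactly the ingredients of that argument.

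The one place where your sketch would fail as written is the handling of the error term. You claim the remainder from Theorem~\ref{thm_4.1}, ``summed against the $3^\ell\binom{n}{\ell}$ nontrivial Pauli strings of each support size $\ell$, becomes a geometric series in $\ell$ (convergent because $\eta>0$)''. That string-counting argument produces a ratio $3/(3-\eta)>1$, so the series it defines \emph{diverges}; increasing $\eta$ only makes the ratio larger, not smaller. What actually tames the error is not a count of strings but the normalisation of the weighted Pauli coefficients of $\rho$, call them $\hat\rho(\mu)$: the per-$\mu$ error bound sits inside a sum weighted by $\hat\rho(\mu)^2$, whose total is controlled via Parseval by the relevant collision-entropy factor, and
\begin{equation}
\sum_{\mu\neq 0}\frac{\hat\rho(\mu)^2}{(3-\eta)^{|\mu|}\binom{n}{|\mu|}}
\;\leq\; \left(\max_{1\leq \ell\leq n}\frac{1}{(3-\eta)^{\ell}\binom{n}{\ell}}\right)\sum_{\mu\neq 0}\hat\rho(\mu)^2
\;=\; \frac{1}{(3-\eta)\,n}\,\sum_{\mu\neq 0}\hat\rho(\mu)^2 \, ,
\end{equation}
since $(3-\eta)^\ell\binom{n}{\ell}$ is minimised at $\ell=1$ over the range $1\leq\ell\leq n$. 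Thus the $\binom{n}{\ell}^{-1}$ in Theorem~\ref{thm_4.1} is what supplies the $1/\mathrm{poly}(n)$ decay, and the $(3-\eta)^{-\ell}$ is slack against the non-uniformity of the $\hat\rho(\mu)$ across support sizes, not a counterweight to the number of strings. With that correction the rest of your outline is sound and coincides with the proof the paper defers to.
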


\paragraph{Proof idea of Theorem~\ref{thm_4.1}}
	As a first step, we observe the evolution of the support size of Pauli strings during the continuous-time process, looking for the probability to reach a support size within the interval $\left[3/4-\delta,3/4+\delta \right]$ when starting from an arbitrary Pauli string of weight $\ell$.
	The key point is that the jumps of the continuous-time random walk on the Pauli weights induced by the fluctuating Hamiltonian correspond exactly to the accelerated Markov chain (i.e., the chain conditioned on moving) deduced from the random quantum circuit with Haar distribution as in refs.~\cite{Decoupling,RandomCircuitsLow}. This displays an equivalence between the two settings and allows us to link results relating to the two types of processes.
	We then compute the contribution of each string to this probability to obtain the desired condition for Pauli coefficients.
	More precisely, we do so by using the  permutation invariance property and the uniform randomisation of the
	Pauli basis $\sigma_1,\sigma_2,\sigma_3$ -- the latter achieved by the evolution itself. We can then
	infer that almost all Pauli strings with the same support size share the same probability.
	\proofend

\section{Applications}

We discuss in the following two interesting applications for Brownian motion on the unitary group, namely dissipative dynamics and black holes scrambling, and hint at a third one, making use of fluctuating Hamiltonian dynamics in quantum information processing.

\subsection{Dissipative dynamics arising from fluctuations}

As pointed out before, there is an intimate relationship between time-fluctuating dynamics and \emph{Markovian dissipative evolution}, a connection that will be made manifest in this subsection. 
Brownian motion $U_t$ on the unitary group yields an average dynamics given by 
\begin{equation}\label{eq:rho_t}
\rho (t) \coloneqq \EE [ U_t \rho\, U_t^\dagger ] \, ,
\end{equation}
which describes a dissipative quantum Markovian evolution of the state $\rho$.
In this sense, \emph{time-fluctuating classical noise} is precisely a specific source of \emph{dissipation}. As pointed out in the introduction, processes of this kind are ubiquitous in nature and in quantum systems in the laboratory: they originate whenever one does not have perfect control over the
classical control parameters. At the same time, they can be used as a tool. Indeed, the use of \emph{controlled 
dissipative Markovian dynamics} has received much interest in recent years \cite{DissipationZoller,VerWolCir09,KasWolEis13}.

The generator of the dynamical semi-group given by the evolution \eqref{eq:rho_t}, and more generally of the $k$-th moment operator from eq.~\eqref{eq:k-moment}, is calculated below in Lemma~\ref{lem:generator}.
For $k=1$, the generator is explicitly given in the following.

\begin{proposition}[Fluctuations as dissipative processes]
\label{prop:lindblad}
Let $U_t$ be a Brownian motion with increments $\Theta_{\Delta t}$ as in eq.~\eqref{eq:pseudo_derivative}.
Write $\Theta_{\Delta t}$ as
\begin{equation}\label{eq:H0F}
\Theta_{\Delta t}
=
-\i \, H_0 + F_{\Delta t}\, ,
\end{equation}
where $-\i\, H_0$ and $F_{\Delta t}$ are its anti-Hermitian time constant and fluctuating parts, respectively, with
\begin{align*}
F_{\Delta t} = \sum_\mu B_\mu \, \xi^\mu_{\Delta t},
\qquad B_\mu^\dagger = -B_\mu \, ,
\qquad
\EE[\xi^\mu_{\Delta t}] = 0\, ,
\qquad \text{and}\qquad
\EE[\xi^\mu_{\Delta t} \, \xi^\nu_{\Delta t}] = - \frac{a}{{\Delta t}} \, \delta_{\mu,\nu} \, .
\end{align*}
Then $\rho(t) = \EE[U_t \rho U_t^\dagger]$ gives rise to a quantum dynamical semi-group and evolves according to the Lindblad equation
  \begin{equation}
    \frac{d}{dt} \rho(t)
    =
    -\i\, [H_0, \rho(t)] - a \sum_\mu \left(
      B_\mu \rho B_\mu^\dagger
       -\kw 2 \left(B_\mu^\dagger B_\mu \rho+ \rho B_\mu^\dagger B_\mu\right)
      \right)
  \end{equation}
with $\rho(0) = \rho$.
\end{proposition}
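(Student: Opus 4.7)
The plan is to derive the Lindblad equation as the $\Delta t\to 0$ limit of an Itô-type expansion of the elementary increment of $U_t$, and then to use the Markov property built into Definition~\ref{def:UWP} to promote the resulting infinitesimal generator to an evolution equation on all of $[0,\infty)$. Concretely, I would write $U_{t+\Delta t}U_t^\dagger = \exp(\Delta t\,\Theta_{\Delta t})$ with $\Theta_{\Delta t} = -\i H_0 + F_{\Delta t}$ and Taylor-expand the matrix exponential. Since each noise variable $\Delta t\,\xi^\mu_{\Delta t}$ behaves as a Wiener increment (variance of order $\Delta t$), the second-order term $(\Delta t\,F_{\Delta t})^2$ contributes at the same order in $\Delta t$ as the deterministic drift $-\i H_0\,\Delta t$ once the expectation is taken, and must therefore be retained; all terms of order $\Delta t^{3/2}$ or higher can be dropped without affecting the generator.

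Next I would carry out a careful bookkeeping of the terms surviving the expectation in $\EE\bigl[U_{t+\Delta t}U_t^\dagger\,\rho\,(U_{t+\Delta t}U_t^\dagger)^\dagger\bigr] - \rho$. Odd-in-noise contributions vanish by $\EE[\xi^\mu_{\Delta t}]=0$; two classes of second-order contributions remain: an anticommutator-type term $\tfrac12\{(\Delta t\,F_{\Delta t})^2,\rho\}$ produced by the quadratic piece of the expansion acting on $\rho$ from either side, and a ``sandwich'' term $(\Delta t\, F_{\Delta t})\,\rho\,(\Delta t\, F_{\Delta t})^\dagger$ arising from the cross product of the first-order noise contributions to $U_{t+\Delta t}$ and to its adjoint. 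Evaluating these Gaussian averages with the prescribed covariance $\EE[\xi^\mu_{\Delta t}\xi^\nu_{\Delta t}]\propto \delta_{\mu\nu}/\Delta t$ yields finite limits proportional to $\sum_\mu \{B_\mu^2,\rho\}$ and $\sum_\mu B_\mu\rho B_\mu$ respectively, with coefficients linear in $a$.

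I would then apply the anti-Hermiticity condition $B_\mu^\dagger = -B_\mu$ to rewrite $B_\mu^2 = -B_\mu^\dagger B_\mu$ and $B_\mu\rho B_\mu = -B_\mu\rho B_\mu^\dagger$. This substitution collapses the two surviving second-order contributions into the canonical Gorini--Kossakowski--Sudarshan--Lindblad dissipator $B_\mu \rho B_\mu^\dagger - \tfrac12\{B_\mu^\dagger B_\mu,\rho\}$, while the deterministic drift yields the Hamiltonian commutator $-\i[H_0,\rho]$. Stationarity and independence of the increments from Definition~\ref{def:UWP} then let me apply the same infinitesimal argument at every intermediate time $s\in[0,t]$, so that the differential identity holds on the whole of $[0,\infty)$ with initial condition $\rho(0)=\rho$; existence and uniqueness of the underlying process $U_t$ are inherited from the construction cited in Section~\ref{BM_on_U}.

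The main subtlety is the bookkeeping of the Itô correction: the anticommutator contribution and the sandwich contribution enter with distinct combinatorial prefactors, and only after the anti-Hermiticity substitution do they recombine into the canonical Lindblad form. A helpful consistency check is that the resulting generator must generate a completely positive semi-group, which is automatic here from the definition $\rho(t)=\EE[U_t\rho U_t^\dagger]$ as an average of unitary conjugations; any sign mistake in the second-order coefficients would violate this and signal an error in the stochastic expansion.
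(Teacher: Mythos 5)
Your proof is correct and follows essentially the same approach as the paper: Taylor-expand the elementary increment $\exp(\Delta t\,\Theta_{\Delta t})$ to second order, take the Gaussian expectation keeping the Wiener-scaling noise contributions (anticommutator plus sandwich term), and use the anti-Hermiticity $B_\mu^\dagger=-B_\mu$ to recombine them into the GKSL dissipator, with stationarity and independence of increments promoting the infinitesimal identity to a genuine semigroup generator. The only difference is presentational: the paper specializes the general Lemma~\ref{lem:generator} to $k=1$, working on the vectorized space via $\pi_{1,1}(\Theta)=\Theta\otimes\1+\1\otimes\overline\Theta$ and then un-vectorizing with $\vect(XYZ)=(X\otimes Z^T)\vect(Y)$, whereas you carry out the identical bookkeeping directly at the level of $\rho$, which is more elementary but re-derives a computation the paper already has on hand.
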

This proposition is proven after Lemma~\ref{lem:generator}.

\subsection{Applications for fast scrambling}

\begin{figure}[b]
\begin{center}
\includegraphics[width=0.7\linewidth]{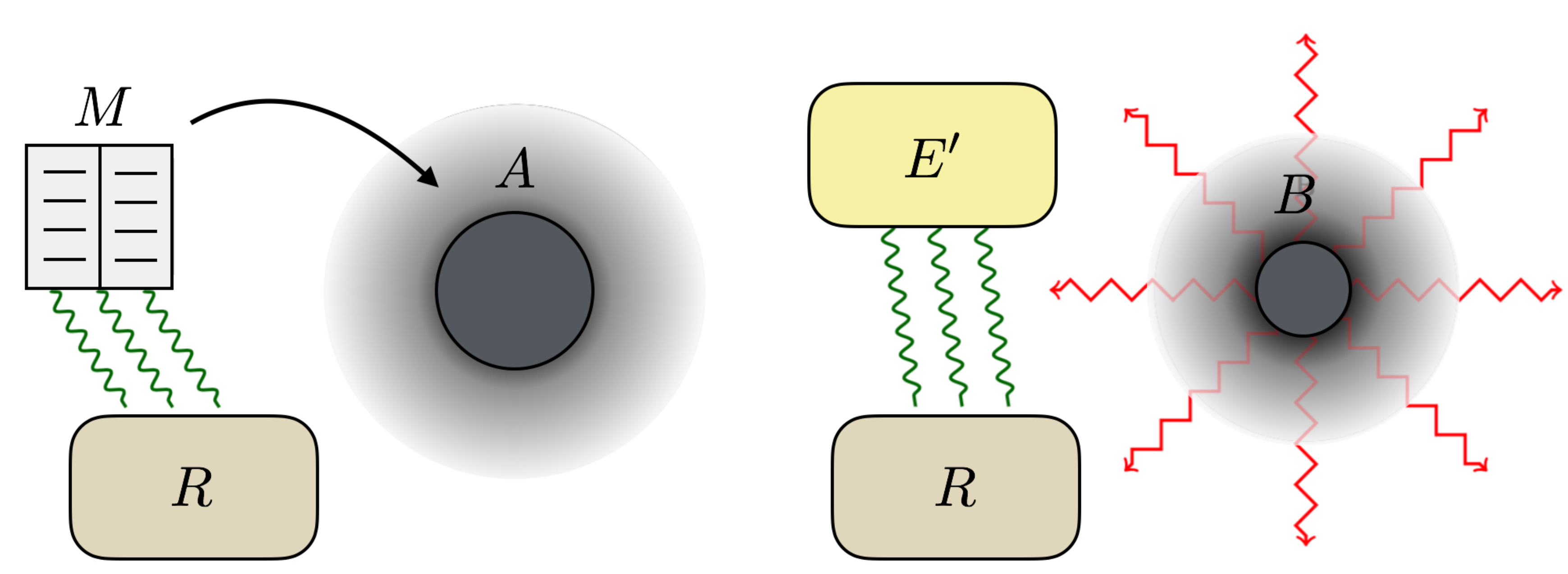}
\end{center}

	\caption{A quantum memory system $M$ is initially entangled with a reference system denoted by $R$ and is subsequently thrown into a black hole $A$ (left picture). As the black holes leaks out Hawking radiation, it shrinks into a smaller system $B$. When a controlled subystem of the irradiated environment $E'$, has become maximally entangled with the reference system $R$, the initial information $M$ has been mirrored (right picture).}
	\label{fig:black_hole}
\end{figure}

In the last decade, black holes have been considered from a quantum information perspective, providing toy models and a fresh
perspective to the field. In particular it has been conjectured that they are \emph{fast scramblers} \cite{BHMirrors,FastScramblingConjecture,LloydPres14}. A system is \emph{scrambled} when any previous perturbation has been thoroughly spread among the degrees of freedom so that to recover information contained in the perturbation one should access simultaneously a large fraction of the entire system. The minimum time for mixing information is then called \emph{scrambling time}. More specifically, in ref.\ \cite{FastScramblingConjecture} three hypotheses have been outlined: the most rapid scramblers take logarithmic time in the degrees of freedom, the bound is saturated for matrix quantum mechanics, i.e., systems whose degrees of freedom are $n \times n$ matrices, black holes are the fastest scramblers in nature. The authors of ref.\ \cite{FastScrambling} brought evidence about the conjectures regarding scrambling in logarithmic time by investigating Brownian quantum circuit and Ising model on sparse random graphs.
There are two related mixing conditions for unitary dynamics that satisfy the requirements for scrambling, as discussed in refs.\ \cite{BHMirrors, FastScramblingConjecture}, or ref.\ \cite{LloydPres14}, respectively. The relation between our results and both of these conditions will be discussed in the following.

In ref.~\cite{BHMirrors}, one considers the black hole's internal system $A$ and the radiated environment $E$. Furthermore, one defines an additional reference system $R$, initially maximally entangled with a quantum memory system $M$ that is subsequently 
thrown into the black hole. As the Hawking radiation leaks out, we would like $R$ to become maximally entangled with a subsystem of $E$ on which we can have control, hence having recovered the initial state of $M$, and so interpreting the black hole as a mirror (see Fig.\ \ref{fig:black_hole}). This may be translated 
into a scrambling condition through a decoupling theorem. As the black hole evaporates, $A$ shrinks into a smaller system $B$ which decouples from $R$. More
formally, this means that 
\begin{equation}
\E_\Haar \left\{ \left\| \Tr_{A \backslash B} \left(U_A\,\rho_{AR}U^{\dagger}_A\right) - \frac{\1_{B}}{|B|} \otimes \rho_R \right\|_1 \right\} 	
 \leq  
 2^{-\gamma},
\end{equation}	
where  $\rho_{AE} $ is a quantum state where subsystem $E$ shares $m$ Bell pair with $A$, and $A$ is otherwise mixed, and $\gamma$ is the difference between the number of qubits emitted as Hawking radiation and the number of qubits of system $M$. The approximate statement 
\begin{equation}
\E_\omega \left\{ \left\| \Tr_{A \backslash B} \left(U_A\,\rho_{AR}U^{\dagger}_A\right) - \frac{\1_{B}}{|B|} \otimes \rho_R \right\|_1 \right\} 	
\leq  
\sqrt{4^{-\gamma} + 4^{m}\varepsilon}
\end{equation}		
 is satisfied in expectation for an ensemble of unitary transformations $\omega$, which is an approximate 2-design in the sense that the Pauli coefficients are close to the uniform distribution, i.e.
\begin{equation} 
\sum_{\nu \neq 0} |q_\omega(\mu,\nu) - q_u(\mu, \nu)| \leq \varepsilon  \qquad \forall \mu \ ,
\end{equation}
where 
\begin{align}
q_u(\mu, \nu)= \frac{1}{4^n-1} \quad \forall \mu,\nu  
&& \text{and} && 
q_\omega(\mu,\nu) 
=
\frac{1}{4^{n}} \Tr\left[\sigma_{\nu} \otimes \sigma_{\nu}\,M_{\omega}^{k=2} (\sigma_\mu \otimes \sigma_\mu)\right]\,.
\end{align} 		
This condition was shown in ref.\ \cite{RandomCircuitsLow} to be satisfied by a random quantum circuit of size $O(n\log n)$ (when $\epsilon=1/\mathrm{poly}(n)$) and analogously by a stochastic local Hamiltonian, according to the analysis on random walk in Section \ref{sec:Decoupling_poof} and following the same reasoning as in ref.\ \cite{RandomCircuitsLow}, with a run time $T=O(n\log n)$.
However, in order to compare time scales with  ref.\ \cite{FastScrambling}, we take the same convention and divide the global scrambling time by the time to scramble a single subsystem ; in this case we obtain a scrambling time of $\tau_\ast = O(\log n).$ Hence, our work also provides an alternative proof for the scaling of the scrambling time in ref.\ \cite{FastScrambling}, although our argument does not involve any intermediate conjectures, such as the final statements of 
ref.\ \cite[Appendix B]{FastScrambling}.

In ref.~\cite{LloydPres14}, a slightly different scrambling condition is required for the unitarity of black hole evaporation to hold, given postselection on the final state at the singularity inside the black hole.  One considers the composite system $\mathcal{H}_M \otimes \mathcal{H}_{\mathrm {in}} \otimes \mathcal{H}_{\mathrm {out}}$ representing the infalling matter, the infalling negative energy Hawking radiation behind the event horizon and the outgoing positive energy Hawking radiation outside the horizon, respectively. Again, one defines a reference system $S$ which is maximally entangled with a subsystem $M_1 \subset M$. After the application of a random unitary transformation $U$ on $\mathcal{H}_M \otimes \mathcal{H}_{\mathrm {in}}$ and subsequently tracing out the complement subsystem of $S$, we have (cf. \cite[eq.(3)]{LloydPres14}):
\begin{equation}
\E_\Haar \left\{ \left\| \Tr_{\overline{S}} \left(U\, \rho \, U^\dagger \right) - \frac{\1_S}{|S|} \right\|_1 \right\}	
\leq
\sqrt{\frac{|\mathcal{H}_{M_1}|}{|\mathcal{H}_{in}|}}.
\end{equation}
A relaxed version of this bound, namely
\begin{equation}
\E_\omega \left\{ \left\| \Tr_{\overline{S}} \left(U\, \rho \, U^\dagger \right) - \frac{\1_S}{|S|} \right\|_1 \right\}	
\leq
\sqrt{5^{\delta n}\frac{|\mathcal{H}_{M_1}|}{|\mathcal{H}_{in}|} + \frac{1}{{\rm poly}(n)}} \ ,
\end{equation}
where $n = \log_2 \left( |\mathcal{H}_{in}| |\mathcal{H}_{M}| \right)$, follows from the condition
\begin{equation}
\sum_{\nu \neq 0} |q_\omega(\mu,\nu) - 4^{\delta n} q_u(\mu, \nu)| 
\leq 
\frac{1}{(3-\eta)^\ell\binom{n}{\ell}}\frac{1}{{\rm poly}(n)}  \ ,
\end{equation}
for every Pauli string $\sigma_\mu$ with support size $\ell$ .\\

The above condition was shown to hold in ref.\ \cite{Decoupling} for random quantum circuits of size $O(n \log^2 n).$   Applying the equivalence established Section\ \ref{sec:Decoupling_poof}, it follows from Theorem~\ref{thm_4.1} that this is fulfilled by a stochastic local Hamiltonians in time $\tau'_\ast= O(\log^2 n)$,  when again we take the convention of ref.\ \cite{FastScrambling} and divide global scrambling time by the time to scramble a single subsystem.  

Note finally that recently, an interesting connection between
 chaos -- as being captured by \emph{out-of-time-order correlation functions} -- and pseudorandomness -- as formalised in the 
 notion of a unitary design -- has been established \cite{Chaos}. Invoking those results, the findings presented here on
 generating approximate unitary designs by making use of time-fluctuating dynamics can be applied to assess quantum chaos in this sense.

\subsection{Applications in quantum information processing}

We finally mentioned a third, immediate, application, which seems yet particularly important when having potential technological
applications in quantum information processing
in mind. It should be clear that whenever the aim is to realize an approximate unitary design, the evolution under a 
fluctuating Hamiltonian already constitutes a valuable option. In many domains of quantum information, specifically in
notions of benchmarking, approximate unitary designs are important primitives
\cite{tomographydesigns,Efficient,Benchmarking}.
It is known that with a suitable random 
circuit one can generate an $\varepsilon$-approximate $k$-design
\cite{BrandaoHarrowHorodecki}. Such a prescription,
however, requires the precise implementation of a deep quantum circuit consisting of a large number of local quantum gates, namely  $O(n k^9 (n k  + \log 1/\varepsilon))$.
The above results have the interesting implication: instead of implementing a quantum circuit, a suitably
stochastic Hamiltonian evolution gives rise to exactly the same dynamics. In such an approach, the classically fluctuating
parameters would have to be stored. This insight might have important applications in quantum information processing.


\section{Stochastic time evolution generates tensor product expanders} \label{section:designs}
In this section, we prove Theorem~\ref{thm_TPE} bounding the time after which the stochastic time evolution becomes a tensor product expander.
As a crucial step we investigate the gap of the local generator induced by the Hamiltonian increments as given in eq.~\eqref{eq:stochastic_local_H}.

The proof will be structured as follows: we first derive in Lemma~\ref{lem:generator} the generator of the $k$-th moment operator and then describe how this allow us to express it as a tensor product expander using previous results on random quantum circuits.
In Subsection \ref{sec:LocGap} we provide the central mathematical results of this work, namely a diagonalisation of the local generator by relating it to the Casimir element in the enveloping algebra of $\su(d^2)$. Since only certain irreps are contained in the direct sum decomposition of the Casimir element, we will observe that no eigenvalue  can assume a value in the interval (0,1), giving rise to a local gap.

Much of the developed machinery will build upon the
representation theory of the special unitary group. 
It will also be helpful to use the identification of maps on matrices with matrices
(induced vectorisation of matrices) given by
$\vect(XYZ) = (X\otimes Z^T) \vect(Y)$
to express the $k$-th moment operator as
\begin{equation}
\label{Mrep}M_{\mu}^k = \EE_\mu [\pi_{k,k}(U)] \, ,
\end{equation}
where $\pi_{k,k}(U)$ is the $(k,k)$-mixed tensor representation of the group element $U \in \mathds{SU}(N)$ given by
\begin{equation} \label{eq:pi_kk}
\pi_{k,k}(U)\coloneqq  U^{\otimes k}\otimes\overline{U}^{\otimes k}.
\end{equation}
We also make use of the corresponding representation of the Lie algebra
$\su(N)$ which is also denoted by $\pi_{k,k}$ and satisfies the following for all $X \in \su(N)$
\begin{equation}
\pi_{k,k}(\exp(X)) = \exp(\pi_{k,k}(X)),
\end{equation}
with
\begin{equation}\label{eq:pi_kk_def}
\pi_{k,k}(X) = \sum_{i=1}^k X\otimes \1_{\overline{i}}  + \sum_{i=k+1}^{2k} \overline X \otimes \1_{\overline{ i}}
\end{equation}
and
\begin{equation}
X\otimes \1_{\overline{i}} \coloneqq \1_1\otimes \ldots \otimes \1_{i-1} \otimes  X \otimes \1_{i+1} \otimes \ldots \otimes \1_{2k}.
\end{equation}
This representation plays a central role in our analysis of the gap of the $k$-th moment operator of the stochastic time evolution.

The $k$-th moment operator $M^k_{\SLH(T)}$ has a generator that we explicitly calculate in the following. In fact, the lemma also holds for general Brownian motions on $\U(N)$, not only the locally generated ones considered in our theorems.

\begin{lemma}[The generator of the $k$-th moment operator]\label{lem:generator}
Let $M_T^k$ be the $k$-th moment operator of a unitary Brownian motion with increments $\Theta_{\Delta t}$ as in eq.~\eqref{eq:pseudo_derivative} at time $T$. Then
\begin{equation}
  M_T^k = \e^{G^k\, T}
\end{equation}
with
\begin{equation}\label{eq:loc_g}
  G^k = \lim_{ t \to 0} \left(\EE\left[\pi_{k,k}(\Theta_t)\right]
	    +\frac 1 2 \EE\left[
	    \pi_{k,k}(\Theta_t)^2\,  t
    \right] \right) .
\end{equation}
\end{lemma}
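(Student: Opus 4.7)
The plan is to exploit the semigroup structure that the Brownian motion $U_t$ inherits from having independent stationary increments and then to identify the generator by an It\^o-type expansion of the representation $\pi_{k,k}$ around the identity.

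First, I would establish that $t\mapsto M^k_t$ is a one-parameter semigroup. By Definition~\ref{def:UWP}, for any $s,t\geq 0$ the increment $U_{t+s}U_s^\dagger$ is independent of $U_s$ and equal in distribution to $U_tU_0^\dagger$. Writing $U_{t+s}=(U_{t+s}U_s^\dagger)U_s$ and using that $\pi_{k,k}$ is a group homomorphism,
\[
  \pi_{k,k}(U_{t+s}) = \pi_{k,k}(U_{t+s}U_s^\dagger)\,\pi_{k,k}(U_s).
\]
Taking the expectation and exploiting independence together with stationarity yields $M^k_{t+s}=M^k_t\,M^k_s$. Since $M^k_0=\id$, this forces $M^k_T=\exp(G^k T)$ for the generator $G^k\coloneqq \partial_t M^k_t\big|_{t=0^+}$, provided the derivative exists.

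Second, I would compute $G^k$ by Taylor-expanding the infinitesimal increment. From the product-integral construction~\eqref{eq:def:U_t}, the increment over $[0,\Delta t]$ equals $\exp(\Delta t\,\Theta_{\Delta t})$, where I write $\Theta_{\Delta t}$ for the pseudo-derivative~\eqref{eq:pseudo_derivative} on the first step. Because $\pi_{k,k}$ is a Lie-group homomorphism,
\[
  \pi_{k,k}(U_{\Delta t}) = \exp\!\bigl(\Delta t\,\pi_{k,k}(\Theta_{\Delta t})\bigr) = \id + \Delta t\,\pi_{k,k}(\Theta_{\Delta t}) + \tfrac{(\Delta t)^2}{2}\,\pi_{k,k}(\Theta_{\Delta t})^2 + R_{\Delta t},
\]
with remainder $R_{\Delta t}=\sum_{n\geq 3}(\Delta t\,\pi_{k,k}(\Theta_{\Delta t}))^n/n!$. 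The crucial It\^o-type input is the Brownian scaling of the noise, made explicit in the local case by eqs.~\eqref{eq:noise_mean_zero}--\eqref{eq:noise_cov}: $\EE[\Theta_{\Delta t}]$ remains bounded as $\Delta t\to 0$, while $\EE[\Theta_{\Delta t}^2]$ grows as $1/\Delta t$, so the quadratic contribution $\Delta t\,\EE[\Theta_{\Delta t}^2]$ stays of order one. Taking expectations of the expansion, subtracting $\id$ and dividing by $\Delta t$ therefore gives
\[
  \frac{M^k_{\Delta t}-\id}{\Delta t} = \EE\!\left[\pi_{k,k}(\Theta_{\Delta t})\right] + \tfrac{\Delta t}{2}\,\EE\!\left[\pi_{k,k}(\Theta_{\Delta t})^2\right] + \tfrac{1}{\Delta t}\,\EE[R_{\Delta t}],
\]
where higher-order moments of the noise are controlled so that $\EE[R_{\Delta t}]/\Delta t \to 0$. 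Passing to the limit $\Delta t\to 0$ yields the displayed expression for $G^k$, and differentiating the semigroup identity $M^k_{t+\Delta t}=M^k_{\Delta t}M^k_t$ at $\Delta t=0$ gives $\dot M^k_t = G^k M^k_t$, which integrates with $M^k_0=\id$ to $M^k_T=\exp(G^k T)$.

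The main obstacle I expect is the rigorous handling of this limit in the presence of the singular noise covariance: $\Theta_{\Delta t}$ itself has no pointwise limit, only its averaged quadratic combinations do, and one must control the convergence of the exponential series and the vanishing of $\EE[R_{\Delta t}]/\Delta t$ uniformly in $\Delta t$. This is precisely the content of the It\^o correction for stochastic flows on Lie groups, and I would follow the treatment of refs.~\cite{ItoBM,Unitary_BM,Lia04} cited in the paper, invoking sufficient integrability of the noise variables $\xi^{(e,\mu)}_{\ell,\Delta t}$ so that moments of order $n\geq 3$ contribute only $o(\Delta t)$.
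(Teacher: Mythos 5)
Your proof is correct and follows essentially the same route as the paper's: exploit the Markov/semigroup structure of $M^k_t$ (the paper phrases this as $M_T^k=(M_{\Delta t}^k)^{T/\Delta t}$, you derive it from independence and stationarity of increments plus the homomorphism property of $\pi_{k,k}$), then Taylor-expand $\pi_{k,k}(U_{\Delta t})=\exp(\Delta t\,\pi_{k,k}(\Theta_{\Delta t}))$, take expectations, and pass to the limit. You are slightly more explicit than the paper about establishing the semigroup identity and about the need to control the higher-order remainder $\EE[R_{\Delta t}]/\Delta t\to 0$ (which the paper silently absorbs into an $O(\Delta t^2)$ symbol), but the argument is the same.
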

Note that as $\Theta_t$ is anti-Hermitian, $G^k$ is negative semidefinite. If the Brownian motion is universal then the kernel of $G^k$ is the invariant subspace of $M^k$.

Most steps in the proof of this lemma will be also used again in the proof of Theorem~\ref{thm_TPE}.

\begin{proof}
	As $M_T^k$ is a Markov process,
	we have
	\begin{equation}\label{eq:composing_Dt}
	M_T^k = (M_{\Delta t}^k)^{T/\Delta t}  \, .
	\end{equation}
	With the mixed tensor representation \eqref{eq:pi_kk} and the definition of the $k$-th moment operator \eqref{eq:k-moment} we obtain for a single time step
	\begin{align}
	M_{\Delta t}^k
	&=
	\EE\left[ \pi_{k,k}\left( U_{\Delta t}\right) \right]
	\\
	&=
	\EE\left[ \pi_{k,k}\left(
	\exp\left\{
	\Theta_{\Delta t}
	\, \Delta t \right\} \right) \right]
	\end{align}
	with $U_{\Delta t}=\e^{\Theta_{\Delta t}\, \Delta t}$ and the increments $\Theta_{\Delta t}$ from eq.~\eqref{eq:pseudo_derivative}.
	Using a Taylor expansion yields
	\begin{align}
	M_{\Delta t}^k
	& =
	\EE\left[\e^{\pi_{k,k}(\Theta_{\Delta t}) \Delta t}\right] \\
	&=
	\sum_{p=0}^\infty\frac{(\Delta t)^p}{p!}\EE\left[\pi_{k,k}(\Theta_{\Delta t})^p \right] \\
	&=
	\1_{\dim(\pi_{k,k})}
	+ \EE\left[\pi_{k,k}(\Theta_{\Delta t})\right]\Delta t
	+\EE\left[\pi_{k,k}(\Theta_{\Delta t})^2\right]\frac{\Delta t^2}{2}+\O(\Delta t^2)
	\, . \label{eq:M_taylor}
	\end{align}
	Composing the time steps as in eq.~\eqref{eq:composing_Dt}, we obtain
	\begin{align}
	M_T^k
	=
	\lim_{\Delta t \to 0} \left( \1_{\dim(\pi_{k,k})}
	+ \left(\EE\left[\pi_{k,k}(\Theta_{\Delta t})\right]
	+ \EE\left[\kw 2 \pi_{k,k}(\Theta_{\Delta t})^2 \Delta t \right]\right) \Delta t \right)^{T/\Delta t}
	\end{align}
	and  finishes the proof.
	\proofend
\end{proof}

As a direct application of this lemma and as an exercise for the proof of Theorem~\ref{thm_TPE}, we prove Proposition \ref{prop:lindblad} in the following.

\begin{proof}[Proposition \ref{prop:lindblad}]
	According to Lemma~\ref{lem:generator} the evolution has a generator
	\begin{align}
	G^1&
	=
	\lim_{t\to 0} \EE\left[\Theta_t\otimes \1 + \1 \otimes \overline \Theta_t\right]
	+ \kw 2 \lim_{t\to 0}
	\EE[(\Theta_t\otimes \1 + \1 \otimes \overline \Theta_t)^2] \, t \, ,
	\end{align}
	where the mixed tensor representation $\pi_{1,1}$ from eq.~\eqref{eq:pi_kk_def} is used.
	It remains to show that the generator is of Lindblad form.
	The proposition's hypothesis yields
	\begin{align}
	G^1
	&=
	-i\, H_0\otimes \1 + \i\, \1 \otimes \overline H_0
	+ \kw 2 \lim_{t\to 0}
	\EE[(F_t^2\otimes \1 + \1 \otimes \overline F_t^2
	+2 F_t \otimes \overline F_t)]\, t
	\\
	&=
	-i\, H_0\otimes \1 + \i\, \1 \otimes H_0^T
	- \frac{a}{2} \sum_\mu
	(B^2_\mu\otimes \1 + \1 \otimes B^{2\, T}_\mu
	-2 B_\mu \otimes B^{T}_\mu) \, ,
	\end{align}
	where we have used that $\overline B_\mu = - B_\mu^T$.
	The identification $\vect(XYZ) = (X\otimes Z^T) \vect(Y)$ and
	$B_\mu^\dagger= -B_\mu$
	finish the proof.
	\proofend
\end{proof}

We note the following theorem on random quantum circuits generated by general local distributions, which is implicitly contained in ref.\ \cite[Corollary 7]{BrandaoHarrowHorodecki}.
\begin{lemma}[Relating global and local gaps]\label{lem:localgabvsglobalgap}
  Let $\mu_{loc}$ be a distribution on $\U(d^2)$ and $\mu$ be the distribution on $\U(d^n)$ that applies a unitary drawn according to $\mu_{loc}$ to a uniformly chosen edge of an interaction graph $i, i+1$. Then, its moment operator satisfies
  \begin{align}
    \norm{M_\mu^k - M_\Haar^k}_\infty \leq 1 - \left(1-\norm{m^k_{\mu_{loc}} - m^k_{\Haar_{loc}}}_\infty\right)\left(1-\norm{M^k_{\text{RQC},\Haar}-M_\Haar^k}_\infty\right)\, ,
  \end{align}
  where $\Haar_{loc}$ denotes the Haar measure on $\U(d^2)$ and $M^k_{\text{RQC},\Haar}$ the moment operator of the unitary circuit with distribution $\mu = \Haar_{loc}$.
\end{lemma}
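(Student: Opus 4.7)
The plan is to express each edge-local moment operator as a convex combination whose weights encode the local gap, then average over edges to express $M_\mu^k$ itself as a convex combination involving $M^k_{\text{RQC},\Haar}$, and finally apply the triangle inequality; the product structure on the right-hand side will emerge directly from the weights of this convex combination.

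First I would set $\lambda_{\text{loc}} := \norm{m^k_{\mu_{loc}} - m^k_{\Haar_{loc}}}_\infty$ and $P_e := (m^k_{\Haar_{loc}})^{(e)}$, which is an orthogonal projection. Because both moment operators fix the local Haar-invariant subspace, $m^k_{\Haar_{loc}}\, m^k_{\mu_{loc}} = m^k_{\Haar_{loc}} = m^k_{\mu_{loc}}\, m^k_{\Haar_{loc}}$, from which one obtains the decomposition
\begin{equation*}
(m^k_{\mu_{loc}})^{(e)} = P_e + Y_e, \qquad Y_e := (I-P_e)\,(m^k_{\mu_{loc}})^{(e)}\,(I-P_e),
\end{equation*}
with $\norm{Y_e}_\infty = \lambda_{\text{loc}}$ and $\operatorname{range}(Y_e)\perp\operatorname{range}(P_e)$.

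The key algebraic step is as follows. Assuming $\lambda_{\text{loc}}>0$ (the degenerate case being immediate), define $Z_e := P_e + \lambda_{\text{loc}}^{-1}\, Y_e$. A short computation exploiting the orthogonality of $\operatorname{range}(P_e)$ and $\operatorname{range}(Y_e)$ shows $\norm{Z_e}_\infty \leq 1$, so that
\begin{equation*}
(m^k_{\mu_{loc}})^{(e)} = (1-\lambda_{\text{loc}})\,P_e + \lambda_{\text{loc}}\,Z_e
\end{equation*}
is a convex combination of contractions. Averaging over the $|E|$ edges then gives
\begin{equation*}
M_\mu^k = (1-\lambda_{\text{loc}})\,M^k_{\text{RQC},\Haar} + \lambda_{\text{loc}}\,Z,
\end{equation*}
where $Z := |E|^{-1}\sum_{e\in E} Z_e$ satisfies $\norm{Z}_\infty\leq 1$ and, since $\operatorname{range}(M_\Haar^k)\subseteq\operatorname{range}(P_e)$ implies $Y_e\, M_\Haar^k = M_\Haar^k\, Y_e = 0$, also $Z\, M_\Haar^k = M_\Haar^k = M_\Haar^k\, Z$.

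Writing $M_\Haar^k = (1-\lambda_{\text{loc}})M_\Haar^k + \lambda_{\text{loc}}M_\Haar^k$, subtracting, and applying the triangle inequality together with $\norm{Z - M_\Haar^k}_\infty\leq 1$ (since $Z$ fixes $\operatorname{range}(M_\Haar^k)$ and is a contraction on its complement) and $\norm{M^k_{\text{RQC},\Haar}-M_\Haar^k}_\infty =: \lambda_{\text{RQC}}$ delivers the claim. The main obstacle in this plan is the contraction bound $\norm{Z_e}_\infty\leq 1$; this is exactly what upgrades the naive additive triangle bound $\lambda_{\text{loc}}+\lambda_{\text{RQC}}$ to the stated product bound, and it relies crucially on the orthogonality of the ranges of $P_e$ and $Y_e$ that comes from the projection identity $P_e^2 = P_e$.
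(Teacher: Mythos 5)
Your argument is correct, and it is close in spirit to the paper's proof but cleaner in one respect. The paper first compresses with the global Haar complement, writing $\norm{M_\mu^k - M_\Haar^k}_\infty = \norm{(M_\Haar^k)^\perp M_\mu^k (M_\Haar^k)^\perp}_\infty$, and then invokes the \emph{operator inequality} $m^{k,(e)}_{\mu_{loc}} \leq (1-\gamma)\,m^{k,(e)}_{\Haar_{loc}} + \gamma \1$ together with $A \leq \norm{A}_\infty\1$ on the positive part. You instead realise that inequality as an exact convex decomposition $m^{k,(e)}_{\mu_{loc}} = (1-\lambda_{\text{loc}})P_e + \lambda_{\text{loc}}Z_e$ with $\norm{Z_e}_\infty \leq 1$, average over edges, and finish with a triangle inequality. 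Both routes hinge on the same two intertwining facts, $m^{k}_{\mu_{loc}}\,m^k_{\Haar_{loc}} = m^k_{\Haar_{loc}} = m^k_{\Haar_{loc}}\,m^{k}_{\mu_{loc}}$ (giving the block-diagonal structure you use to kill the cross terms) and $\operatorname{ran}(M_\Haar^k) \subseteq \operatorname{ran}(P_e)$ (so that $Z$ fixes the global Haar range and $Z - M_\Haar^k$ is a compression of $Z$). What your version buys: the argument never invokes operator positivity of the moment superoperators, so it applies verbatim to non-self-adjoint local moment operators; the paper's use of Loewner inequalities implicitly presupposes Hermiticity, which happens to hold in this setting. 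One small point worth stating explicitly in a write-up is the verification that the cross blocks $P_e\,m^{k,(e)}_{\mu_{loc}}(I-P_e)$ and $(I-P_e)\,m^{k,(e)}_{\mu_{loc}}P_e$ actually vanish — this follows from the two intertwining identities above but is the crux of why $Y_e$ equals the full difference $m^{k,(e)}_{\mu_{loc}} - m^{k,(e)}_{\Haar_{loc}}$.
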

\begin{proof}
  Setting $\left(M_\Haar^k\right)^\perp  = \1-M_\Haar^k$ we have the relation
  \begin{align}\label{eq:genlocmeasure}
     \norm{M_\mu^k - M_\Haar^k}_\infty &=  \norm{\left(M_\Haar^k\right)^\perp M_\mu^k\left(M_\Haar^k\right)^\perp}_\infty
     \\&= \norm{\frac{1}{\abs{E}}\sum_{e\in E}^{n-1}  \left(M_\Haar^k\right)^\perp m_{\mu_{loc}}^{k,(e)}\left(M_\Haar^k\right)^\perp}_\infty\; .
  \end{align}
  By denoting $\gamma \coloneqq \norm{m_{\mu_{loc}}^k-m_{\Haar_{loc}}^k}_\infty$, we find $m_{\mu_{loc}}^{k,(e)} \leq (1-\gamma)\, m_{\Haar_{loc}}^{k,(e)} + \gamma \1$, which implies the  operator inequality
  \begin{align}
  \left(M_\Haar^k\right)^\perp\left(\sum_{e\in E} m_{\mu_{loc}}^{k,(e)}\right)\left(M_\Haar^k\right)^\perp\leq
   \;&\gamma \abs{E} \left(M_\Haar^k\right)^\perp\\  &+ (1-\gamma) \left(M_\Haar^k\right)^\perp\left(\sum_{e\in E}m_{\mu_{\Haar_{loc}}}^{k,(e)}\right)\left(M_\Haar^k\right)^\perp\; .
  \end{align}
  Since $(1-\gamma)$ is positive, we can use the bound $A\leq\norm{A}_\infty$ for the second summand on the right hand side which, together with 
  \begin{equation}
  \norm{M^k_{\text{RQC},\Haar}- M^k_\Haar}_\infty = \norm{\left(M_\Haar^k\right)^\perp 
  M^k_{\text{RQC},\Haar} \left(M_\Haar^k\right)^\perp}_\infty
  \end{equation} 
 finishes the proof. \proofend
\end{proof}

Now we present the main proof of Theorem~\ref{thm_TPE}.
Part of it will be completed with the lemmas stated and proved subsequently.
\begin{proof}[Proof of Theorem~\ref{thm_TPE}]
Thanks to Lemma~\ref{lemma:k-copy_gapped} it is enough to bound the gap of the $k$-th moment operator $M^k_{\SLH(T)}$.
According to Lemma~\ref{lem:driving}, the time constant part of the Hamiltonian does not affect the invariant subspace nor the gap of  $M^k_{\SLH(T)}$.
Hence, we can set without loss of generality $h^{(e)}_0=0 \, \forall e$.
Additionally, in Lemma~\ref{lem:CompleteGraph} we prove that the gap of an interaction graph being a complete graph is larger than the one of a 1D graph.
We hence consider only the latter case in the proof.

Using the approximation \eqref{eq:M_taylor} and expressing $\Theta$ in terms of the local terms $\theta^{(e)}$ (as in eq.~\eqref{eq:stochastic_local_H_details}) we obtain
\begin{align}\label{generator_global_moment}
	M^k_{\SLH (\Delta t)}
	&= \1_{\dim(\pi_{k,k})} +
		\sum_{e\in E}  \EE\left[\pi_{k,k}(\theta^{(e)}_{\Delta t})^2\right] \frac{\Delta t^2}{2} + O(\Delta t^2)
\end{align}
Using another Taylor approximation yields
\begin{align}
M^k_{\SLH (\Delta t)} 	
&=
	 \1_{\dim(\pi_{k,k})} +\frac{1}{n} \sum_{e\in E}  \EE\left[\pi_{k,k}\bigl(\sqrt{n}\,\theta^{(e)}_{\Delta t} \bigr)^2\right] \frac{\Delta t^2}{2} +O(\Delta t^2)
	 \label{eq:derive_g}
	 \\
	 & =
	 \frac{1}{n} \sum_{e\in E}
	 \E \left[ \left(\exp\{\sqrt{n}\, \theta^{(e)}_{\Delta t} \, \Delta t\} \right)^{\otimes k,k} \right]
	 +O(\Delta t^2)  \,  .
	 \label{eq:derive_g_2}
\end{align}

Next, we view
$G(\theta^{(e)}_{\Delta t})
\coloneqq
\exp\{\sqrt{n} \,\theta^{(e)}_{\Delta t}\, \Delta t\}$
as a random gate in a
\emph{$G$-random quantum circuit} considered in ref.~\cite{BrandaoHarrowHorodecki}.
The (system size independent) local $k$-th moment operator on edge $e\in E$ is
\begin{align}\label{eq:def_m}
m_{\Delta t}^{k,(e)}
&\coloneqq
  \E \left[ \left(\exp\{\sqrt{n}\,\theta^{(e)}_{\Delta t} \, \Delta t\} \right)^{\otimes k,k} \right]  .
\end{align}
Note that this $k$-th moment operator also corresponds to a Brownian motion but with a variance rescaled by a factor of $n$, cf. also the parameter $a$ in eq.~\eqref{eq:noise_cov}.
As its gap, i.e., the difference between the largest and second largest eigenvalue does not depend on $e$ we simply denote the gap of $m_{\Delta t}^{k,(e)}$ by $\Delta(m_{\Delta t}^{k})$.
Then the local gap lemma \ref{lem:localgabvsglobalgap} yields directly
\begin{align}\label{gap_relation}
	\left\| M^k_{\SLH (\Delta t)}-M^k_{\Haar} \right\|_\infty
	&\leq  	
	1-\Delta\left(m_{\Delta t}^{k}\right) \left(1-\norm{M^k_{\text{RQC},\Haar}-M_\Haar^k}_\infty\right) \, ,
\end{align}
	with $M^k_{\text{RQC},\Haar}$ being the $k$-th moment operator of single step of a local random quantum circuit whose gates are chosen from the Haar measure.
	The gap of $M^k_{\text{RQC},\Haar}$ can be lower bounded as \cite[eq.~(41)]{BrandaoHarrowHorodecki}
	\begin{align}
	\left(1-\norm{M^k_{\text{RQC},\Haar}-M_\Haar^k}_\infty\right)
	&\geq
	\frac{k^{-\frac{2.5}{\ln(d)}-2.5\frac{\ln(d^2+1)}{\ln(d)}}}{125\lceil\log_d(4k)\rceil^2\e (d^2+1)}
	\\
	&\geq
      \frac{1}{425\, n \lceil \log_d(4t)\rceil^2\, d^2\, k^5\, k^{3.1/\ln(d)}}
	\, ,
	\end{align}
	where we have also used the inequalities right after \cite[eq.~(41)]{BrandaoHarrowHorodecki}.
	Together with eq.~\eqref{gap_relation}, these results imply
	\begin{equation}\label{gap_relation_2}
	\norm{M^k_{\SLH (\Delta t)}-M^k_{\Haar}}_\infty
	\leq
	1- \Delta(m_{\Delta t}^{k})
	\, s/n
	\end{equation}
with
\begin{equation}
  s\coloneqq \left(425\lceil\log_d(4k)\rceil^2 d^2 k^5 k^{3.1/ \ln(d)} \right)^{-1}.
\end{equation}
In order to calculate $\Delta(m_{\Delta t}^{k})$ we use Lemma~\ref{lem:generator}, eq.~\eqref{eq:def_m} and 
\begin{equation}
\EE\left[\pi_{k,k}(\theta^{(e)}_t)\right]=0
\end{equation}
so that we can express $m_{\Delta t}^{k,(e)}$ as
\begin{align}
  m_{\Delta t}^{k,(e)}
  &=
  \exp\left(n\, g_k^{(e)}\, \Delta t  \right)
  \\
  &= \1 + n\, g_k^{(e)}\, \Delta t   + O(\Delta t^2)
\end{align}
with
\begin{equation}\label{eq:def_g_k}
\begin{aligned}
  n\, g_k^{(e)} &= \kw 2 \lim_{t \to 0}
	    \EE \left[\pi_{k,k}(\sqrt{n}\, \theta^{(e)}_t)^2\, t \right]
	\\
	&=\frac n 2 \lim_{t \to 0}
	    \EE \left[\pi_{k,k}( \theta^{(e)}_t)^2\, t \right] \ .
\end{aligned}
\end{equation} 
Hence,
\begin{equation}
  \Delta(m_{\Delta t}^{k})
  =
  n \, \Delta(g_k)\, \Delta t
  +O(\Delta t^2) \, ,
\end{equation}
where $\Delta(g_k)$ denotes again the spectral gap to the invariant subspace, i.e., minus the largest non-zero eigenvalue of $g_k^{(e)}$.

As both $k$-th moment operators have the same unit eigenvalue eingenspace according to Lemma~\ref{lemma:k-copy_gapped}, $\norm{M^k_{\SLH (\Delta t)}-M^k_{\Haar}}_\infty$ is the second largest eigenvalue of $M^k_{\SLH (\Delta t)}$. Hence,
\begin{align}
  \norm{M^k_{\SLH (T)}-M^k_{\Haar}}_\infty
  &=
  \lim_{\Delta t \to 0}
  \norm{\bigl(M^k_{\SLH (\Delta t)}\bigr)^{T/\Delta t}-M^k_{\Haar}}_\infty
  \\
    &=
  \lim_{\Delta t \to 0}
  \left(
  1-
    s\, \Delta(g_k)
  \right)^{T/\Delta t}
  \\
  &=
  \exp\left(-T\, s \, \Delta(g_k) \right) \, .
\end{align}
 Observation~\ref{lem:casimir} and Lemma~\ref{lem:local_gap} yields that the gap is the same as the variance \eqref{eq:noise_cov} of the noise, $\Delta(g_k)=a/2$, which completes the proof.
 \proofend
\end{proof}

\subsection{Local gap}\label{sec:LocGap}
In order to calculate the local gap $\Delta(g_k)$, the following representations for the algebra $\su(N)$ will be used.
\begin{alignat}{4}
\text{Trivial rep.}&\quad&& \pi_1 : \su(N) \to \mathfrak{gl}(1,\CC), &\quad& X \mapsto 0,
\\
\text{Fundamental rep.}&&& \pi_f : \su(N) \to \mathfrak{gl}(N,\CC) , && X \mapsto X,
\\
\text{Adjoint rep.}&&& \pi_{\mathrm{ad}} : \su(N) \to \mathfrak{gl}\left(\su(N)\right) , && X \mapsto \ad_X,
\end{alignat}
where  $\ad_X$ is defined by $\ad_X(Y)\coloneqq [X,Y]$.

\begin{observation}[Omitting the phase]
From the mixed-tensor representation we note that we can restrict the analysis on the $\su(N)$ algebra instead of $\u(N)$: the phase factor in the semi-direct product decomposition of any $U \in \U(N) \simeq \SU(N) \rtimes \U(1)$ is cancelled by its complex conjugate coming from $\overline{U}$. In this representation the two algebras are indistinguishable.

\end{observation}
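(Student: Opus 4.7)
The plan is to verify the observation at both the group level and the Lie algebra level, exploiting directly the structure of the mixed-tensor representation $\pi_{k,k}$ in eqs.~\eqref{eq:pi_kk} and \eqref{eq:pi_kk_def}. The content is essentially that the overall $\U(1)$ phase factor in $\U(N)\simeq (\SU(N)\times \U(1))/\mathds{Z}_N$ lies in the kernel of $\pi_{k,k}$, so nothing is lost by restricting the spectral analysis of the moment operator to the $\SU(N)$ part.

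At the group level, I would write any $U\in \U(N)$ as $U=\e^{\i\phi}\,V$ with $\phi\in\RR$ and $V\in\SU(N)$ (this is a non-unique but always possible decomposition, and the ambiguity by a central element of $\SU(N)$ is immaterial here). Then a direct computation gives
\begin{equation}
U^{\otimes k}\otimes \overline{U}^{\otimes k}
= \e^{\i k\phi}\, V^{\otimes k}\otimes \e^{-\i k\phi}\,\overline{V}^{\otimes k}
= V^{\otimes k}\otimes \overline{V}^{\otimes k},
\end{equation}
so that $\pi_{k,k}(U)=\pi_{k,k}(V)$. Consequently, the $k$-th moment operator $M_\mu^k = \EE_\mu[\pi_{k,k}(U)]$ depends only on the push-forward of $\mu$ to $\SU(N)$, and the phase degree of freedom is invisible. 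The only step requiring minor care is to check that the ambiguity by $\mathds{Z}_N$ in the decomposition $U=\e^{\i\phi}V$ is irrelevant, which is immediate because $\mathds{Z}_N \subset \SU(N)$ is the kernel of the adjoint action and is already captured inside $V$.

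At the Lie algebra level, I would decompose $\u(N) = \su(N)\oplus \i\RR\cdot\1$ and write any generator as $X = X_0 + \i\alpha\,\1$ with $X_0\in\su(N)$. Using $\overline{\i\1}=-\i\1$ in the explicit expression \eqref{eq:pi_kk_def} one finds
\begin{equation}
\pi_{k,k}(\i\alpha\,\1)
= \sum_{i=1}^{k}\i\alpha\,\1 + \sum_{i=k+1}^{2k}(-\i\alpha)\,\1
= 0,
\end{equation}
so the central direction is annihilated. Therefore the generator $G^k$ in Lemma~\ref{lem:generator} picks up only the traceless, $\su(N)$-valued component of the Hamiltonian increments $\Theta_t$, and all subsequent spectral analysis (in particular the identification with the Casimir element in the following subsection) is legitimately carried out inside $\su(N)$.

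There is no real obstacle here: the argument is essentially bookkeeping about how a global phase propagates through the mixed tensor representation. The only conceptual point worth flagging is that the identification $\U(N)\simeq\SU(N)\rtimes\U(1)$ in the statement is slightly informal, and one should either present it as the short exact sequence $1\to\SU(N)\to\U(N)\to\U(1)\to 1$ or as the quotient $(\SU(N)\times\U(1))/\mathds{Z}_N$; in either presentation the cancellation above goes through unchanged.
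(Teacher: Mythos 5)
Your argument is correct and is exactly the cancellation the paper alludes to in the observation itself: a global scalar phase $\e^{\i\phi}$ contributes $\e^{\i k\phi}$ from $U^{\otimes k}$ and $\e^{-\i k\phi}$ from $\overline{U}^{\otimes k}$, hence drops out of $\pi_{k,k}$, and correspondingly the central direction $\i\RR\,\1\subset\u(N)$ lies in the kernel of the infinitesimal representation \eqref{eq:pi_kk_def}. Your side remark that the notation $\SU(N)\rtimes\U(1)$ is slightly informal for the intended decomposition $U=\e^{\i\phi}V$ (which is really $(\SU(N)\times\U(1))/\mathds{Z}_N$ with $\U(1)$ acting as global scalars, rather than a genuine semidirect product with the diagonal $\U(1)$) is also well taken and does not affect the conclusion.
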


The  \emph{Killing form} $K$ in $\su(N)$ is the symmetric bilinear form defined by
\begin{align}
K(X,Y) &\coloneqq \Tr\left[\ad_X \ad_Y\right] \, .
\end{align}
Denoting the Hilbert-Schmidt inner product of $X$ and $Y$ (in the fundamental representation) by $\langle X,Y\rangle = \Tr(X^\dagger Y)$, the Killing form of $\su(N)$ can also be written as
	\begin{equation}
	    \label{eq:Killing_form_fundamental_rep}
	K(X,Y) = -2N\, \langle X,Y \rangle \, .
	\end{equation}
	
In terms of a basis $\{X_\mu\}_{\mu = 1}^{N^2-1}$ of $\su(N)$ the  \emph{Killing metric tensor} $\kappa$ is defined by
\begin{equation}\label{eq:def_kappa}
  \kappa_{\mu,\nu} \coloneqq K(X_\mu,X_\nu)\, ,
\end{equation}
as was already indicated in eq.~\eqref{eq:Killing_tensor_defining_rep}.
Then, the  \emph{Casimir element} in a matrix representation $\pi$ is
\begin{equation}\label{eq:Casimir}
C(\pi) \coloneqq \sum_{\mu, \nu} \kappa^{-1}_{\mu,\nu}\, \pi(X_\mu)\pi(X_\nu) \, .
\end{equation}

According to eqs.~\eqref{eq:def_g_k} and~\eqref{eq:stochastic_local_H_details}, the local generator $g_k^{(e)}$ of our unitary process with vanishing driving $h_0^{(e)}=0$ is given by
\begin{equation}
g_k^{(e)}
=
\kw 2 \lim_{\Delta t \to 0} \EE\left[\pi_{k,k}\left(\sum_{\mu=1}^{N^2-1} A_\mu^{(e)}\, \xi^{(e,\mu)}\right)^2 \right] \Delta t
=
-\frac{a}{2}\, \sum_{\mu,\nu=1}^{N^2-1} \kappa_{\mu,\nu}^{-1}\mkern2mu \pi_{k,k}\left(A_\mu^{(e)}\right)\pi_{k,k}\left(A_\nu^{(e)}\right) \label{eq:g(e)}
\end{equation}
(where $N\coloneqq d^2$).
The second equality follows from our central assumption~\eqref{eq:noise_cov}.
All $g_k^{(e)}$ are tensor copies of a local operator $g_k$. Therefore, we will suppress the subscripts $e$ in this section from now on.

\begin{observation}[Casimir element]
    \label{lem:casimir}
    Let $g_k$ be the generator of the local $k$-th moment operator in eq.~\eqref{eq:g(e)}. 
    Then
    \begin{equation}
	    \label{eq:local_moment_operator}
	    g_k
	    =-
	     \frac{a}{2}\,
	     C(\pi_{k,k}) \, .
	\end{equation}
\end{observation}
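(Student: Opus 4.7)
The statement is essentially a direct pattern-matching between the explicit formula for $g_k$ derived in eq.~\eqref{eq:g(e)} and the definition of the Casimir element in eq.~\eqref{eq:Casimir}; the plan is just to verify that the bases and normalisations line up.

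First I would justify restricting attention to $\su(N)$, with $N = d^2$. The basis $\{A_\mu\}$ was introduced as a basis of $\u(d^2) \cong \su(d^2) \oplus \u(1)$, but the $\u(1)$-summand, spanned by $\mathrm{i}\1$, acts trivially in the mixed-tensor representation: for $X = \mathrm{i}\alpha\1$, eq.~\eqref{eq:pi_kk_def} gives
\begin{equation}
\pi_{k,k}(\mathrm{i}\alpha\1) = \mathrm{i}\alpha\, k\,\1 + (-\mathrm{i}\alpha)\, k\,\1 = 0,
\end{equation}
since the fundamental and conjugate copies enter with opposite signs. Consequently $\pi_{k,k}$ factors through $\su(N)$ and the sum in eq.~\eqref{eq:g(e)} can be taken to run only over the $N^2-1$ traceless generators, matching the range of indices in the Casimir sum eq.~\eqref{eq:Casimir}. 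This is precisely the content of the ``omitting the phase'' observation made earlier.

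Next I would verify that the tensor $\kappa$ defined in eq.~\eqref{eq:Killing_tensor_defining_rep} by $\kappa_{\mu,\nu} = -2 d^2\, \Tr(A_\mu^\dagger A_\nu)$ coincides on $\su(N)$ with the Killing metric tensor of eq.~\eqref{eq:def_kappa}. Invoking the identity eq.~\eqref{eq:Killing_form_fundamental_rep}, namely $K(X,Y) = -2N \langle X, Y\rangle = -2N \Tr(X^\dagger Y)$, with $N = d^2$, gives $K(A_\mu,A_\nu) = \kappa_{\mu,\nu}$, so both definitions and their inverses agree. A termwise comparison of eq.~\eqref{eq:g(e)} with eq.~\eqref{eq:Casimir} then yields
\begin{equation}
g_k = -\frac{a}{2}\sum_{\mu,\nu=1}^{N^2-1} \kappa^{-1}_{\mu,\nu}\, \pi_{k,k}(A_\mu)\, \pi_{k,k}(A_\nu) = -\frac{a}{2}\, C(\pi_{k,k}),
\end{equation}
which is the claimed identity.

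There is essentially no technical obstacle here: the statement amounts to recognising that the explicit form of $g_k$, already derived in eq.~\eqref{eq:g(e)} by averaging the squared Hamiltonian increments against the prescribed noise covariance~\eqref{eq:noise_cov}, is literally the Casimir operator in the representation $\pi_{k,k}$ up to the overall scalar $-a/2$. The genuine work is deferred to Lemma~\ref{lem:local_gap}, where the spectrum of $C(\pi_{k,k})$ must actually be controlled in order to extract the $k$-independent local gap $\Delta(g_k) = a/2$ used in the proof of Theorem~\ref{thm_TPE}.
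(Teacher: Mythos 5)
Your proposal is correct and takes essentially the same approach as the paper. The paper states this as an unproved ``Observation'' because it follows directly from comparing eq.~\eqref{eq:g(e)} with the Casimir definition eq.~\eqref{eq:Casimir} once one notes (i) the earlier ``Omitting the phase'' observation that $\pi_{k,k}$ kills the $\u(1)$ direction, and (ii) the identity eq.~\eqref{eq:Killing_form_fundamental_rep} showing that the $\kappa$ of eq.~\eqref{eq:Killing_tensor_defining_rep} is the Killing metric — precisely the two points you spell out.
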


More generally, an overcomplete set $\set{A_\mu}$ can also be admitted, as already mentioned in Remark~\ref{rmk:basis_of_algebra}. The final result about the convergence rate -- up to a constant $O(1)$ -- is still valid as long as the generator and the Casimir element are related by an equation of the form
	\begin{equation}\label{eq:var_generator}
	g_k =-\frac{a}{2}'  C(\pi_{k,k}) + g' \ ,
	\end{equation}
where $a'>0$ and $g'$ is negative semidefinite so that it can only increase the gap.

In the following, we prove that the eigenvalues of the Casimir do not assume a value within the interval $\left(0,1\right)$, for all $k$.

\begin{lemma}[Casimir gap]
    \label{lem:local_gap}
    Let $\mathcal{I}_k$ be the set of  \emph{irreducible} representations occurring in $\pi_{k,k}$ and let $m_k(\pi)\in\mathbb{N}$ denote the multiplicity of each such representation~$\pi$.
    Then
	\begin{equation}\label{eq:Casimir_decomposition}
	C(\pi_{k,k})\simeq \bigoplus_{\pi \in \mathcal{I}_k} c_2(\pi)\mkern2mu \1_{\dim(\pi_{k,k})}\otimes \1_{m_k(\pi)},
	\end{equation}
	where
	\begin{equation}
	c_2(\pi)
	\begin{cases}
	=0 & \text{if $\pi\simeq\pi_1$,} \\
	=1 & \text{if $\pi\simeq\pi_\mathrm{ad}$,} \\
	>1 & \text{otherwise.}
	\end{cases}
	\end{equation}
	In particular, the spectral gap of $C(\pi_{k,k})$ is independent of~$k$.
\end{lemma}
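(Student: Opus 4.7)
The plan is to combine Schur's lemma with Freudenthal's formula for the quadratic Casimir and a short case analysis on Young diagrams. Since $C(\pi_{k,k})$ is the image of a central element of the universal enveloping algebra of $\su(N)$, it commutes with $\pi_{k,k}(U)$ for every $U$, so by Schur's lemma it acts as a scalar on each irreducible summand of $\pi_{k,k}$. Writing $\pi_{k,k} \simeq \bigoplus_{\pi \in \mathcal{I}_k} \pi^{\oplus m_k(\pi)}$ then yields immediately the block-diagonal form \eqref{eq:Casimir_decomposition}, with the scalar on each block equal to $c_2(\pi)$ by definition.

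Next, I would evaluate $c_2(\pi)$ via Freudenthal's formula. If $\lambda$ is the highest weight of $\pi$, $\rho$ the half-sum of positive roots and $\theta$ the highest root (the highest weight of the adjoint), then the Killing-form normalisation fixed in \eqref{eq:def_kappa} yields $c_2(\lambda) = \langle \lambda, \lambda + 2\rho\rangle/\langle \theta, \theta + 2\rho\rangle$, which equals $0$ for the trivial representation ($\lambda = 0$) and $1$ for the adjoint ($\lambda = \theta$). Specialising to $\su(N)$ this gives the explicit Young-diagram formula
\begin{equation}
c_2(\lambda) = \frac{1}{2N}\left(\sum_{i}\lambda_i(\lambda_i + N + 1 - 2i) - \frac{|\lambda|^2}{N}\right)
\end{equation}
for $\lambda = (\lambda_1 \ge \dots \ge \lambda_{N-1} \ge 0)$, which is the main computational tool. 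To cut down the set $\mathcal{I}_k$, I would observe that the centre of $\mathrm{SU}(N)$ acts on the defining representation $V$ by a character and on $V^*$ by its inverse, so $\pi_{k,k}(U) = U^{\otimes k}\otimes \overline{U}^{\otimes k}$ is trivial on the centre and factors through $\mathrm{PSU}(N)$. Consequently every $\pi\in\mathcal{I}_k$ carries trivial centre charge, or equivalently, its Young diagram satisfies $N \mid |\lambda|$.

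The main and hardest step is then to verify $c_2(\lambda) > 1$ for every nonzero $\lambda \neq (2, 1^{N-2})$ with $N \mid |\lambda|$ and at most $N-1$ rows; the gap claim and its $k$-independence follow immediately afterwards, because the eigenvalues depend only on the shape $\lambda$, not on the multiplicity $m_k(\pi)$. Since the centre-charge constraint forces $|\lambda| \geq N$, the critical case is $|\lambda| = N$: a direct minimisation of the explicit formula over partitions of $N$ into at most $N-1$ parts pins the minimum at $\lambda = (2, 1^{N-2})$ with normalised value $c_2 = 1$, while every competing partition of $N$ yields a strictly larger value (for instance the symmetric row $(N)$ gives $c_2 = N-1$ and the hook $(3, 1^{N-3})$ gives $c_2 = 2$). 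For $|\lambda| \geq 2N$ one obtains a still stronger lower bound from the same formula, exploiting that both the quadratic piece $\sum_i \lambda_i^2$ and the shift $\sum_i \lambda_i(N+1-2i)$ grow with $|\lambda|$ once the row-ordering and row-count constraints are enforced. The main obstacle is precisely this shape-by-shape comparison, but it is contained because the centre-charge constraint already rules out the Young diagrams with $|\lambda| < N$ that would otherwise compete with the adjoint.
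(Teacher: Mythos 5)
Your skeleton matches the paper's: Schur's lemma gives the block decomposition with scalars $c_2(\pi)$, the constraint on the admissible shapes comes from the observation that the Young diagram of any irrep occurring in $\pi_{k,k}$ must have box count divisible by $N$, and then a case analysis on shapes closes the argument. There is one genuine improvement and one genuine gap.

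The improvement: your centre-charge argument (the action of $Z(\mathrm{SU}(N))$ on $V^{\otimes k}\otimes(V^*)^{\otimes k}$ is trivial, so $\pi_{k,k}$ factors through $\mathrm{PSU}(N)$ and every constituent has zero $N$-ality) is cleaner and more conceptual than the paper's argument, which establishes $N\mid\lvert\lambda\rvert$ by an induction on tensoring with $\pi_1\oplus\pi_{\mathrm{ad}}$ and invoking rules of Young calculus. Either gets the same constraint.

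The gap: the step where you need $c_2(\lambda)>1$ for every admissible $\lambda\neq 0,(2,1^{N-2})$ is asserted rather than proved. At $\lvert\lambda\rvert=N$ you say a "direct minimisation over partitions of $N$ into at most $N-1$ parts" pins the minimum at the adjoint, but no argument is given beyond two example shapes; and for $\lvert\lambda\rvert\geq 2N$ your claim that "both the quadratic piece and the shift grow with $\lvert\lambda\rvert$" is false as stated — the shift $\sum_i\lambda_i(N+1-2i)$ is negative for boxes placed in rows $i>(N+1)/2$ and can decrease as boxes are added, so one needs a genuine balancing argument against the quadratic piece, not mere monotonicity. The paper closes exactly this step with Lemma~\ref{lem:Young}: in Dynkin-label coordinates, appending a single column of any height to a diagram increases $c_2$ by at least $\Delta_N=(N^2-1)/(2N^2)$, and since $2\Delta_N<1<3\Delta_N$ the entire problem collapses to a finite check of the two-column shapes $(1,1,0,\dots)$ and $(2,0,\dots)$ up to permutation, which can be done by hand against the inverse Cartan matrix. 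You should either reproduce such a monotonicity-per-column bound in your row-length parametrisation (the variation of your explicit formula under $\lambda\to\lambda+e_i$ in Dynkin labels does work out to be bounded below by $\Delta_N$), or organise the minimisation by the move "transfer a box from row $i$ to a lower row $j>i$", which strictly decreases $c_2$ and terminates at the most spread-out partition $(2,1^{N-2})$ when $\lvert\lambda\rvert=N$. As written, the critical inequality is not established.
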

\begin{proof}
	Since the Casimir element is an element of the center of the universal enveloping algebra, from Schur's Lemma follows that it acts as a multiple of the identity in each irreducible representation (see ref.~\cite[Chapter 12]{SattWea}), so
	that~\eqref{eq:Casimir_decomposition} is immediate.
	Now, since the tensor product between the fundamental representation and its conjugate are isomorphic to the direct sum of the trivial and the adjoint ones, this means that the representation~$\pi_{k,k}$ is isomorphic to $(\pi_1\oplus\pi_\mathrm{ad})^{\otimes k}$.
	
	The trivial representation is guaranteed to occur in the decomposition of~$\pi_{k,k}$ into irreducible representations
	(for example, via $\pi_1^{\otimes k}$) and leads to the eigenvalue $c_2(\pi_1)=0$.
	The adjoint representation always occurs -- for example, via $\pi_\mathrm{ad}\otimes\pi_1^{\otimes(k-1)}$ and permutations thereof -- too, and leads to the eigenvalue $c_2(\pi_\mathrm{ad})=1$.
	If we can show that no other irreducible representation~$\pi$ with $c_2(\pi)\leq 1$ occurs, the proof is complete.
	
	One might think that this requires rather detailed knowledge about how tensor product representations of the form $\pi_\mathrm{ad}^{\otimes l}$ decompose into irreducible representations. To follow the next argument, some basic knowledge regarding Young diagrams is necessary; please refer to Appendix \ref{app:Young_diagrams}. 
	It is in fact sufficient to exploit a remarkably basic property which is shared by all the irreducible $\su(d^2)$ representations occurring in~$\pi_{k,k}$: their Young diagrams must have a number of boxes which is divisible by~$d^2$.
	This can be seen for instance by induction: $(\pi_1 \ \text{and} \ \pi_\mathrm{ad})$ are two representations made of $0$ and $d^2$ boxes respectively. Now consider a representation $\pi$ whose number of boxes is divisible by $d^2$;
	$\pi \otimes (\pi_1\oplus\pi_\mathrm{ad})$ is again a direct sum of representation divisible by $d^2$, since tensoring with the trivial one does nothing and tensoring with the adjoint adds $d^2$ boxes to the Young diagram of $\pi$. According to Young calculus only $d^2$ boxes can be cancelled at once. Hence, if the statement is true for $(\pi_1\oplus\pi_\mathrm{ad})^{\otimes k-1}$, then it holds for $(\pi_1\oplus\pi_\mathrm{ad})^{\otimes k}$.
	\\
	Indeed, all such representations~$\pi$ other than the trivial and the adjoint one satisfy $c_2(\pi)>1$ as we will show in Lemma~\ref{lem:Young} below.
	\proofend
\end{proof}

Let $\lambda\coloneqq(\lambda_1,\dots,\lambda_{N-1})$ with $\lambda_i\in\mathbb{N}_0$ denote the Dynkin label of an irreducible representation~$\pi$ of~$\su(N)$.
The eigenvalue of the Casimir element in the irreducible representation~$\pi$ is
	\begin{equation} \label{eq:Casimir_Dynkin}
	c_2(\pi) =\frac{1}{2N} \sum_{i,j=1}^{N-1}(\lambda_i+2) (A^{-1})_{i, j} \lambda_j \ ,
	\end{equation}
where $A$ is the {Cartan} matrix of~$\su(N)$ \cite[\S 21.3]{FultonHarris}.
The  \emph{inverse} {Cartan} matrix is directly given by
	\begin{equation}
	(A^{-1})_{i,j} =\frac{1}{N}
	\begin{cases}
	i\mkern2mu (N-j) & \text{if $i\leq j$} \\
	j\mkern2mu (N-i) & \text{if $i>j$}
	\end{cases},
	\end{equation}
and is symmetric.
We now show the following lemma.

\begin{lemma}[Young diagrams]
    \label{lem:Young}
    Let $N>2$ and $\pi$ be an irreducible representation of~$\su(N)$ such that the number of boxes in its Young diagram is divisible by~$N$.
    If $\pi$ is  \emph{not} isomorphic to the trivial or adjoint representation, then $c_2(\pi)>1$.
\end{lemma}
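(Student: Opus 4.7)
The plan is to express $c_2$ in terms of the row lengths $\ell_1\geq\cdots\geq\ell_{N-1}\geq 0$ of the Young diagram of $\pi$: substituting $\lambda_i = \ell_i - \ell_{i+1}$ in~\eqref{eq:Casimir_Dynkin} yields
\begin{equation*}
c_2(\pi) \;=\; \frac{1}{2N}\left[\sum_{i=1}^{N-1} \ell_i(\ell_i + N+1-2i) - \frac{B^2}{N}\right],
\end{equation*}
with $B = \sum_i \ell_i$. I would then run a discrete gradient descent on $c_2$: a \emph{down-move} takes one box out of row $j$ and inserts it in row $i$ with $j<i\leq N-1$, whenever the result remains a valid Young diagram. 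A direct calculation gives $\Delta(2N c_2) = 2\bigl[(\ell_i-\ell_j) - (i-j) + 1\bigr]$, and the validity conditions ($\ell_j-\ell_{j+1}\geq 2$ if $i = j+1$; otherwise $\ell_j > \ell_{j+1}$ and $\ell_i < \ell_{i-1}$) force this change to be at most $-2$, so each down-move strictly decreases $c_2$.

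Next, I would classify the \emph{terminal} diagrams (those admitting no down-move). A short case check on the two validity conditions shows these are exactly the diagrams whose Dynkin labels satisfy $\lambda_i \in \{0,1\}$ for $i \leq N-2$, with at most one such $\lambda_i$ equal to $1$, and with $\lambda_{N-1}$ an arbitrary non-negative integer. Imposing $B = mN$ for $m \geq 1$ splits the terminals into Type~A (only $\lambda_{N-1} = \ell N$ nonzero, forcing $m = \ell(N-1)$) and Type~B (one $\lambda_j = 1$ for a single $j \in \{1,\ldots,N-2\}$ and $\lambda_{N-1} = k$ with $j + k(N-1) = mN$). Substituting each family back into the closed form, I would obtain $c_2 = m(\ell+1)/2$ on Type~A and $c_2 - m = (m-j)(m+j-N+1)/[2(N-1)]$ on Type~B, both of which are $\geq 0$ on the admissible ranges. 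Equality holds only at $\ell = 1$ in Type~A and $j = m$ in Type~B; the common minimum value is $c_2 = m$, and this attains $1$ uniquely at the Type~B diagonal point $m = 1$, $(j,k) = (1,1)$, namely the adjoint.

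The argument then finishes by monotonicity: since the set of Young diagrams with fixed $B$ is finite and every non-terminal admits a down-move that strictly decreases $c_2$, any irrep reduces in finitely many such moves to a terminal, with $c_2$ strictly decreasing along the way. Combined with the terminal computation, this yields $c_2(\pi) \geq 1$ for every irreducible $\pi$ of $\su(N)$ with $B\equiv 0\pmod N$ and $B\geq 1$, with equality only at the adjoint, which is exactly the lemma. The main technical hurdle is the terminal enumeration and the verification that the row-length identity above holds for all $(j,k)$ realizing valid Type~B terminals; the monotonicity and the closed-form expressions on each family are direct consequences of the row-length formula.
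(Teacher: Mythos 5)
Your proposal is correct, and it takes a genuinely different route from the paper's. The paper's proof works directly with the quadratic form in Dynkin labels and uses a \emph{column--adding} monotonicity: it shows $c_2(\lambda+e_i)-c_2(\lambda)\geq\Delta_N=(N^2-1)/(2N^2)$ from the column sums $a_j=j(N-j)/2$ of $A^{-1}$, deduces $c_2(\lambda)\geq\Delta_N\lVert\lambda\rVert_1$, observes $2\Delta_N<1<3\Delta_N$ to dispose of all diagrams with three or more columns, rules out single-column diagrams by the box-count hypothesis, and then checks the remaining two-column cases $(1,\ldots,1,\ldots)$ and $(\ldots,2,\ldots)$ by hand. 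Your proof, by contrast, reformulates $c_2$ in terms of row lengths and runs a \emph{box-moving} monotonicity that keeps the box count $B$ fixed: the increment formula $\Delta(2Nc_2)=2[(\ell_i-\ell_j)-(i-j)+1]$ under the validity constraints is indeed at most $-2$ (I checked both the $i=j+1$ and $i>j+1$ cases), the terminal classification ($\lambda_1,\ldots,\lambda_{N-2}\in\{0,1\}$ with at most one equal to $1$, $\lambda_{N-1}$ free) is the correct characterization of diagrams admitting no down-move, and your Type A/B closed forms check out (e.g.\ Type B with $(j,t)=(1,1)$ gives $c_2=N+1$, Type A with $\ell=1$ gives $c_2=N-1$, and both families give $c_2-m\geq 0$ with the stated equality conditions). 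Since down-moves preserve $B$, the divisibility hypothesis is preserved all the way to the terminal, which closes the argument.

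The trade-off: the paper's argument is shorter and needs only the column-sum identity plus a finite case check, but it does not keep any quantity invariant, so it only yields the qualitative gap $c_2>1$. Your argument requires more setup — the row-length identity, the terminal classification, and the two closed-form families all need to be verified — but because down-moves conserve $B$ you actually prove the stronger monotone bound $c_2(\pi)\geq m=B/N$ for every $\pi$ with $B$ divisible by $N$, with equality characterized. One small expository point worth tightening: for Type B, $(m-j)(m+j-N+1)=0$ can also occur when $m+j=N-1$ (only for $N$ odd with $j=(N-1)/2$), but this is still the $t=0$ subfamily $m=j$, so your statement ``equality holds at $j=m$'' is complete; it is just worth saying explicitly that the second factor vanishing forces $t=0$ as well.
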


\begin{proof}
	First observe that we can immediately rule out all irreducible representations whose Young
	diagrams consist of a single column because the maximal column height for~$\su(N)$ is $N-1$
	(i.e. Dynkin labels having a single entry $1$ and 0 everywhere else).
	In the following we will analyse the growth behaviour of the quadratic form~\eqref{eq:Casimir_Dynkin} as
	 we move from one irreducible representation (i.e. Dynkin label) to the next one.
	
	It will turn out very helpful to know the column sums of the inverse Cartan matrix~$A^{-1}$.
	Clearly, the sum of the first (or equally the last) column is $(N-1)/2$.
	The sum of any other column is strictly greater than this value.
	Indeed, pick a column~$j$ and denote its sum by~$a_j$.
	One can easily convince oneself that $a_j=j\mkern2mu(N-j)/2$	.
	
	Now we compare the quadratic Casimir eigenvalues of different irreducible representations, i.e. Dynkin labels~$\lambda$.	
	As it turns out, adding~$1$ to any component of any Dynkin label~$\lambda$ always increases this eigenvalue at least by almost~$1/2$,
	\begin{equation}
	    c_2(\lambda+
	        e_i)-
	    c_2(\lambda)
	    \geq
	     \frac{N^2-
	           1}
	          {2
	           N^2}
	    \eqqcolon
	     \Delta_N.
	\end{equation}
	Here $e_i$ is the $i$-th canonical basis vector of~$\mathbb{R}^{N-1}$.
	So, starting from the trivial representation with $c_2(0,0,\dots,0)=0$ we immediately obtain the crude lower bound
	\begin{equation}
	    c_2(\lambda_1,
	        \dots,
	        \lambda_{N-1})
	    \geq
	     \Delta_N
	     \sum_{i
	           =1}^{N-
	                1}
	     \lambda_i
	    =\Delta_N
	     \norm{\lambda}_1.
	\end{equation}
	Observe that $2\Delta_N<1<3\Delta_N$.
	Thus we are guaranteed to obtain a quadratic Casimir eigenvalue strictly greater than~$1$ whenever we add at least three arbitrary columns to the (empty!) Young diagram of the trivial representation.

	This leaves us with those irreducible representations whose Young diagrams have exactly two columns, i.e. with the Dynkin labels $(1,1,0,\dots,0)$, $(2,0,0,\dots,0)$ and all permutations thereof.
	As is well known (and can be checked easily with the explicit formula below) the quadratic Casimir eigenvalue of the adjoint representation $(1,0,\dots,0,1)$ is exactly~$1$.
	We would like to show that any other placement of the two ones yields a strictly greater eigenvalue.
	Suppose these occur in positions $1\leq\alpha<\beta<N$.
	Then,
	\begin{equation}
	    \begin{split}
	        c_2(\lambda)
	        & =\frac{1}
	                {2
	                 N}
	           \bigl((A^{-1})_{\alpha,
	                      \alpha}+
	                 2
	                 (A^{-1})_{\alpha,
	                           \beta}+
	                 (A^{-1})_{\beta,
	                           \beta}+
	                 2
	                 a_\alpha+
	                 2
	                 a_\beta\bigr) \\
	        & \geq
	           \frac{1}
	                {2
	                 N}
	           \bigl((A^{-1})_{1,
	                           1}+
	                 2
	                 (A^{-1})_{1,
	                           N-
	                           1}+
	                 (A^{-1})_{N-
	                           1,
	                           N-
	                           1}+
	                 2
	                 a_1+
	                 2
	                 a_{N-
	                    1}\bigr) \\
	        & =c_2(1,
	               0,
	               \dots,
	               0,
	               1) \\
	        & =1.
	    \end{split}
	\end{equation}
	It is easy to see that this inequality turns into a strict one if either of the two ones is  \emph{not} at the first or last position.
	Finally consider a Dynkin label~$\lambda$ with a single non-vanishing component~$\lambda_\alpha=2$ at position~$\alpha$ (i.e., a Young diagram with exactly two columns of height~$\alpha$),
	\begin{equation}
	    c_2(\lambda)
	    =\frac{2}
	          {N}
	     \bigl((A^{-1})_{\alpha,
	                     \alpha}+
	           a_\alpha\bigr)
	    =\frac{N+
	           2}
	          {N^2}\mkern2mu
	     \alpha\mkern2mu
	     (N-
	      \alpha).
	\end{equation}
	From the global minimum of the quadratic function $\alpha\mkern2mu(N-\alpha)$ we easily obtain the lower bound
	\begin{equation}
	    c_2(\lambda)
	    \geq
	     1+
	     \frac{N-
	           2}
	          {N^2}
	\end{equation}
	and thus $c_2(\lambda)>1$ for all $N>2$ as claimed.
	\proofend\end{proof}

\subsection{Hamiltonian driving}
\label{sec:driving}
We now show that a time constant part in a stochastic Hamiltonian cannot affect the gap of the $k$-th moment operator.

\begin{lemma}[Hamiltonian driving]
	\label{lem:driving}
Let $M_T^k$ be the $k$-th moment operator \eqref{eq:k-moment} of a universal Brownian motion with increments $\Theta_{\Delta t}$ as in eq.~\eqref{eq:pseudo_derivative}.
Write $\Theta_{\Delta t}$ as
\begin{equation}\label{eq:H0Fdr}
  \Theta_{\Delta t}= -\i \, H_0 + F_{\Delta t}\, ,
\end{equation}
where $-\i\, H_0$ and $F_{\Delta t}$ are its anti-Hermitian time constant and fluctating parts, respectively, with
\begin{align*}
F_{\Delta t} = \sum_\mu B_\mu \, \xi^\mu_{\Delta t},
\qquad B_\mu^\dagger = -B_\mu \, ,
\qquad
\EE[\xi^\mu_{\Delta t}] = 0\, ,
\qquad \text{and}\qquad
\EE[\xi^\mu_{\Delta t} \, \xi^\nu_{\Delta t}] = - \frac{a}{{\Delta t}} \, \delta_{\mu,\nu} \, .
\end{align*}
Let $\tilde M_T^k$ be defined similarly but without driving, i.e., with $H_0 = 0$.
Then $\tilde M_T^k$ and $M_T^k$ have the same gap, i.e.,
\begin{equation}
\norm{\tilde M_T^k-M^k_{\Haar}}_\infty
=
\norm{M_T^k-M^k_{\Haar}}_\infty .
\end{equation}
\end{lemma}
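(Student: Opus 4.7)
The strategy is to identify the generator of $M_T^k$, peel off the driving contribution as a commuting \emph{unitary} factor, and then use Lemma~\ref{lemma:k-copy_gapped} to show that this unitary factor is invisible in the operator norm on the orthogonal complement of the invariant subspace.

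First I would apply Lemma~\ref{lem:generator} to $\Theta_{\Delta t} = -\i H_0 + F_{\Delta t}$. By linearity of $\pi_{k,k}$, the mean-zero assumption on $F_{\Delta t}$, and the fact that the deterministic cross-term $\pi_{k,k}(-\i H_0)^2\cdot t$ vanishes as $t\to 0$, only two contributions survive,
\[
G^k \,=\, \pi_{k,k}(-\i H_0) \,+\, \tilde G^k, \qquad \tilde G^k \,=\, -\frac{a}{2}\sum_\mu \pi_{k,k}(B_\mu)^2,
\]
where $\tilde G^k$ is precisely the generator for the driving-free process, so that $\tilde M_T^k = \exp(T\tilde G^k)$.

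The key step is to observe that the two summands commute. The white-noise covariance $\EE[\xi^\mu\xi^\nu]\propto\delta_{\mu,\nu}$ renders $\{B_\mu\}$ Killing-orthonormal (cf.\ eq.~\eqref{eq:Killing_form_fundamental_rep}), so $\tilde G^k$ is a scalar multiple of the Casimir element $C(\pi_{k,k})$ of eq.~\eqref{eq:Casimir}, as in Observation~\ref{lem:casimir}. Centrality of the Casimir in the universal enveloping algebra gives $[\pi_{k,k}(-\i H_0), \tilde G^k]=0$, so I may factor
\[
M_T^k \,=\, \exp\bigl(T\pi_{k,k}(-\i H_0)\bigr)\,\tilde M_T^k \,=\, V_T\,\tilde M_T^k,
\]
with $V_T \coloneqq \pi_{k,k}(\e^{-\i T H_0})$ unitary, being an evaluation of the representation on a group element. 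To finish, universality together with Lemma~\ref{lemma:k-copy_gapped} guarantees that $M_T^k$ and $\tilde M_T^k$ both act as the identity on the invariant subspace of $M_\Haar^k$, so both differences from $M_\Haar^k$ vanish there and equal the respective restrictions to the orthogonal complement elsewhere. Since $V_T$ preserves that complement and is unitary,
\[
\norm{M_T^k - M_\Haar^k}_\infty \,=\, \norm{V_T\,\tilde M_T^k\big|_\perp}_\infty \,=\, \norm{\tilde M_T^k\big|_\perp}_\infty \,=\, \norm{\tilde M_T^k - M_\Haar^k}_\infty.
\]

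The main obstacle I anticipate is justifying centrality of $\tilde G^k$ in full generality: the stated covariance does not a priori ensure that $\{B_\mu\}$ is a complete Killing-orthonormal basis of $\mathfrak{u}(N)$. If it spans only a subalgebra, the commutation still follows by expanding $\mathrm{ad}_{-\i H_0}B_\mu = \sum_\nu c^\nu_\mu B_\nu$ and using invariance of the Killing form to see that the $c^\nu_\mu$ are antisymmetric in $\mu,\nu$, while the anticommutator $\{\pi_{k,k}(B_\mu),\pi_{k,k}(B_\nu)\}$ produced by $[\pi_{k,k}(-\i H_0),\sum_\mu \pi_{k,k}(B_\mu)^2]$ is symmetric in $\mu,\nu$, so the sum vanishes.
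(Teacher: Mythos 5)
Your proposal takes a genuinely different route from the paper's proof, but it has a gap. You want the exact factorisation $M_T^k = V_T\, \tilde M_T^k$ with $V_T = \pi_{k,k}(\e^{-\i T H_0})$, which requires the exact commutation relation $[\pi_{k,k}(-\i H_0), \tilde G^k]=0$. You justify this via centrality of the Casimir, but as you yourself note, $\tilde G^k = -\tfrac{a}{2}\sum_\mu \pi_{k,k}(B_\mu)^2$ is only the Casimir of $\pi_{k,k}$ when $\set{B_\mu}$ is a Killing-orthonormal \emph{basis} of the full Lie algebra --- and the lemma does not assume that. Your fallback argument silently reintroduces an equally unjustified hypothesis: writing $\ad_{-\i H_0}B_\mu = \sum_\nu c^\nu_\mu B_\nu$ presupposes that $\mathrm{span}\set{B_\mu}$ is $\ad_{-\i H_0}$-invariant, which is not granted by the lemma either. (A concrete failure mode: on one qubit take $B_1 = -\i\sigma_x$ and $H_0 = \sigma_z$; then $\ad_{-\i H_0}B_1 \propto -\i\sigma_y \notin \mathrm{span}\set{B_1}$, yet the process can still be universal since the drift and noise together generate $\su(2)$.)

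The paper avoids this entirely by working at the level of an \emph{infinitesimal} time step and using a Trotter--Suzuki splitting:
\begin{equation}
M^k_{\Delta t} = \tilde M^k_{\Delta t}\,\exp\{\pi_{k,k}(-\i H_0)\,\Delta t\} + O(\Delta t^2)\, ,
\end{equation}
which holds with no commutation assumption because the splitting error is of higher order in $\Delta t$. The fixed unitary factor is then discarded at the level of the per-step gap (using Lemma~\ref{lemma:k-copy_gapped} to identify the gap with $\norm{M^k_{\Delta t}-M^k_\Haar}_\infty$), and the limit $\Delta t\to 0$ is taken afterwards. Your last step --- that a unitary commuting with the $M_\Haar^k$-decomposition does not change $\norm{\cdot\big|_\perp}_\infty$ --- is sound; it is the exact global factorisation that is not. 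If you want to keep your cleaner algebraic argument, you would need to add the hypothesis that $\set{B_\mu}$ is a Killing-orthonormal basis of $\su(N)$ (true in the main application to Theorem~\ref{thm_TPE}, but not in the general lemma); otherwise, switch to the paper's infinitesimal Trotter splitting.
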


\begin{proof}
Lemma~\ref{lemma:k-copy_gapped} implies that the gap of
$M_{\Delta t}^{k}$
is
$\norm{M_{\Delta t}^{k} -M^k_{\Haar}}_\infty$.
Hence,
\begin{equation}
\norm{M_T^{k}-M^k_{\Haar}}_\infty
=
  \lim_{\Delta t \to 0}
  \norm{	
    M_{\Delta t}^{k}
  -M^k_{\Haar}}_\infty^{T/\Delta t}
\end{equation}
is the gap of $M_T^{k}$ and, similarly, for $\tilde M_T^k$.

Using the connection between Brownian motion and its increments \eqref{eq:def:U_t} and a Trotter-Suzuki approximation we obtain
\begin{align}
  M_{\Delta t}^{k}
  &=
  \EE\left[
      \exp\{ \pi_{k,k}(-\i H_0 + F_{\Delta t})\, \Delta t\}
    \right] + O(\Delta t^2) \nonumber
  \\
  &=
  \EE\left[
      \exp\{ \pi_{k,k}(F_{\Delta t})\, \Delta t\}\right]\,
      \exp\{ \pi_{k,k}(-\i H_0)\, \Delta t\}
      + O(\Delta t^2)\nonumber
   \\
   &=
   \tilde M_{\Delta t}^{k} \exp\{ \pi_{k,k}(-\i H_0)\, \Delta t\}
      + O(\Delta t^2) \, .
\end{align}
As $\exp\{ \pi_{k,k}(-\i H_0)\, \Delta t\}$ is a fixed unitary, up to an error of order $O(\Delta t^2)$, the gap of
$M_{\Delta t}^{k}$ and
$\tilde M_{\Delta t}^{k}$
are the same.
This finishes the proof.
\proofend
\end{proof}

\subsection{More general interaction graphs}\label{sec:graphs}
The generator from Lemma~\ref{lem:generator} of the $k$-th moment operator of the unitary Brownian motion inherits the locality structure from the increments \eqref{eq:stochastic_local_H}.
Hence, it can be written as
\begin{equation}\label{eq:Gk-E}
	G^k = \sum_{e\in E} g_{k}^{(e)} \ ,
\end{equation}
where $G^k$ is the generator associated to $\Theta_{\Delta t}$ and $g_{k}^{(e)}$ to $\theta^{(e)}_{\Delta t}$ according to eq.~\eqref{eq:loc_g}.
Presumably, among all connected graphs, the gap of $G^k$ could have a minimum for $1D$ nearest neighbour graphs. Here, we show that adding edges to this graph can only increase the gap, which can only lead to a faster mixing in Theorem~\ref{thm_TPE}.

In the following lemma, the spectral gap $\Delta(G)$ of an operator $G$ is the difference of the second smallest and smallest singular value.

\begin{lemma}[The spectral gap of the generator is concave]
	\label{lem:CompleteGraph}
	Let $(G_i)_{i}$ be a finite set of negative semidefinite and Hermitian operators with common non-trivial kernel and $p$ be a probability vector.
	Then
	\begin{equation}
	  \Delta\left(\sum_{i} p_i \, G_i\right)  \geq \sum_i p_i \, \Delta(G_i) \, .
	\end{equation}
\end{lemma}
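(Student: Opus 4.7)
My plan is to reduce the claim to the variational characterization of Hermitian operators restricted to the orthogonal complement of their common kernel. Denote this common kernel by $K$ and write $G_p \coloneqq \sum_i p_i G_i$. Because each $G_i$ is Hermitian and negative semidefinite with kernel $K$, it restricts to a negative-definite operator on $K^\perp$. The smallest nonzero singular value of $G_i$ therefore coincides with the absolute value of its largest nonzero eigenvalue, so by Courant--Fischer
\begin{equation*}
  \Delta(G_i) \;=\; -\max_{v\in K^\perp,\ \|v\|=1}\langle v,\, G_i\, v\rangle.
\end{equation*}

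The next step is to verify that the same variational formula applies to $G_p$, i.e.\ that $\ker G_p = K$. The inclusion $K\subseteq\ker G_p$ is immediate from linearity. Conversely, if $v\in\ker G_p$, then
\begin{equation*}
  0 \;=\; \langle v,\, G_p\, v\rangle \;=\; \sum_i p_i\,\langle v,\, G_i\, v\rangle,
\end{equation*}
and since each summand is non-positive and the weights $p_i$ are non-negative, $\langle v,\, G_i\, v\rangle = 0$ for every $i$ with $p_i>0$. Negative semidefiniteness of the $G_i$ then forces $G_i v = 0$ (using the standard fact that $\langle v,\, Av\rangle=0$ for $A\leq 0$ implies $Av=0$, as $A=-B^\dagger B$ makes the statement $\|Bv\|=0$), so $v$ lies in the common kernel $K$.

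Combining these two ingredients, for any unit vector $v\in K^\perp$ one has
\begin{equation*}
  \langle v,\, G_p\, v\rangle \;=\; \sum_i p_i\,\langle v,\, G_i\, v\rangle \;\leq\; -\sum_i p_i\, \Delta(G_i),
\end{equation*}
using the pointwise upper bound from the first paragraph. Taking the supremum over such $v$ on the left-hand side and negating gives $\Delta(G_p) \geq \sum_i p_i\, \Delta(G_i)$, which is the claimed concavity. I do not expect a substantive obstacle: the only delicate point is confirming that ``second smallest singular value'' refers to eigenvalues drawn from the same invariant subspace on both sides of the inequality, and this is exactly what the kernel identification in the middle step secures.
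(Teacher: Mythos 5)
Your proof is correct and follows essentially the same route as the paper's: identify the gap with the largest nonzero eigenvalue (equivalently, the smallest singular value on $K^\perp$), establish that $\ker G_p = K$, and exploit the fact that a pointwise minimum of affine functions of $G$ is concave. The only difference is that you spell out the verification of $\ker G_p = K$, which the paper asserts without proof.
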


This lemma implies that the gap of the generator \eqref{eq:Gk-E} can only become smaller when one removes edges from $E$, while keeping $E$ connected.
Hence, the gap in the case of a one dimensional graph can also only be smaller as the gap in case of a complete graph.

\begin{proof}
Let $K$ denote the common kernel of $(G_i)_{i}$. Then it is also the kernel of any operator in the convex hull of $(G_i)_{i}$.
The gap of $G_i$ is the smallest singular value of $G_i$ restricted to the orthogonal complement of $K$ and similarly for $G\coloneqq \sum_{i} p_i \, G_i$.
Hence, it is enough to show that the smallest singular value as the function
\begin{equation}
  G \mapsto \min_{\braket{ x| x} =1} |\bra{x} G \ket{x}|
\end{equation}
is concave. But this follows from the smallest singular value being the minimum of the linear functions $G \mapsto \bra{x} G \ket{x}$.
\proofend
\end{proof}

\begin{remark}[Frustration free Hamiltonians]
  The same argument applies when the operators are all positive semidefinite.
  Hence, the gap of frustration free Hamiltonians, as considered in  ref.~\cite{BrandaoHarrowHorodecki}, is also a concave function, i.e., can only increase under taking convex combinations.
\end{remark}

\subsection{Example: White noise in the Pauli basis}
\label{xmp:white_noise}
We conclude the discussion on approximate unitary designs with an example involving the specific setting in eqs.~\eqref{eq:H_Delta_t} and \eqref{eq:covariance_xi}, and see that the choice of the Pauli matrices as a basis precisely matches, under the representation theoretic approach, the assumption on the covariance for the variables $\xi$.

Consider $n=2$ qubits (thus $N=4$) and the Hamiltonian increments
	\begin{equation}
	\Theta_{\Delta t}
	\coloneqq
	-\mathrm{i} \sum_{\alpha, \beta=0}^{3} {\sigma_{\alpha} \otimes \sigma_{\beta}\ \xi^{( \alpha, \beta)}_{\Delta t}} \, ,
	\end{equation}	
	where $ \xi_{\Delta t}^{( \alpha, \beta)}$ are i.i.d. real random variables with zero mean and covariance
	\begin{equation}
	    \label{eq:covariance_xi_example}
	    \cov[\xi_{\Delta
	              t}^{(\alpha,
	                   \beta)}
	         \xi_{\Delta
	              t}^{(\alpha',
	                   \beta')}]
	    =\delta_{\alpha,
	             \alpha'}
	     \delta_{\beta,
	             \beta'}
	     \frac{1}
	          {\Delta
	           t} \ , \qquad \forall \alpha,\beta.
	\end{equation}
	Leaving out the term $\sigma_0\otimes\sigma_0\,  \xi_{\Delta t}^{( 0, 0)}$ we can easily restrict~$\Theta_{\Delta t}$ to its traceless part
	\begin{equation}
	\Theta_{0,\Delta t}
	=
	\sum_{\mu=1}^{15} \tau_\mu \, \xi_{\Delta t}^\mu \ ,
	\end{equation}	
	where we defined the anti-Hermitian operators $\tau_\mu\coloneqq-\mathrm{i}\, \sigma_{\mu_1}\otimes\sigma_{\mu_2}$ so that $\set{\tau_1,\tau_2,\dots,\tau_{15}}=\set{\tau_{(0,1)},\tau_{(0,2)},\dots,\tau_{(3,3)}}$ form a basis of the fundamental representation of~$\su(4)$.
	From eq.~\eqref{eq:Killing_tensor_defining_rep}
	we compute the Killing metric tensor \eqref{eq:def_kappa} with respect to this basis as
	\begin{equation}
	    \kappa_{\mu,
	            \nu}
	    =-
	     8
	     \Tr(\tau_\mu^\dagger
	         \tau_\nu)
	    =-
	     32
	     \delta_{\mu,
	             \nu}.
	\end{equation}
From eq.~\eqref{eq:noise_cov} and the assumption in eq.~\eqref{eq:covariance_xi_example} immediately follows $a=32$. Observation~\ref{lem:casimir} tells us then
$
g_2=-16\, C(\pi_{2,2})=
$
and hence the second moment operator $M_{\SLH(\Delta t)}^{k=2}$ has a gap of $16\Delta t$, matching eq.~\eqref{eq:Gershgorin_bound} in the decoupling section.

\section{Decoupling with stochastic Hamiltonian time evolution}\label{sec:Decoupling_poof}
The section is devoted to the proof Theorem~\ref{thm_4.1}. To show our result, we consider a fluctuating Hamiltonian on a complete graph whose increments are given in eq.~\eqref{eq:H_Delta_t}, in the limit of $\Delta t \rightarrow 0$.
As already mentioned, this result is implies Theorem~\ref{Decoupling_main} by application of the same proof technique used for the random quantum circuit case in ref.~\cite{Decoupling}.

First, we analyse how the support size of an initial Pauli string evolves during the process, then we observe how the qubits are made invariant under relabelling of the Pauli elements; this, together with the permutation invariance condition, leads to the desired result.
Decoupling of an arbitrary $n$-qubit system $A$ is mainly described by the second moment operator induced by the evolution. The expansion coefficients in the Pauli basis are given in eq.~\eqref{def:Pauli_coefficient}.
We recall that, since the Brownian motion on $\U(2^n)$ is Markovian, the second moment operator at time $T$ on $X$ is given by concatenating $T/ \Delta t$ times the operator $M_{n,\SLH (\Delta t)}^{k=2}$, i.e.,
\begin{align}
M_{n,\SLH (T)}^{k=2} (X)
&=
\lim_{\Delta t \rightarrow 0} \underbrace{M_{n,\SLH (\Delta t)}^{k=2} \circ \dots \circ M_{n,\SLH (\Delta t)}^{k=2}}_{T/\Delta t \text{ times}} \, (X) \\
&\eqqcolon
\lim_{\Delta t \rightarrow 0} \bigcirc_{s=1}^{T/ \Delta t} M_{n,\SLH (\Delta t)}^{k=2}\, (X) \ .
\end{align}
Note that, since the Hamiltonian in eq.~\eqref{eq:H_Delta_t} generating Brownian motion is dependent on system size, we must include an additional subscript.
	
In Taylor approximation, up to an error $\O(\Delta t^2)$, $M_{n,\SLH (\Delta t)}^{k=2}$  results from the sum of two-qubit moment operators
acting on any possible qubit pair $j,k$, i.e.
\begin{equation}\label{eq:loc_structure_moment_operator}
M_{n,\SLH (\Delta t)}^{k=2}
=
\frac{2}{n(n-1)} \sum_{j<k} \left(M_{2,\SLH (\Delta t)}^{k=2}\right)^{j,k} +O(\Delta t ^2) \ .
\end{equation}
This can be seen through calculations analogous to the ones from eqs.~\eqref{generator_global_moment}-\eqref{eq:derive_g_2}.
We can hence interpret this process as a qubit pair being uniformly randomly chosen at every time step $(\ell-1) \Delta t$ and a two-qubit unitary $U_{2,\ell,\Delta t} \coloneqq \exp\{-\i\,  H_{2,\ell,\Delta t}\, \Delta t \}$ being applied.
Therefore, in the following section we first consider the restricted two-qubit case, which provides useful results and insights to be used for the investigation of the general case with $n$ qubits.

\subsection{Two-qubit analysis of the second moment operator}
\label{sec:two-qubit}

Considering a two-qubit system, here we would like to understand the evolution of $M_{2,\SLH(T)}^{k=2}$ through $M_{2,\SLH(\Delta t)}^{k=2}$ and show the following lemma, which is compatible to the analysis of the local gap discussed in the previous section (as showed in Example~\ref{xmp:white_noise}) .

\begin{lemma}[Two-qubit case]\label{prop:two_qubit_case}
	Then the local second moment operator associated to the Hamiltonian increments \eqref{eq:H_Delta_t} converges exponentially to the second moment operator of the uniform distribution, i.e.
	\begin{equation}
	\norm{ M_{2,\SLH (T)}^{k=2} - M_{2,\Haar}^{k=2}}_\infty \leq \e^{-16\,T} .
	\end{equation}
\end{lemma}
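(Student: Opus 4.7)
The plan is to reduce the two-qubit claim to a direct application of the generator lemma (Lemma~\ref{lem:generator}), the Casimir identification (Observation~\ref{lem:casimir}), and the Casimir gap computation (Lemma~\ref{lem:local_gap}), specialised to $N=d^2=4$ and the choice of Pauli basis highlighted in Example~\ref{xmp:white_noise}.

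First, I would invoke Lemma~\ref{lem:generator} to write $M_{2,\SLH(T)}^{k=2} = \exp(T\, g_2)$, where $g_2$ is the local generator
\begin{equation}
  g_2 = \lim_{t\to 0} \left(\EE\left[\pi_{2,2}(\Theta_t)\right] + \tfrac{1}{2}\EE\left[\pi_{2,2}(\Theta_t)^2\right]\, t \right).
\end{equation}
Since the noise variables $\xi^{(\alpha,\beta)}_{\Delta t}$ have zero mean (eq.~\eqref{eq:noise_mean_zero}), the first term vanishes, and the second is exactly the Casimir computation carried out in Section~\ref{xmp:white_noise}: from the covariance assumption~\eqref{eq:covariance_xi} applied to the Pauli basis of the fundamental representation of $\su(4)$, the Killing tensor entry is $\kappa_{\mu,\nu}=-32\,\delta_{\mu,\nu}$ and therefore $a=32$. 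Observation~\ref{lem:casimir} then yields $g_2 = -\tfrac{a}{2}\, C(\pi_{2,2}) = -16\, C(\pi_{2,2})$.

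The next step is to control the spectrum of $g_2$. By Lemma~\ref{lem:local_gap}, in the decomposition of $\pi_{2,2}$ into irreducibles the trivial representation contributes eigenvalue $0$ of $C(\pi_{2,2})$, while every other irrep appearing in the decomposition has $c_2(\pi)\geq 1$. Hence $g_2\leq 0$, its kernel coincides with the invariant subspace of $\pi_{2,2}$, and its non-zero spectrum is bounded above by $-16$. Because the distribution underlying $\Theta_{\Delta t}$ is universal (the Pauli basis generates all of $\su(4)$), Lemma~\ref{lemma:k-copy_gapped} guarantees that the eigenvectors with unit eigenvalue of $M^{k=2}_\Haar$ are precisely those lying in the kernel of $g_2$, so $M^{k=2}_\Haar$ is the orthogonal projection onto that kernel.

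Finally, using the functional calculus, I would write
\begin{equation}
  M_{2,\SLH(T)}^{k=2} - M_{2,\Haar}^{k=2} = \exp(T\, g_2)\,(\1 - M_{2,\Haar}^{k=2}),
\end{equation}
an operator supported on the orthogonal complement of the kernel of $g_2$, where the spectrum of $g_2$ lies in $(-\infty,-16]$. Consequently
\begin{equation}
  \norm{M_{2,\SLH(T)}^{k=2} - M_{2,\Haar}^{k=2}}_\infty \leq e^{-16 T},
\end{equation}
which is the claimed bound. The only conceptual subtlety I anticipate is making the cancellation $\EE[\pi_{2,2}(\Theta_t)]\to 0$ and the quadratic-in-$\Delta t$ error terms in the Trotter-style expansion of Lemma~\ref{lem:generator} rigorously under the white-noise scaling; but these are routine, since the second-order term $\tfrac{1}{2}\EE[\pi_{2,2}(\Theta_t)^2]\,t$ is finite precisely because the covariance scales as $1/\Delta t$ in eq.~\eqref{eq:covariance_xi_example}, yielding a well-defined generator $g_2$.
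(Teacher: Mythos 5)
Your proof is correct, but it follows a genuinely different route from what the paper does here. The paper proves Lemma~\ref{prop:two_qubit_case} by a \emph{direct calculation in the Pauli basis}: a Taylor expansion of $M_{2,\SLH(\Delta t)}^{k=2}$ applied to each $\sigma_\mu\otimes\sigma_\nu$, explicit computation of the expectations $\EE[H\sigma_\mu H\otimes\sigma_\nu]$, $\EE[H^2\sigma_\mu\otimes\sigma_\nu]$, and $\EE[[H,\sigma_\mu]\otimes[H,\sigma_\nu]]$, then a block-diagonalisation of the resulting matrix into a $1\times1$, a $15\times15$ block $A$ (for $\sigma_\mu\otimes\sigma_\mu$) and a $240\times240$ block $B$ (for $\sigma_\mu\otimes\sigma_\nu$, $\mu\ne\nu$), with the eigenvalues of $A$ found explicitly ($1$, $1-40\Delta t$ with multiplicity $9$, $1-24\Delta t$ with multiplicity $5$) and the eigenvalues of $B$ bounded by $1-16\Delta t$ via the Gershgorin circle theorem. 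You instead pull together the \emph{representation-theoretic} machinery from Section~\ref{section:designs}---Lemma~\ref{lem:generator} for the generator, Observation~\ref{lem:casimir} for $g_2=-\tfrac{a}{2}\,C(\pi_{2,2})$ with $a=32$, Lemma~\ref{lem:local_gap} for the Casimir gap of $1$, and Lemma~\ref{lemma:k-copy_gapped} to identify the kernel with the Haar-invariant subspace---and read off $\norm{e^{Tg_2}-M^{k=2}_{\Haar}}_\infty\le e^{-16T}$ by functional calculus. Both arguments are sound and the paper itself acknowledges their equivalence: the end of Section~\ref{xmp:white_noise} explicitly notes that the Casimir computation ``matches eq.~\eqref{eq:Gershgorin_bound} in the decoupling section.'' What the paper's concrete computation buys in exchange for its length is detail that is actually used downstream: eqs.~\eqref{eq:2mom1}--\eqref{1sig2} and the block structure of Lemma~\ref{lem:Hermitian} are the raw material for the weight-chain transition matrix in Lemma~\ref{transition_matrix_P} and for the decomposition $A=\tfrac13 R+\tfrac23 L$ in eq.~\eqref{acc_chain}, none of which the abstract Casimir bound supplies. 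Your route is cleaner as a proof of the norm bound on its own, but it would not make the rest of Section~\ref{sec:Decoupling_poof} self-contained. One minor remark: the cancellation $\EE[\pi_{2,2}(\Theta_t)]=0$ you flag as a subtlety is in fact exact (zero-mean noise and no $H_0$ term in eq.~\eqref{eq:H_Delta_t}), not merely asymptotic.
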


\begin{proof}
	To prove the convergence rate, we want to express $M_{n,\SLH (\Delta t)}^{k=2}$ in terms of the Pauli basis and compute the gap.
	We can see directly that the identity on $4$ qubits is an eigenvector with unit eigenvalue
	\begin{equation}
	M_{2,\SLH(\Delta t)}^{k=2} (\1_{4})
	=
	\E\left[U_{2,\ell, \Delta t}^{\otimes 2}\1_{4} (U_{2,\ell, \Delta t}^{\dagger})^{\otimes 2}\right]=\1_{4}.
	\end{equation}
	We then observe the unitary evolution acting on a Pauli element $\sigma_{\mu} \otimes \sigma_{\nu}$, with $\mu,\ \nu \in \{0,1,2,3\}^2$ and calculate its expectation with a Taylor expansion for the unitary, taking into account terms with leading order in $\Delta t$ (and omitting subscripts for $H$),
	\begin{align}
	M_{2, \SLH(\Delta t)}^{k=2} (\sigma_{\mu} \otimes \sigma_{\nu})
	&=
	\E\left[U_{2,\ell, \Delta t}\ (\sigma_{\mu_1} \otimes \sigma_{\mu_2}) \ U_{2, \ell,\Delta t}^{\dagger} \otimes U_{2, \ell,\Delta t} \ (\sigma_{\nu_1} \otimes \sigma_{\nu_2}) \ U_{2, \ell,\Delta t}^{\dagger} \right] \label{Taylor_initial}
	\\ &=
	\E\left[ \left(\1_2-\i H \Delta t -\frac{1}{2} H^2 \Delta t^2\right)\ (\sigma_{\mu_1} \otimes \sigma_{\mu_2})\ \left(\1_2+\i H \Delta t -\frac{1}{2} H^2 \Delta t^2 \right) \right.\nonumber
	\\ & \phantom{={}}
	\otimes \left.\left(\1_2-\i H \Delta t -\frac{1}{2} H^2 \Delta t^2\right)\ (\sigma_{\nu_1} \otimes \sigma_{\nu_2})\ \left(\1_2+\i H \Delta t -\frac{1}{2} H^2 \Delta t^2\right) \right]\nonumber \\
	&+ \O(\Delta t^2).\nonumber
	\end{align}
	We now recall that the $\xi$ white noise variables are i.i.d. with zero mean and covariance as in eq.~\eqref{eq:covariance_xi}. Considering only the non-vanishing linear terms in $\Delta t$ in the expectation, we have
	\begin{align}
	M_{2, \SLH(\Delta t)}^{k=2} (\sigma_{\mu} \otimes \sigma_{\nu})
	&=
	\sigma_{\mu} \otimes \sigma_{\nu} + \Delta t^2 \E\bigl[H\sigma_{\mu}H \otimes \sigma_{\nu} + \sigma_{\mu} \otimes H\sigma_{\nu}H\bigr] \nonumber\\
	&-
	\frac{\Delta t^2}{2}\E\left[H^2\sigma_{\mu} \otimes \sigma_{\nu} +\sigma_{\mu} H^2 \otimes \sigma_{\nu} + \sigma_{\mu} \otimes H^2 \sigma_{\nu}+ \sigma_{\mu} \otimes \sigma_{\nu} H^2\right]\nonumber\\
	&
	-\Delta t^2 \E\bigl[[H,\sigma_{\mu}] \otimes [H,\sigma_{\nu}] \bigr] + \O(\Delta t^2) \label{Taylor_final}.
	\end{align}
	Let us consider the second term, in particular
	\begin{equation}
	\begin{array}{ll}
	\E\left[H\sigma_{\mu}H \otimes \sigma_{\nu} \right] =\frac{1}{\Delta t} \left(\sum_{\alpha, \beta}(\sigma_{\alpha} \otimes \sigma_{\beta})(\sigma_{\mu_1} \otimes \sigma_{\mu_2})(\sigma_{\alpha} \otimes \sigma_{\beta}) \right) \otimes (\sigma_{\nu_1} \otimes \sigma_{\nu_2}).
	\end{array}\end{equation}
	If $\mu=0$, then
	\begin{equation}
	\E\left[H\, \1_2 \,H \otimes \sigma_{\nu} \right] =\E\left[H^2 \otimes \sigma_{\nu}\right]=\frac{16}{\Delta t}\ \1_2\otimes \sigma_{\nu}.
	\end{equation}	
	Otherwise, for $\mu \neq 0$, at least one among $\mu_1$ and $\mu_2$ is not 0. Let us assume $\mu_1 \neq 0$. Then, $\forall \beta$, $\sigma_{\alpha}\sigma_{\mu_1}\sigma_{\alpha} \otimes \sigma_{\beta}\sigma_{\mu_2}\sigma_{\beta}$ equals $\sigma_{\mu_1} \otimes \sigma_{\beta}\sigma_{\mu_2} \sigma_{\beta}$ for $\alpha=0,\mu_1$ and $-\sigma_{\mu_1} \otimes \sigma_{\beta}\sigma_{\mu_2}\sigma_{\beta}$ for the other two indices of $\alpha$. Thus, summing over $\alpha$ gives 0. The same applies for $\mu_1$ arbitrary, $\mu_2 \neq 0$.
	We conclude that the second term in the expression for $	M_{2, \SLH(\Delta t)}^{k=2}$ vanishes if both $\mu$ and $\nu$ are different from $\left\{0,0\right\}$.

	Now we look at the first part of the third term and we get that
	\begin{equation}
	\E\left[H^2\sigma_{\mu} \otimes \sigma_{\nu}\right]=\frac{1}{\Delta t}\sum_{\alpha, \beta}{(\sigma_{\alpha} \otimes \sigma_{\beta})^2 \sigma_{\mu} \otimes \sigma_{\nu}}=\frac{16}{\Delta t} \ \sigma_{\mu} \otimes \sigma_{\nu} .
	\end{equation}
	Hence, keeping terms to leading order in $\Delta t$ we have
	\begin{align}\label{eq:2mom1}
	M_{2, \SLH(\Delta t)}^{k=2}  (\sigma_{\mu} \otimes \sigma_{\nu} ) &= (1-32 \Delta t) \sigma_{\mu} \otimes \sigma_{\nu} -\Delta t^2 \E\left[[H,\sigma_{\mu}] \otimes [H,\sigma_{\nu}] \right]\\
	&=(1-32 \Delta t) \sigma_{\mu} \otimes \sigma_{\nu} -\Delta t \sum_{\alpha,\beta}{[\sigma_{\alpha}\otimes \sigma_{\beta},\sigma_{\mu_1}\otimes \sigma_{\mu_2}] \otimes [\sigma_{\alpha}\otimes \sigma_{\beta},\sigma_{\nu_1}\otimes \sigma_{\nu_2}]},
	\label{eq:2mom2}
	\end{align}
	when both $\mu$ and $\nu$ are different from $\left\{0,0\right\}$, and conversely
	\begin{align}\label{1sig1}
M_{2, \SLH(\Delta t)}^{k=2}  (\1_2 \otimes \sigma_{\nu} )
	&=
	(1-16 \Delta t)\1_2 \otimes \sigma_{\nu},
	\\
	M_{2, \SLH(\Delta t)}^{k=2} (\sigma_{\mu} \otimes \1_2 )
      &=
	(1-16 \Delta t)\sigma_{\mu} \otimes \1_2.\label{1sig2}
	\end{align}
	We now divide the set of all possible strings $\sigma_{\mu} \otimes \sigma_{\nu}$ in three parts: the
	identity $\1_{4}$, the set of strings of the form $\sigma_{\mu} \otimes \sigma_{\mu}$, and all remaining strings of the
	form $\sigma_{\mu}\otimes \sigma_{\nu}$ with $\mu \neq \nu$.
	We can then make use of the matrix representation of the
	operator $M_{2, \SLH(\Delta t)}^{k=2}$ as a matrix with respect to Pauli basis,
	which gives
	\begin{equation}
		M_{2, \SLH(\Delta t)}^{k=2} =
		\begin{pmatrix}
		1& & \\
		& A &  \\
		& & B
		\end{pmatrix} \ ,
		\end{equation}
		where $A$ is a $15 \times 15$ matrix related to the set of $\sigma_{\mu} \otimes \sigma_{\mu}$ elements (without the identity $\1_{16}$) and $B$ is a $240 \times 240$ matrix for $\sigma_{\mu} \otimes \sigma_{\nu}$ elements. The detailed proof of this finding is laid out in the
		separate subsequent Lemma
	\ref{lem:Hermitian}.
	
%

	We now consider the matrix $A$; we compute the action of $M_{2, \SLH(\Delta t)}^{k=2} $ over all possible $\sigma_{\mu} \otimes \sigma_{\mu}$ and look for eigenvalues.
	We obtain a non-degenerate eigenvalue $1$ whose eigenvector is the uniform sum over all non-identity Pauli matrices
	\begin{equation}
	\mathds{F}=\frac{1}{15}\; \sum_{\gamma \neq 0} \sigma_{\gamma} \otimes \sigma_{\gamma} .
	\end{equation}
	We then have a $9$-fold degenerate eigenvalue $1-40 \Delta t$ and a $5$-fold degenerate eigenvalue $1-24\Delta t$. We are free to bound all these eigenvalues with $1-16\Delta t$.
	We now deal with the action of the second moment operator on terms of the form $\sigma_{\mu}\otimes \sigma_{\nu}$ with $\mu,\nu \neq 0$ and $\mu \neq \nu$. Only four choices of $\sigma_{\alpha} \otimes \sigma_{\beta}$ do not commute for a given pair $\mu,\nu$, i.e.:
	\begin{equation}
	M_{2, \SLH(\Delta t)}^{k=2} (\sigma_{\mu} \otimes \sigma_{\nu})=(1-32 \Delta t)\sigma_{\mu} \otimes \sigma_{\nu} - 4 \Delta t \{ \pm \sigma_{\gamma_1} \otimes \sigma_{d_1} \pm \sigma_{\gamma_2} \otimes \sigma_{d_2} \pm \sigma_{\gamma_3} \otimes \sigma_{d_3} \pm\sigma_{\gamma_4} \otimes \sigma_{d_4} \}
	\end{equation}
	with $\gamma_i \neq d_i$, for each $\sigma_{\mu} \otimes \sigma_{\nu}$. This means that each column of the matrix $B$ has one entry $(1-32 \Delta t)$ (in the diagonal element) and four entries $\pm 4 \Delta t$, and $0$ otherwise. Hence,
	\begin{equation}
	\|B\|_1 = \max_{j}\sum_{i} |a_{i,j}|=1-16 \Delta t .
	\end{equation}
	By the Gershgorin circle theorem, and taking also into account \eqref{1sig1} and \eqref {1sig2}, we can upper bound the highest eigenvalue of $B$ with $1-16\Delta t$.
	For a single time step, the two-qubit second moment operator can be upper bounded by the following diagonal matrix
	\begin{equation}\label{eq:Gershgorin_bound}
	M_{2, \SLH(\Delta t)}^{k=2} \leq
	\begin{pmatrix}
	1 &   & & & \\
	& 1 & & & \\
	& & 1-16\Delta t &  & \\
	& & & \ddots   & \\
	& & & &  1-16 \Delta t
	\end{pmatrix},
	\end{equation}
	where we recall that the $2$-fold degenerate eigenvalue $1$ corresponds to the identity and $\omega$.
	\proofend\end{proof}

	\begin{lemma}[Local second moment operator]\label{lem:Hermitian}
		$M_{2, \SLH(\Delta t)}^{k=2}$ is Hermitian, maps elements of the set of strings of the form $\sigma_{\mu} \otimes \sigma_{\mu}$ to a linear combination of elements of the same set and elements of the set of strings of the form $\sigma_{\mu}\otimes \sigma_{\nu}$ with $\mu \neq \nu$ again to a linear combination of elements of the same set, such that there is no mixing between the two sets.
		Hence, we can represent the operator $M_{2, \SLH(\Delta t)}^{k=2}$ as a matrix with respect to Pauli basis in the following form
		\begin{equation}
		M_{2, \SLH(\Delta t)}^{k=2} =
		\begin{pmatrix}
		1& & \\
		& A &  \\
		& & B
		\end{pmatrix} \ ,
		\end{equation}
		where $A$ is a $15 \times 15$ matrix related to the set of $\sigma_{\mu} \otimes \sigma_{\mu}$ elements (without the identity $\1_{16}$) and $B$ is a $240 \times 240$ matrix for $\sigma_{\mu} \otimes \sigma_{\nu}$ elements.
	\end{lemma}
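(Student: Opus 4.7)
The plan is to establish the three assertions---Hermiticity, invariance of the diagonal and off-diagonal Pauli subspaces, and the resulting block-diagonal form---as three independent steps. All the required computational ingredients are already present in the preceding proof of Lemma~\ref{prop:two_qubit_case}, so the task is really to package them into structural statements.

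For Hermiticity, I would use the fact that the adjoint with respect to the Hilbert--Schmidt inner product of a superoperator $X\mapsto V^{\otimes 2}X(V^\dagger)^{\otimes 2}$ is $X\mapsto (V^\dagger)^{\otimes 2}XV^{\otimes 2}$. Self-adjointness of the expectation $M_{2,\SLH(\Delta t)}^{k=2}$ is therefore reduced to showing that $U_{2,\ell,\Delta t}$ and $U_{2,\ell,\Delta t}^\dagger$ have the same distribution. This is immediate from eq.~\eqref{eq:covariance_xi}: the noise variables $\xi$ have zero mean and a covariance that is symmetric under $\xi\mapsto -\xi$ (and in the Brownian-motion limit they are in fact Gaussian), so $H_{2,\ell,\Delta t}$ and $-H_{2,\ell,\Delta t}$ are equal in law, hence so are $U$ and $U^\dagger$.

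For the invariance of the two subspaces, I would read off the action of $M_{2,\SLH(\Delta t)}^{k=2}$ on a Pauli element $\sigma_\mu\otimes\sigma_\nu$ from eqs.~\eqref{eq:2mom1}--\eqref{1sig2}. The only mixing term is $[\sigma_\alpha\otimes\sigma_\beta,\sigma_\mu]\otimes[\sigma_\alpha\otimes\sigma_\beta,\sigma_\nu]$, which is non-zero exactly when the two-qubit Pauli $\sigma_\alpha\otimes\sigma_\beta$ anti-commutes simultaneously with $\sigma_\mu$ and $\sigma_\nu$, and in that case equals $4\,(\sigma_\alpha\otimes\sigma_\beta)\sigma_\mu\otimes(\sigma_\alpha\otimes\sigma_\beta)\sigma_\nu$ up to a phase. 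Two observations then conclude the proof: (i) left-multiplying $\sigma_\mu$ and $\sigma_\nu$ by the same Pauli preserves equality or inequality of the labels up to phase, so the subset $\set{\sigma_\mu\otimes\sigma_\mu}$ is mapped into itself and so is the complementary subset $\set{\sigma_\mu\otimes\sigma_\nu: \mu\neq \nu}$; (ii) $M(\sigma_\mu\otimes\sigma_\mu)$ can never produce a component along the identity $\sigma_0\otimes\sigma_0$, since that would require $(\sigma_\alpha\otimes\sigma_\beta)\sigma_\mu\propto\1_4$, i.e.\ $\sigma_\alpha\otimes\sigma_\beta\propto\sigma_\mu$, which commutes rather than anti-commutes with $\sigma_\mu$ and yields a vanishing commutator. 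Similarly the identity is mapped to itself, so it constitutes its own one-dimensional invariant block.

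The block-diagonal matrix form is then a direct consequence: counting gives $1+15+240=256$ basis elements, with the single identity forming the $1\times 1$ block, the remaining $15$ elements $\sigma_\mu\otimes\sigma_\mu$ with $\mu\neq 0$ forming the block $A$, and the $240$ elements $\sigma_\mu\otimes\sigma_\nu$ with $\mu\neq\nu$ (including the cases with $\mu=0$ or $\nu=0$ covered by eqs.~\eqref{1sig1}--\eqref{1sig2}) forming the block $B$. Hermiticity of each block then follows from the Hermiticity of the full operator. I do not foresee any genuine obstacle; the only point requiring some care is the bookkeeping of degenerate cases where $\sigma_\alpha\otimes\sigma_\beta$ coincides with $\pm\sigma_\mu$ or $\pm\sigma_\nu$, but these produce vanishing commutators and therefore do not affect the subspace-preservation argument.
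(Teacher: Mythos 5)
Your proof is correct, but it departs from the paper's route in two ways worth noting. The paper establishes Hermiticity by inspecting the explicit leading-order formulas in eqs.~\eqref{eq:2mom1}--\eqref{1sig2}, then proves invariance of the span of $\set{\sigma_\mu\otimes\sigma_\mu}$ directly from the commutator term, and finally invokes Hermiticity to infer that the orthogonal complement $\set{\sigma_\mu\otimes\sigma_\nu:\mu\neq\nu}$ is invariant as well. You instead establish Hermiticity by a structural argument: the Hilbert--Schmidt adjoint of $X\mapsto U^{\otimes 2}X(U^\dagger)^{\otimes 2}$ is the same map with $U\leftrightarrow U^\dagger$, and the law of $U_{2,\ell,\Delta t}$ is invariant under this swap because the noise is sign-symmetric. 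That argument is cleaner and valid to all orders in $\Delta t$, but it does require that $\xi$ and $-\xi$ be equal in law; the paper only posits zero mean and a prescribed covariance, so strictly you are leaning on the Gaussianity implicit in the Brownian limit (which you flag, and which also underlies the paper's leading-order formulas, so this is not a substantive gap). For the block structure, you show both subspaces invariant directly via the observation that $[\sigma_\alpha\otimes\sigma_\beta,\sigma_\mu]\otimes[\sigma_\alpha\otimes\sigma_\beta,\sigma_\nu]\propto(\sigma_\alpha\otimes\sigma_\beta)\sigma_\mu\otimes(\sigma_\alpha\otimes\sigma_\beta)\sigma_\nu$ and that left-multiplying both factors by the same Pauli preserves the equality/inequality of the labels, with the vanishing-commutator case ruling out leakage into the identity. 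This makes the second invariance self-contained rather than derived from Hermiticity, at the cost of a little extra case analysis. Both arguments are sound; yours is slightly more conceptual and modular, the paper's slightly more economical given the computations already in hand.
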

	\begin{proof}
		From eq.~\eqref{eq:2mom1} and \eqref{eq:2mom2}, follows directly that $M_{2, \SLH(\Delta t)}^{k=2} $ is Hermitian.
		Moreover we see, again from eq.~\eqref{eq:2mom2}, that elements of the set $\sigma_{\mu} \otimes \sigma_{\mu}$ are mapped to a linear combination of elements of the same set.
		This, in addition to the fact that $M_{2, \SLH(\Delta t)}^{k=2} $ is Hermitian, implies that elements of the set $\sigma_{\mu}\otimes \sigma_{\nu}$ with $\mu \neq \nu$ are mapped again to a linear combination of elements of the same set.
		\proofend\end{proof}

Next, we make use of this analysis to understand how $n$-qubit Pauli strings evolve during the continuous-time process. In Appendix~\ref{app:stochastic_processes} we collect the most relevant mathematical tools used in the second part of this section. As already mentioned, the continuous-time random walk induced by the Hamiltonian increments can be interpreted as a sequence of jumps defining a discrete random walk spaced out by i.i.d.\ waiting times.

\subsection{Markov chain analysis on weights}

The proof strategy for Lemma \ref{thm_4.1} begins with the analysis of the evolution of the coefficients: we observe how the support size behaves during the process, inferring a probability that, for a given initial string $\sigma_\mu$ with support size $\ell$, after run time $T$ the string has support size $k$.
Conditioned on some specific event $E_W$ that we will discuss later, this probability can be upper bounded as
\begin{equation}\label{eq:bound_weight}
\mathbb{P}\left( \{T,\ell,k\} \ \big|\ E_W\right) \coloneqq \sum_{\left|\nu\right|=k}Q_{E_W}^T(\mu,\nu) \leq \binom{n}{k}3^k\frac{4^{\delta n}}{4^n-1} .
\end{equation}
Having a total of $\binom{n}{k} 3^k$ strings with support size $k$, we then show that almost all of them have the same probability.

Considering the analysis in the previous section on the two-qubit case and that, the local structure of $M_{n,\SLH (\Delta t)}^{k=2}$ given in eq.~\eqref{eq:loc_structure_moment_operator}
we introduce a Markov chain over the weights of the string similarly to ref.\ \cite{RandomCircuitsLow} (where this projected chain is called \emph{zero chain}).
The chain runs over the state space $\Omega=\{1,2,\dots,n\}$ and the transition probability from $\ell$ at time $t$ to $k$ at time $t+\Delta t$ is described by the matrix element
\begin{equation}
P(\ell,k) \coloneqq \sum_{\nu:\left|\nu\right|=k}\frac{1}{4^n} \Tr\left[ \sigma_\nu\otimes \sigma_\nu\, M_{n,\SLH (\Delta t)}^{k=2} (\sigma_\mu\otimes \sigma_\mu) \right]
\end{equation}
for any choice of $\mu$ with support size $\ell$.

\begin{lemma}[Transition matrix of the zero chain]\label{transition_matrix_P}
	The zero chain has transition matrix $P$ on state space $\Omega=\{1,2,\dots,n\}$,
	\begin{equation}\label{transmatrix}
	P(\ell,k)=\left \{
	\begin{array}{ll}
	1-\frac{16\ell(3n-2\ell-1)}{n(n-1)}\Delta t & k=\ell\\
	\frac{16\ell(\ell-1)}{n(n-1)}\Delta t & k=\ell-1\\
	\frac{48\ell(n-\ell)}{n(n-1)}\Delta t & k=\ell+1\\
	0 & \text{otherwise}
	\end{array}
	\right.
	\end{equation}
	for $1\leq x,y \leq n$.
\end{lemma}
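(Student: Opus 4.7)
The plan is to use the local tensor-product structure of $M_{n,\SLH(\Delta t)}^{k=2}$ from eq.~\eqref{eq:loc_structure_moment_operator} and reduce the computation of $P(\ell, k)$ to a pair-by-pair analysis that directly invokes the two-qubit calculation of Subsection~\ref{sec:two-qubit}. Up to $O(\Delta t^{2})$, only one pair of qubits is acted on non-trivially at a time, so the contribution to $P(\ell,k)$ from each unordered pair $\{j,j'\}$ depends only on the restriction of $\sigma_\mu$ to positions $j,j'$, and thus only on the number of non-identity tensor factors of $\sigma_\mu$ at those two positions.

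Given $|\mu|=\ell$, I would split the $\binom{n}{2}$ pairs into three disjoint classes according to the 2-qubit local support of $\sigma_\mu$ restricted to $\{j,j'\}$: (i) both qubits outside $\mathrm{supp}(\mu)$ ($\binom{n-\ell}{2}$ pairs), (ii) exactly one qubit in $\mathrm{supp}(\mu)$ ($\ell(n-\ell)$ pairs), (iii) both qubits in $\mathrm{supp}(\mu)$ ($\binom{\ell}{2}$ pairs). For class (i) the local input is $\1_4$, which is an eigenvector of the two-qubit moment operator with eigenvalue $1$, so no support change occurs. For classes (ii) and (iii), the two-qubit zero chain on state space $\{0,1,2\}$ has to be extracted from the explicit action in eqs.~\eqref{eq:2mom1}--\eqref{1sig2} by summing the coefficients in front of all strings $\sigma_\nu\otimes \sigma_\nu$ with fixed local support size.

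The main technical step is carrying out this two-qubit zero-chain accounting. For class (ii), picking without loss of generality $\mu_{|\{j,j'\}}=(0,a)$ with $a\ne 0$, the commutator sum in eq.~\eqref{eq:2mom2} reduces to a sum over $\alpha\in\{0,1,2,3\}$ and $\beta\in\{1,2,3\}\setminus\{a\}$, producing $8$ strings $(\sigma_\alpha\otimes\sigma_{c(\beta)})^{\otimes 2}$, each with coefficient $4\Delta t$. Splitting by $\alpha=0$ versus $\alpha\neq 0$ shows that $2$ of these outputs have local support $1$ and $6$ have local support $2$, which combined with the diagonal coefficient $1-32\Delta t$ yields local transition weights $1-24\Delta t$ to size $1$ and $24\Delta t$ to size $2$. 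For class (iii), with $\mu_{|\{j,j'\}}=(a,b)$ both non-zero, an analogous enumeration of the $8$ anticommuting pairs $(\sigma_\alpha\otimes\sigma_\beta)$, grouped by which of the two factors commutes and which anticommutes, produces $4$ outputs of local support $1$ and $4$ of local support $2$, giving local weights $1-16\Delta t$ to size $2$ and $16\Delta t$ to size $1$.

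Finally, assembling the global transition probabilities amounts to weighting each class by $\frac{2}{n(n-1)}$, multiplying by the number of pairs in the class, and collecting by target support size. The off-diagonal entries follow immediately: $P(\ell,\ell+1)=\frac{2}{n(n-1)}\,\ell(n-\ell)\cdot 24\Delta t$ and $P(\ell,\ell-1)=\frac{2}{n(n-1)}\,\binom{\ell}{2}\cdot 16\Delta t$, which are exactly the formulas in eq.~\eqref{transmatrix}. The diagonal entry $P(\ell,\ell)$ is then obtained by a short algebraic simplification of $1-P(\ell,\ell+1)-P(\ell,\ell-1)$, giving $1 - \tfrac{16\ell(3n-2\ell-1)}{n(n-1)}\Delta t$ as claimed. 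There is no genuine obstacle; the only place where I would be careful is to ensure that the sum over $\sigma_\nu \otimes \sigma_\nu$ outputs truly depends only on $|\mu\cap\{j,j'\}|$ and not on the specific Pauli labels --- this follows from the SO$(3)$-symmetry of the commutator enumeration carried out above, and is what justifies defining the zero chain at all.
\proofend
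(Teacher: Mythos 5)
Your proposal is correct and follows essentially the same route as the paper: reduce to the two-qubit analysis of $M_{2,\SLH(\Delta t)}^{k=2}$, classify the $\binom{n}{2}$ qubit pairs by how many lie in the support of $\sigma_\mu$, apply the local transition weights ($16\Delta t$ for weight $2\to1$, $24\Delta t$ for weight $1\to2$), and weight each class by the uniform pair-selection probability $2/(n(n-1))$. The only difference is that you spell out the commutator-level enumeration behind the counts $4$ and $6$ (which the paper states without derivation) and explicitly flag that the zero chain is well-defined because the local transition weights depend only on the support size, not the specific Pauli labels.
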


\begin{proof}
	We consider the analysis of the two-qubit second moment operator in Section \ref{sec:two-qubit}. It is straightforward to note that, after application of $M_{n, \SLH(\Delta t)}^{k=2} $, the weight of the string can only vary by $1$ or stay the same.
	The weight decreases if a pair of two non-identity terms $\sigma \otimes \sigma$ is chosen and is transformed in a pair with one identity element (namely $\sigma\otimes \1$ or $\1\otimes \sigma$); there are in total four choices for $\sigma_{\alpha} \otimes \sigma_{\beta}$ which produce such a transition. According to the two-qubit case, the probability that one of these Pauli operators is chosen is $4 \cdot 4\Delta t= 16\Delta t$ and since the probability of choosing a pair with weight $2$ is
	${\ell(\ell-1)}/({n(n-1)})$, we have
	\begin{equation}
	P(\ell,\ell-1)=\frac{16\ell(\ell-1)}{n(n-1)}\Delta t \ .
	\end{equation}	
	The weight of the string can be increased if an identity term paired with a non-identity term is chosen (i.e., $\sigma \otimes\1$ or $\1\otimes\sigma$) and transformed into a pair of two non-identity terms $\sigma \otimes \sigma$. The probability of obtaining  such a result (conditioned on choosing such a pair) after application of the two-qubit second moment operator is $24\Delta t$, since there are in total 6 choices for $\sigma_{\alpha} \otimes \sigma_{\beta}$ to produce such a transition. Furthermore, the probability of choosing an identity and non-identity pair is given by
	${2\ell(n-\ell)}/({n(n-1)})$; hence
	\begin{equation}
	P(\ell,\ell+1)=\frac{48\ell(n-\ell)}{n(n-1)}\Delta t.
	\end{equation}	
	Finally, the probability of staying at the same weight is obtained by simply requiring the total probability to sum to unity.
	\proofend\end{proof}
It is therefore possible to reach each state of the chain, meaning that it is \emph{irreducible}. Moreover, the chain contains self loops, being hence \emph{aperiodic}. From these two properties follows that the chain is also \emph{ergodic},
thus converging to a unique stationary distribution.
\begin{lemma}[Stationary distribution of zero chain]\label{statdistr}
	The stationary distribution of the zero chain is
	\begin{equation}
	\omega_{0}(k)=\frac{3^{k}\binom{n}{k}}{4^n-1}.
	\end{equation}
\end{lemma}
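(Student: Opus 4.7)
The plan is to verify stationarity via detailed balance, exploiting the birth--death structure of the chain. Since the transition matrix $P$ from Lemma \ref{transition_matrix_P} only permits transitions $\ell \to \ell \pm 1$ or $\ell \to \ell$, the chain is a birth--death chain on $\{1,\ldots,n\}$, and any irreducible chain of this form is automatically reversible. Its unique stationary distribution is therefore pinned down by the one-step detailed balance recurrence $\omega(\ell)\,P(\ell,\ell+1)=\omega(\ell+1)\,P(\ell+1,\ell)$, so it suffices to verify this identity for the proposed $\omega_0$.

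The verification itself is a short calculation. I would substitute $\omega_0(k)=3^k\binom{n}{k}/(4^n-1)$ together with the explicit rates $P(\ell,\ell+1)\propto \ell(n-\ell)$ and $P(\ell+1,\ell)\propto (\ell+1)\ell$ from Lemma \ref{transition_matrix_P}; using the identity $\binom{n}{\ell+1}=\binom{n}{\ell}(n-\ell)/(\ell+1)$ and the arithmetic coincidence $48=3\cdot 16$ between the forward and backward prefactors, both sides of the detailed balance equation collapse to the same expression. The boundary cases $\ell=1$ and $\ell=n$ need no separate treatment, since $P(1,0)$ and $P(n,n+1)$ do not enter the state space $\Omega=\{1,\ldots,n\}$. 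Combined with irreducibility and aperiodicity (already noted above the statement), this identifies $\omega_0$ as the stationary distribution.

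It then remains only to check normalization, which follows immediately from the binomial theorem applied to $(1+3)^n=4^n$, giving $\sum_{k=1}^{n}3^k\binom{n}{k}=4^n-1$. I would also include a conceptual cross-check: $3^k\binom{n}{k}$ is the number of non-identity $n$-qubit Pauli strings with support size exactly $k$, and $4^n-1$ is the total number of non-identity Pauli strings, so $\omega_0$ is the weight profile of the uniform distribution over non-identity Pauli strings---exactly the fixed point one expects of a random walk driven by Haar-like two-qubit unitaries acting on the Pauli basis. There is no real obstacle in this proof; the only item requiring mild care is that the arithmetic $3\cdot 16=48$ between the rates is precisely what forces the geometric factor $3^k$ (rather than any other ratio) to appear in $\omega_0$.
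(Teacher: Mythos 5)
Your proof is correct and is essentially the ``straightforward calculation'' the paper alludes to, organized cleanly via detailed balance for the birth--death chain of Lemma~\ref{transition_matrix_P}. The key identity $\binom{n}{\ell}(n-\ell)=\binom{n}{\ell+1}(\ell+1)$ and the cancellation $48=3\cdot 16$ indeed make both sides of the balance equation match, and the normalization $\sum_{k=1}^{n}3^k\binom{n}{k}=4^n-1$ is the binomial theorem; the counting interpretation of $\omega_0$ as the weight profile of the uniform distribution over non-identity Pauli strings is a nice sanity check and matches the remark in the text comparing with ref.~\cite[Lemma 5.3]{RandomCircuitsLow}.
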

\begin{proof}
	This follows from straightforward calculation.
	\proofend\end{proof}
The stationary distribution is actually analogous to the one of the chain induced by a random quantum circuit under the Haar measure (see ref.\ \cite[Lemma 5.3]{RandomCircuitsLow}).
Another crucial analogy is the exact equivalence of the \emph{accelerated chain} (i.e., the chain conditioned on moving) of the two different settings. This means that, when moving, the random walk on weights is identically biased for both random quantum circuits under Haar distribution and the stochastic Hamiltonian process. From the description of Montroll and Weiss, the jumps of the random quantum circuit are contained in the fluctuating Hamiltonian evolution, spaced out by i.i.d. waiting times.
Concretely, the accelerated chain is given by
\begin{equation}\label{0chain_transition_matrix}
{P_{\rm accel}}(\ell,k)=\left \{
\begin{array}{ll}
0 & k=\ell\\
\frac{\ell-1}{3n-2\ell-1}& k=\ell-1\\
\frac{3(n-\ell)}{3n-2\ell-1}& k=\ell+1\\
0 & \text{otherwise.}
\end{array}
\right.
\end{equation}
With these analogies, we can prove the next theorem using results from the proof of ref.\ \cite[Theorem 4.2]{Decoupling}. We should take care of the parts of the proof involving the waiting time, because it is where the two walks differ. We will also deal with the permutation invariance property in a more precise and explicit way.
Now, we reformulate the result for the continuous-time case.

\begin{lemma}[Mixing condition on support size]\label{lemma:mixing_main}
	Let $P$ be the Markov chain transition matrix defined in Lemma \ref{transition_matrix_P}. For any constants $\delta \in(0,1/16), \eta \in (0,1)$ there exists a constant $\varsigma>0$ such that for $T \geq \varsigma \, n \log^2 n$ and all integers $1\leq \ell \leq n$ and $1 \leq k \leq n$, we have for large enough $n$
	\begin{equation}\label{eq_of_thm_4.2}
	\mathbb{P}\left(\left\{T,\ell,k\right\}\right)
	=
	\sum_{\nu:\left|\nu\right|=k}Q^T(\ell, k) \leq \binom{n}{k} 3^k \frac{4^{\delta n}}{4^n-1}+\frac{1}{(3-\eta)^\ell\binom{n}{\ell}}\frac{1}{{\rm poly}(n)} \ ,
	\end{equation}
	where $\left\{T,\ell,k\right\}$ is the event that an initial Pauli string with support size $\ell$, after a run time $T$, has weight equal to $k$.
\end{lemma}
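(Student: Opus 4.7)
The plan is to reduce the continuous-time statement to the already established discrete-time mixing result for random quantum circuits, using the Montroll--Weiss decomposition of the continuous-time process into a sequence of accelerated-chain jumps spaced by state-dependent waiting times. As remarked in the text before the lemma, the accelerated chain $P_{\rm accel}$ of eq.~\eqref{0chain_transition_matrix} coincides exactly with the projected zero chain arising from a random quantum circuit with Haar-random two-qubit gates. Consequently, \cite[Theorem 4.2]{Decoupling} can be invoked directly on $P_{\rm accel}$: there is $c_1 = c_1(\delta, \eta) > 0$ such that for every integer $N \geq c_1 n \log n$,
\[
(P_{\rm accel})^N(\ell, k) \leq \binom{n}{k} 3^k \frac{4^{\delta n}}{4^n - 1} + \frac{1}{(3-\eta)^\ell \binom{n}{\ell}} \frac{1}{{\rm poly}(n)} .
\]

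I would then express the continuous-time transition probability as a mixture over the number of jumps $N_T$ up to time $T$:
\[
Q^T(\ell, k) = \sum_{N \geq 0} \mathbb{P}\bigl(N_T = N \,\big|\, \text{start at } \ell\bigr) \cdot \overline{(P_{\rm accel})^N}(\ell, k),
\]
where the overline indicates the appropriate average with respect to the trajectory-dependent waiting times (these decouple from the embedded jump chain). Splitting the sum at $N_\star \coloneqq c_1 n \log n$, the tail $N \geq N_\star$ contributes at most the bound displayed above, while the low-$N$ contribution is bounded by $\mathbb{P}(N_T < N_\star)$. It therefore suffices to show $\mathbb{P}(N_T < N_\star) \leq 1/{\rm poly}(n)$ when $T \geq \varsigma n \log^2 n$, and to absorb this contribution into the error term in eq.~\eqref{eq_of_thm_4.2}.

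The main technical step is the waiting-time analysis. The exit rate from state $\ell$ is $r_\ell = 16\ell(3n-2\ell-1)/(n(n-1))$, which is $\Theta(1)$ once $\ell = \Theta(n)$ but only $\Theta(1/n)$ for constant $\ell$. Naively this would inflate $T$ by a factor of $n$ over the discrete count. To avoid this, I would exploit the strong upward drift of $P_{\rm accel}$ at small weights (since $P_{\rm accel}(\ell, \ell+1) \to 1$ as $\ell \to 1$): a coupling argument shows that, starting from any $\ell$, the embedded chain hits the regime $\{\ell \geq c_2 n\}$ within $O(\log n)$ jumps with probability $1-1/{\rm poly}(n)$. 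Once there, all subsequent exit rates are $\Omega(1)$, so the remaining waiting times are stochastically dominated by i.i.d.\ exponentials of rate $\Omega(1)$, and standard Chernoff concentration gives $N_T \geq N_\star$ with probability $1-1/{\rm poly}(n)$ whenever $T \geq \varsigma n \log^2 n$ for a suitable $\varsigma$. The extra logarithmic factor in $T$ compared to the discrete-time step count is precisely the budget needed to absorb the initial slow phase at low weights.

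The principal obstacle is this waiting-time control at low-weight states, since the rate inhomogeneity is the only place the continuous-time walk differs from its discrete counterpart; everything else, including the combinatorial identification of the dominant term $\binom{n}{k} 3^k \, 4^{\delta n}/(4^n - 1)$ and the spectral error $1/((3-\eta)^\ell \binom{n}{\ell})$, transfers verbatim from \cite{Decoupling} via the identity of the accelerated chains. Collecting the two contributions then yields the claim.
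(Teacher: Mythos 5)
Your overall scaffold -- Montroll--Weiss decomposition of the continuous-time walk into the embedded accelerated chain plus state-dependent waiting times, identification of $P_{\rm accel}$ with the zero chain of the Haar random circuit, and transfer of the discrete bound from~\cite{Decoupling} -- is exactly the route the paper takes, and the case split in the paper (initial weight in $[r_-,r_+]$ giving the stationary term, $\ell<r_-$ requiring a waiting-time bound, $\ell>r_+$ being easy because all rates are $\Omega(1)$) closely mirrors your split at $N_\star$. But the technical core of your argument, the waiting-time control at low weights, contains a fatal quantitative error and a gap.

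The error: you claim ``starting from any $\ell$, the embedded chain hits the regime $\{\ell\geq c_2 n\}$ within $O(\log n)$ jumps with probability $1-1/{\rm poly}(n)$.'' This is impossible: the zero chain is a nearest-neighbour walk, so even with a perfectly deterministic upward drift it needs at least $c_2 n - \ell = \Omega(n)$ jumps to traverse from $\ell=O(1)$ to $c_2 n$. With the figure corrected to $\Theta(n)$ jumps, the slow phase is no longer a negligible prefix you can pay for with a single extra $\log n$: the exit rate at state $k$ is $\Theta(k/n)$, so the expected waiting time per visit to state $k$ is $\Theta(n/k)$, and the expected total dwell time below $c_2 n$ is already $\sum_{k\leq c_2 n}\Theta(n/k)\cdot(\text{visits to }k)=\Theta(n\log n)$ even if each low site is visited $O(1)$ times, and $\Theta(n\log^2 n)$ once repeat visits are taken into account. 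In other words, the entire $\varsigma n\log^2 n$ budget is consumed by the low-weight phase, not just a logarithmic fraction of it.

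The gap: ``standard Chernoff concentration'' does not apply because the waiting times are not i.i.d.\ -- their rates depend on the trajectory of the embedded chain, and in particular are very small ($\Theta(1/n)$) precisely where the walk is most likely to linger. What is needed is control of how often each low-weight state is revisited before the walk escapes, and then a moment-generating-function bound for the corresponding non-identically-distributed exponential sum. This is exactly what the paper supplies: it introduces the event $H$ that no state $j$ is visited more than $O(j\log n)$ times (Appendix C), conditions also on the minimum visited state $M=m$, and then proves Lemma~\ref{lemma:conditioned_waiting_time} by an explicit MGF estimate, obtaining a bound that is subsequently averaged over the distribution of $M$. You have not provided, or even named, an analogue of this visit-count control, and your argument collapses without it. So while your decomposition via $N_T$ is a clean way to phrase the reduction to~\cite{Decoupling}, the missing ingredient is precisely the hardest part of the proof.
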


\begin{proof}
	We start by defining the following points,
	\begin{equation}
	r_{-}\coloneqq \left(\frac{3}{4}-\delta\right)n \quad and \quad r_{+}\coloneqq \left(\frac{3}{4}+\delta\right)n .
	\end{equation}
	Then, considering ref.\ \cite[eq.~(20)]{Decoupling}, it follows that for an initial weight of $\ell \in \left[r_{-},r_{+}\right]$
	\begin{equation}
	\mathbb{P}(\left\{ T,\ell,k \right\}) \leq \binom{n}{k} 3^k \frac{4^{\delta n}}{4^n-1}
	\end{equation}
	for any $T > 0$.

	To deal with the case $\ell \in \left[1,r_{-}\right)$, for random quantum circuits it has been shown that the probability that the interval $\left[r_{-},r_{+}\right]$ of the state space has been reached is very high for a number of gates $\O(n\log^2 n)$. Here we prove the same scaling result for the run time of the continuous-time process, that is, the total waiting time between the jumps can be bound with the following lemma.
	
	\begin{lemma}[Waiting time]\label{lemma:waiting_time}
		\begin{equation}
		\mathbb{P}(E_W^c) \coloneqq \mathbb{P}(W_{r_{-}}> \varsigma\,n \log^2 n)
		\leq
		\frac{1}{(3-\eta)^\ell\binom{n}{\ell}}\frac{1}{{\rm poly}(n)}
		\end{equation}
		for some sufficiently large $\varsigma$.
	\end{lemma}

	The proof of the lemma is postponed to Appendix~\ref{app:proof_Lemma_waiting_time} to help readability. The case that remains to be discussed is the one of an initial Pauli string with support size $\ell \in \left(r_{+},n\right]$ to reach $\left[r_{-},r_{+}\right]$; again the analysis is divided on accelerated steps and waiting times.
	Regarding the former, the probability of going backward is larger than the one of moving forward starting from point $z$ with
	\begin{align}
	P(z,z+1)&\overset{!}{=}P(z,z-1),\\
	\frac{z-1}{3n-2z-1}&=3\frac{n-z}{3n-2z-1},
	\end{align}
	from which follows that
	\begin{equation}
	z=\frac{3}{4}n+\frac{1}{4}.
	\end{equation}
	This means that for any $n>{1}/({4\delta})$ the probability of moving backward at each site of region $\left(r_{+},n\right]$ is at least $1/2+\epsilon$ for some $\epsilon>0$, and again using the argument for the case with $\ell<r_{-}$ the probability of not reaching $r_{+}$ in $S\leq s$ steps is upper bounded by an exponential decreasing function for  $s \geq \phi'n$ for sufficiently large $\phi'$.
	In this instance, all waiting times are stochastically dominated by parameter $p(3n/4)=12$, hence there is no necessity to define an event equivalent to $H$. For $S \leq s$ accelerated steps, using again a Markov's inequality, the bound on the total waiting time is exponentially decreasing in $s$ for a run time
	$W_{r_{+}}>({\log 2}/{6})s$.
	The proof of Lemma \ref{lemma:mixing_main} is then complete.
	\proofend\end{proof}

\subsection{From the zero chain to the full distribution}

Once the weight distribution has reached an equilibrium such that the condition in eq.~\eqref{eq:bound_weight} is fulfilled, we need to show that all Pauli strings sharing the same weight have a similar probability. To prove this, we need to show that almost all Pauli strings with the same support but different Pauli labels $\{1,2,3\}$ are equivalent in probability. This, together with the permutation invariance property assumed for the initial state, which is conserved during the whole stochastic Hamiltonian process, will bring us to the desired result.

Let $M$ be the Markov chain on the first $n$-qubits induced by $M_{n,\Delta t}^2$, and define an accelerated version as
\begin{equation}\label{accel_chain_2}
A \coloneqq \frac{1}{36\Delta t}(M-(1-36 \Delta t)\mathcal{I}) .
\end{equation}
If we  define an operator
\begin{equation}
R=\frac{2}{n(n-1)}\sum_{j<k}R_{j,k} \ ,
\end{equation}
where $R_{j,k}$ randomises one qubit site in the following way,
\begin{equation}\label{randomize}
R_{j,k}(\sigma_\mu^j\otimes\sigma_\nu^k)=\left \{
\begin{array}{ll}
\frac{1}{3} \sum_{\alpha={1,2,3}} \sigma_\alpha^j \otimes \1^k & \text{if } \mu\neq0,\nu=0,\\
& \\
\frac{1}{3} \sum_{\alpha={1,2,3}} \1^j \otimes \sigma_\alpha^k & \text{if } \mu=0,\nu\neq 0,\\
& \\
\frac{1}{6} \sum_{\alpha={1,2,3}} \sigma_\alpha^j \otimes \sigma_\nu^k + \frac{1}{6} \sum_{\alpha={1,2,3}}
\sigma_\mu^j \otimes \sigma_\alpha^k & \text{if } \mu\neq 0,\nu \neq 0,\\
& \\
\1^j\otimes \1^k & \text{if } \mu=\nu=0,
\end{array}
\right.
\end{equation}
then according to Section \ref{sec:two-qubit}, the accelerated chain can be written as
\begin{equation}\label{acc_chain}
A=\frac{1}{3}R+\frac{2}{3}L,
\end{equation}
where
\begin{equation}
L=\frac{2}{n(n-1)}\sum_{j<k}L_{j,k}
\end{equation}	
and
\begin{equation}\label{transposition_invariant_chain}
L_{j,k}(\sigma_\mu^j\otimes\sigma_\nu^k)=\left \{
\begin{array}{ll}
\frac{1}{6} \sum_{\alpha={1,2,3}} \sigma_{\mu+1}^j \otimes \sigma_{\alpha}^k+\frac{1}{6} \sum_{\alpha={1,2,3}} \sigma_{\mu+2}^j \otimes \sigma_{\alpha}^k & \text{if } \mu\neq0,\nu=0,\\
&\\
\frac{1}{6} \sum_{\alpha={1,2,3}} \sigma_{\mu}^j \otimes \sigma_{\nu+1}^k  + \frac{1}{6} \sum_{\alpha={1,2,3}} \sigma_{\mu}^j \otimes \sigma_{\nu+2}^k  & \text{if } \mu=0,\nu\neq 0,\\
&\\
\frac{1}{12}\left( \sigma_{\mu+1}^j \otimes \sigma_\nu^k +\sigma_{\mu+2}^j \otimes \sigma_\nu^k + \sigma_\mu^j \otimes \sigma_{\nu+1}^k +\sigma_{\mu}^j \otimes \sigma_{\nu+2}^k \right) & \\
+\frac{1}{6} \left( \sigma_{\mu+1}^j \otimes \1^k+ \sigma_{\mu+2}^j \otimes \1^k+ \1^j \otimes \sigma_{\nu+1}^k + \1^j \otimes \sigma_{\nu+2}^k \right) & \text{if } \mu\neq 0,\nu \neq 0 ,\\
&\\
\1^j\otimes \1^k & \text{if } \mu=\nu=0,
\end{array}
\right.
\end{equation}
with the notation $\sigma_{3+1}=\sigma_{2+2} = \sigma_{1}$ and $\sigma_{3+2}=\sigma_{2}$.
Note that $R$ does not produce any change in the weight or transpositions between identities and non-identity elements, it solely performs a local randomisation of the Pauli labels. This means that only the chain $L$ is responsible for the random walk on the weights.

We would like to upper bound the probability that more than $\beta n$ sites have not been randomised after $s$ steps of chain $R$ (we denote the complement of this event as $E_R$).
Knowing that there are $\binom{n}{\beta n}$ such regions, this is given by union bound
\begin{equation}\label{beta_event}
\mathbb{P}(E_{R}^c) \leq \binom{n}{\beta n} (1-\beta)^{s} \leq
2^{h(\beta) n} \; \e^{-\beta s},
\end{equation}
where $h:[0,1]\rightarrow [0,1]$ is the binary entropy function.
This probability can then be upper bounded by an arbitrary exponentially decreasing function in $n$ for some $s=\O(n)$.
Hence, to ensure that $s$ randomisations have been performed to fulfill the event $E_R$ with  sufficiently large probability, given eq.~\eqref{acc_chain} and by application of an Hoeffding's inequality follows that it is again sufficient to apply $\O(n)$ steps of the accelerated chain $A$.
Since the waiting time is dominated by an exponential distribution with parameter $36$, the bound on the probability for the waiting time of this process to exceed $W_R =\varsigma_R \, n$ can be bounded by an arbitrarily exponentially decreasing function in $n$ for a sufficiently large $\varsigma_R$ with the same argument used for the random walk on weights when starting from $\ell> r_+$.


In conclusion, assuming that event $E_W$ and $E_{R}$ have been satisfied, we have for $\gamma<\gamma_0 \leq 1/2$ :
\begin{enumerate}
	\item For strings $\nu$ with support size $k \leq\gamma_0\,n$,
	\begin{equation}
	Q^T(\mu,\nu)
	\leq
	\sum_{\left|\nu\right|=k}Q^T(\mu,\nu)
	\leq
	\binom{n}{\gamma_0 \, n}3^{\gamma_0 \, n}\frac{4^{\delta n}}{4^n-1} \leq 2^{n\, h(\gamma_0)}\,3^{\gamma_0\, n} \frac{4^{\delta n}}{4^n-1}.
	\end{equation}
	\item For strings $\nu$ with support size $k \geq (1-\gamma_0)n$, given event $E_{R}$ at least $(1-\beta)n$ sites of the support have been uniformly randomised, hence
	\begin{equation}
	Q^T(\mu,\nu)
	\leq
	\frac{1}{3^{k-\beta \, n}}\sum_{\left|\nu\right|=k}Q^T(\mu,\nu)
	\leq
	2^{n\, h(\gamma_0)}\,3^{\beta \, n} \frac{4^{\delta n}}{4^n-1}.
	\end{equation}
	\item For  strings $\nu$ with support size $\gamma_0 \, n < k=\kappa n < (1-\gamma_0)n$ such that $\kappa-\gamma_0=O(1)$ (otherwise, we can apply slightly modified versions of the bounds in the two previous cases),  given event $E_{R}$ at least $(1-\beta)n$ sites of the support have been uniformly randomised. In addition, if we assume the $\gamma$-permutation invariance property for the initial string $\sigma_\mu$, we obtain
	\begin{equation}
	Q^T(\mu,\nu)
	\leq
	\frac{1}{3^{k-\beta \, n}}\frac{1}{\binom{(1-\gamma)n}{k-\gamma n}} \sum_{\left|\nu\right|=k}Q^T(\mu,\nu)
	\leq
	3^{\beta \, n}\, \left[\frac{1}{\kappa-\gamma_0}\right]^{\gamma_0\, n}\; \frac{4^{\delta n}}{4^n-1}.
	\end{equation}
\end{enumerate}
Now, for an appropriate choice of $\beta$ and $\gamma_0$,
\begin{equation}
Q^T(\mu,\nu) \leq \frac{5^{\delta n}}{4^n-1}
\end{equation}
for all $\mu$ and $\nu$.

Also, having proven that there exists $\varsigma$ such that, for all $T \geq \varsigma n \, \log^2 n$, $\mathbb{P}(E_{R}^c)$ is bounded by an exponentially decreasing function in $n$ and
\begin{equation}
\mathbb{P}(E_W^c)\leq \frac{1}{(3-\eta)^\ell\binom{n}{\ell}}\frac{1}{{\rm poly}(n)} .
\end{equation}
Having proven that, if both event have been satisfied and the permutation invariance property is assumed, we have
\begin{equation}
Q^T(\mu,\nu) \leq \frac{5^{\delta n}}{4^n-1}
\end{equation}
for all $\mu$ and $\nu$, we conclude the proof for the main Lemma \ref{thm_4.1}.

As mentioned in the main result section, the decoupling Theorem holds for all states which are invariant with respect to any permutation on $(1-\gamma)n$ qubits, in the sense of Definition~\ref{3P}, and not only for Pauli strings taken singularly.
Consider a set of 
$\min \left\{ \binom{n-\gamma n}{\ell -\gamma n} ,\binom{n-\gamma n}{\ell } \right\} \leq b_{\ell,\gamma} \leq \max \left\{ \binom{n-\gamma n}{\ell -\gamma n},\binom{n-\gamma n}{\ell } \right\}$ 
Pauli strings $\set{\sigma_\mu}_\mu$ with support size $\ell$ which is invariant with respect to any of such permutations.
Assuming that the above events have been satisfied, at least the same number of qubits in the final Pauli strings $\set{\sigma_\nu}_\nu$ is invariant with respect to permutations since the stochastic evolution preserves this property. Hence, for the argument from the previous subsection, we have: 
\begin{equation}
\sum_{\mu } Q^T(\mu,\nu) \leq b_{\ell,\gamma} \,  \frac{5^{\delta n}}{4^n-1} 
\end{equation}
This, together with the fact that $\Tr[\sigma_\mu \rho]$ is the same for all strings related by these permutations, allows to apply the proof in ref.~\cite{Decoupling} for the decoupling Theorem for all density states $\rho$ composed by permutation invariant sets of Pauli strings.

\section{Conclusions and outlook}

In this work, we have investigated mixing properties of fluctuating local Hamiltonian evolutions, establishing a connection with random quantum circuits. The two settings
differ on the distribution over unitary group: in the random quantum circuits considered in other works two-qudit gates are chosen from the Haar measure or a fixed distribution. The discretised stochastic Hamiltonian is described by local terms weighted random coefficients also generating a gate set. However, the gate set depends on the discrisation which required an involved analysis of the gap of the local moment operators.
We show that scaling in the system size in order to obtain an approximate unitary $k$-design are compatible in the two settings: the total run time of the diffusion process provides a faster mixing time, by a factor of $n$, in comparison of a local random quantum circuit due to the larger number of interactions per time step, but the two scenarios display the same scaling when they interact with two qubits only at each step. In this way, we provide a unifying framework of random quantum processes.

In order to bound the gap of the local moment operator, we have made use of and further developed
tools from representation theory, significantly going beyond uses of representation theory in related contexts \cite{RandomHamiltonian,Plenio}.
With this, we analyse how quickly the diffusion on the unitary group induced by the local stochastic Hamiltonian mixes, where the local gap characterises the speed of the diffusion. The gap can be lower bounded by an expression which is entirely
independent of the number of copies $k$ of the system, which constitutes a possibly surprising result in its own right.

In the framework of a continuous-time random walk on weights induced by the stochastic Hamiltonian evolution, we prove a decoupling theorem with almost linear scaling in $n$, already shown to be valid for random quantum circuits. The exact correspondence between the accelerated steps of the walk derived from the random circuit and the jumps of the continuous-time random walk originating from the stochastic Hamiltonian is a strong element of similarity: we can consider the steps of the circuit as if they were dispersed within the continuous-time process and spaced out by i.i.d.\ waiting times. Again, a unifying picture is hence provided.

All these results allow us to unify in one single mathematical framework  random quantum processes in the
form of quantum circuits and continuous-time phenomena governed by time-fluctuating Hamiltonians.
This is of interest for both a pragmatic and application-oriented \cite{RandomCircuitsLow,Speedups,SuperpolySpeedup,BoutenHandel,BrandaoHarrowHorodecki,Pseudorandom,SzeDuTomRen13,Decoupling} as
well as  a conceptual point of view \cite{FastScrambling,FastScramblingConjecture,Scrambling}, indeed giving guidance on how fast
time-fluctating processes lead to mixing or ``fast scrambling''.

Given the close connection of fluctuating processes with classes of local dissipative processes,
we also gain new insights into the impact of dissipation to quantum many-body dynamics. Turning the logic of approximating the Haar measure upside down, this
work shows how dynamics can deviate from the uniform measure without affecting its mixing properties. It is the hope that the present
work stimulates further research on random quantum processes, both as far as the mathematical development and the exploration of its
implications are concerned.

\section{Acknowledgements}
We thank  M.\ Horodecki, D.\ Gross, H.\ Wojew\'odka and I.\ Roth for fruitful discussions and
acknowledge support from the EU (RAQUEL, AQuS), the DFG (CRC 183, EI 519/7-1), the Templeton Foundation 
(RQ-35601),
the ERC (TAQ), 
and the BMBF (Q.com).

\setlength{\bibsep}{2pt}
\renewcommand{\bibfont}{\small}

\bibliographystyle{is-unsrt}

\appendixtitleon
\appendixtitletocon
\begin{appendices}

\section{(Young diagrams)}\label{app:Young_diagrams}

In order to study the decomposition of the mixed tensor representation $\pi_{k,k}$, we make use of  \emph{Young diagrams} for $\su(N)$. These are arrays of boxes arranged in $N-1$ left-justified rows whose length is non-increasing from top to bottom, each of them connected to an irreducible representation, e.g.,

\vspace{.35cm}

\begin{equation*}\ytableausetup{boxsize=0.85em, aligntableaux=center}
\ydiagram{1,1,1,1} \hspace{3cm} \ydiagram{3,2,1} \hspace{3cm}\ydiagram{2,2,2} \hspace{3cm} \ydiagram{5,2} \ .
\end{equation*}

\vspace{.35cm}

In particular, the following holds true.
\begin{itemize}
	\item The Young diagram of the fundamental representation is given by one single box $\ytableausetup{boxsize=0.85em, aligntableaux=center}
	\ydiagram{1}$\, .
	\item The trivial representation does not have any box; we can denote it by $\emptyset$.
	\item The adjoint representation is given by a column of $N-1$ boxes and a second column made of a single box. For example, the adjoint representation of $\su(5)$ is given by
	\begin{equation*}\ytableausetup{boxsize=0.85em, aligntableaux=center}
	\ydiagram{2,1,1,1} \, .
	\end{equation*}
	
	\item The conjugate representation of a Young diagram whose first row contains $\ell$ boxes is given by the complementary diagram (rotated by 180 degrees) shaping the rectangle of $N$ rows and $\ell$ columns. For example, for $\su(5)$ the conjugate representation of
	\begin{equation*} \ytableausetup{boxsize=0.85em, aligntableaux=center}
	\ydiagram{3,2,2,1} \hspace{1cm}  \text{is}  \hspace{1cm}  \ydiagram{3,2,1,1}  \hspace{1cm} \text{since they build} \hspace{1cm} \ydiagram[*(white)]{3,2,2,1}*[*(gray)]{3,3,3,3,3}.
	\end{equation*}
	
	Note that the conjugate diagram of the fundamental representation is given by a single column of $N-1$ boxes, while the adjoint representation is self-conjugate.
\end{itemize}

Young diagrams are particularly helpful when decomposing the tensor product of two representations into a direct sum of irreducible representations.
Here, one follows two steps: first, one combines the boxes of the two diagrams by adding, one at a time, all boxes in the first row of the second diagram to the first one, respecting the condition of non-increasing length from top to bottom for the rows of the newly created diagrams and remembering that each of them can have at most $N$ rows. One repeats the procedure for all rows in the second diagram. As a second step, one discards all diagrams which do not satisfy specific rules that we are not going to mention here; for a full description, see ref.~\cite{Georgi}. Furthermore, for the algebra $\su(N)$ all columns with $N$ boxes occurring in a diagram can be deleted.

Recalling that the tensor product of the fundamental representation and its conjugate can be decomposed as a direct sum of the trivial and the adjoint representation and taking again $\su(5)$ as an example, we have
\begin{equation}\ytableausetup{boxsize=0.85em, aligntableaux=center}
\overline{U} \otimes U =\overline{\pi}_f  \otimes \pi_f
=
\quad  \ydiagram{1,1,1,1} \quad \otimes \quad \ydiagram{1} \quad
=
\quad \emptyset \quad \oplus \quad \ydiagram{2,1,1,1} \quad
=
\pi_1 \oplus \pi_{\mathrm{ad}} \, ,
\end{equation}
since a diagram with a column of $N=5$ boxes is equivalent to the trivial representation.

An alternative way to express an irreducible representation of $\su(n)$ is to associate a  \emph{Dynkin label} $\left(\lambda_1,\lambda_2,\dots,\lambda_{N-1}\right)$, where $\lambda_b$ gives the number of columns made of $b$ boxes. For instance, the fundamental representation is given by the label $(1,0,\dots,0)$ and the adjoint representation by  $(1,0,\dots,0,1)$.

\section{(Stochastic processes and Markov chains)}\label{app:stochastic_processes}

A (discrete) stochastic process with a sequence of random variables $X_1,X_2,\dots$ whose next step depends solely on the current state is called
a
\emph{Markov chain}. We consider a countable set of values $\Lambda=\left\{\lambda_1,\lambda_2,\dots\right\}$ which the variables $X_j$ can assume during the process and denote it as \emph{state space}.
For the variable $X_j$ we can then assign a \emph{probability distribution} $\omega_j=(\omega_j^1,\omega_j^2,\dots)$ where $\omega_j^k=\mathbb{P}(X_j=\lambda_k)$.
If the state space is finite, the transition from $X_j$ to $X_{j+1}$ can be described by a \emph{transition matrix} $P_j$ with entries
\begin{equation}
p_{a,b}=\mathbb{P}(X_{j+1}=\lambda_b | X_j=\lambda_a)
\end{equation}
such that we have
\begin{equation}
\omega_{j+1}=\omega_j\,P_j.
\end{equation}
If the process is homogeneous, then each transition is governed by the same transition matrix $P$, and
\begin{equation}
\omega_n=\omega_0\,P^n.
\end{equation}
The \emph{stationary distribution} of the process $\omega$ satisfies
\begin{equation}
\omega=\omega\,P
\end{equation}
and can hence be regarded as a fixed point of the chain.

For an ergodic chain, we refer as the \emph{mixing time} of the chain to the number of steps required to reach closeness to the stationary distribution.
For two arbitrary distributions $\omega$ and $\eta$, the \emph{total variation distance} is given by
\begin{equation}
\left\|\omega-\eta\right\|_{TV}=\frac{1}{2}\left\|\omega-\eta\right\|_1=\frac{1}{2}\sum_j\left|(\omega)_j-(\eta)_j\right| .
\end{equation}
Then the  \emph{mixing time} is defined as
\begin{equation}
\tau(\varepsilon) \coloneqq \max_{\omega_0}\min_{t\geq 0}\left\{t\   :  \left\|\omega_0 P^t-\omega \right\|_{TV}\leq \varepsilon\right\} \ ,
\end{equation}
where $\omega_0$ is the initial probability distribution and $\omega$ the stationary distribution.

The function $\left\{N(t)\ : \ t\geq 0\right\}$ counting the number of jumps occurred up to
the positive time $t$ defines a  \emph{Poisson process} if the following properties are satisfied.
\begin{enumerate}
	\item $N(0)=0$.
	\item The increments are independent and stationary.
	\item Each increment $N(t+\Delta t) - N(t)$ is distributed as a Poisson random variable with parameter (mean) $\lambda t$.
\end{enumerate}
The last condition implies that $\EE[N(t)] = \lambda t$ and, in particular, the probability that two or more jumps occur in the time interval is negligible when it is small.

The waiting time $W$ between two consecutive jumps is then described by an  \emph{exponential distribution}, having
for $\lambda>0$ a  \emph{cumulative distribution function}
\begin{equation}
\mathbb{P}(W\leq t)=1-\e^{-\lambda\,t}
\end{equation}
and a  \emph{probability density function}
\begin{equation}
\mathit{f}(t)=\lambda\,\e^{-\lambda\,t}.
\end{equation}

\section{(Proof of Lemma~\ref{lemma:waiting_time})}\label{app:proof_Lemma_waiting_time}

		To prove this result on the waiting time, we first assume that we reach the region $\left[r_{-},r_{+}\right]$ within $S\leq s$ accelerated steps for some $s=\O(n)$ and we bound the probability that the waiting time exceeds $\varsigma\,n \log^2 n$. We will deal with the case of $\mathbb{P}(S>s)$ afterwards.
		Now, let $M$ be the smallest site visited during the walk, and let $\left\{y_i\right\}_{i=1}^S$ be a sequence of accelerated steps where $S \leq s$, with waiting times $\left\{W_i\right\}_{i=1}^S$ respectively, satisfying the event
		\begin{equation}
		H=\bigcap_{j=1}^n\left[\sum_{k=1}^S\mathbb{I}(X_k \leq j) \leq z j/\mu \right],
		\end{equation}
		where $\mathbb{I}$ is the indicator function and $X_k$ is the random variable assuming values in $\Omega=\{1,2,\dots,n\}$
		describing the state of the chain at step $k$ and $z$ chosen as $\O(\log n)$. In words, this means that, if $H$ occurs, then no site has been visited ``too often''. This is a useful event, since the smaller is the value of the current state of the chain, the smaller is the parameter of the exponential distribution dominating the waiting time. Namely, we have
		\begin{equation}
		1-P(k,k)=\frac{16k(3n-2k-1)}{n(n-1)} d t \geq \frac{16k}{n}\Delta t .
		\end{equation}
		So, dealing with three events, we consider the bound
		\begin{align}
		\mathbb{P}(W>t) &= \mathbb{P}(W>t\ \cap \ H\ \cap\ S\leq s)+\mathbb{P}(W>t\ \cap \ H\ \cap\ S> s)\nonumber \\
		&+\mathbb{P}(W>t\ \cap \ H^c\ \cap\ S\leq s)+\mathbb{P}(W>t\ \cap \ H^c\ \cap\ S> s) \nonumber\\
		& \leq \mathbb{P}(W>t\ | \ H\ \cap\ S\leq s)+\mathbb{P}(H\ \cap\ S> s) \nonumber\\
		&+\mathbb{P}(H^c\ \cap\ S\leq s)+\mathbb{P}(H^c\ \cap\ S> s) \nonumber\\
		&\leq \mathbb{P}(W>t\ | \ H\ \cap\ S\leq s) + \mathbb{P}(H^c\ |\ S \leq s)+\mathbb{P}( S > s) \label{eq:3_events}.
		\end{align}
		Conditioning on the two previous event and setting $M=m$ for arbitrary $m \in \left\{1,\dots,\ell \right\}$, we have to find an upper bound for the waiting time being too large; more precisely for a given run time $t$, we show:
		\begin{lemma}[Waiting time conditioning on event $H$]\label{lemma:conditioned_waiting_time}
			\begin{equation}
			\max_{\left\{y_i \right\}}\; \mathbb{P}\left( W(y_1)+\dots+W(y_S)\geq t\ \big| \ M=m\; ,\, H \right) \leq \e^{-\frac{8k}{n}t}\; 2^{z m / \mu} \e^{z m / (2 \mu)\, \log n}.
			\end{equation}
		\end{lemma}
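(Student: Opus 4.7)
The strategy is to apply an exponential Markov inequality (Chernoff bound) to the sum of waiting times, then exploit both the uniform rate lower bound from $M=m$ and the cumulative occupancy constraint encoded in $H$. Conditioned on the realized trajectory $\{y_i\}_{i=1}^S$ of the accelerated chain (with $S\le s$), the waiting times $W(y_i)$ are independent exponential random variables with rates $\lambda(y_i)=16y_i(3n-2y_i-1)/(n(n-1))\ge 16y_i/n$, as noted in the setup of the lemma. Hence for any $\beta\in(0,\min_i\lambda(y_i))$,
\begin{equation}
\mathbb{P}\Bigl(\textstyle\sum_{i=1}^S W(y_i)\ge t\,\big|\,\{y_i\}\Bigr)\le e^{-\beta t}\prod_{i=1}^S\frac{\lambda(y_i)}{\lambda(y_i)-\beta}.
\end{equation}
The conditioning $M=m$ forces $y_i\ge m$, so picking $\beta=8m/n$ already yields the decay factor $e^{-8m t/n}$ while keeping $\beta\le\lambda(y_i)/2$ for every step, ensuring that the per-step MGF factor $\lambda(j)/(\lambda(j)-\beta)$ is well defined for $y_i=j\ge m$.

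The main technical point is to bound $\prod_i\lambda(y_i)/(\lambda(y_i)-\beta)$ uniformly over paths compatible with $H$. Writing $n_j$ for the number of visits to site $j$ and $N_j=\sum_{k\le j}n_k$, the event $H$ delivers $N_j\le zj/\mu$ for all $j$, and $M=m$ gives $N_{m-1}=0$. Using the elementary inequality $-\log(1-x)\le 2x\log 2$ valid on $[0,1/2]$, the logarithm of the product is bounded by
\begin{equation}
\sum_{j\ge m} n_j\log\frac{\lambda(j)}{\lambda(j)-\beta}\le(\log 2)\,m\sum_{j\ge m}\frac{n_j}{j}.
\end{equation}
Summation by parts then converts the cumulative constraint into
\begin{equation}
\sum_{j=m}^{n}\frac{n_j}{j}=\frac{N_n}{n}+\sum_{j=m}^{n-1}\frac{N_j}{j(j+1)}\le\frac{z}{\mu}\Bigl(1+\sum_{j=m}^{n-1}\frac{1}{j+1}\Bigr)\le\frac{z}{\mu}(1+\log n),
\end{equation}
which depends on the path only through the uniformly controlled quantities $N_j$. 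Exponentiating and combining with the decay factor produces a bound of the claimed form $e^{-8mt/n}\cdot 2^{zm/\mu}\cdot e^{(zm/(2\mu))\log n}$, up to the explicit numerical constants in the exponent.

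The principal obstacle will be the summation-by-parts step: a naive estimate such as $n_j\le zn/\mu$ would yield only something like $n^{O(zn/\mu)}$, which is far too weak to survive the subsequent union bound in eq.~\eqref{eq:3_events} over all starting weights $m$. It is essential that $H$ controls the \emph{cumulative} number of visits to initial segments $\{1,\dots,j\}$ rather than just a total step count, so that the small-$j$ sites — which give the large MGF factors — are the most tightly constrained. Once this refined bookkeeping is in place the maximum over $\{y_i\}$ is automatic, since the resulting bound no longer depends on the detailed trajectory.
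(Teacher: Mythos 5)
Your argument is correct and is, at its core, the same computation as the paper's: both condition on the realized trajectory, apply the exponential Markov inequality to the sum of independent exponential waiting times with the same tilt parameter $\alpha=8m/n$ (valid because $M=m$ forces $\lambda(y_i)\geq 16m/n=2\alpha$), and then exploit the fact that the event $H$ caps the number of visits to the low-index, slow-rate sites. Where you diverge is in how the moment-generating-function product $\prod_i\lambda(y_i)/(\lambda(y_i)-\alpha)$ is controlled. The paper asserts that the extremal trajectory under $H$ is the one that visits site $m$ exactly $zm/\mu$ times and each site $j>m$ exactly $z/\mu$ times (i.e.\ saturates every constraint $N_j\leq zj/\mu$), and evaluates the product on that explicit path. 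You instead bound the product uniformly over all $H$-compatible paths by Abel summation against the cumulative constraints $N_j\leq zj/\mu$; this is arguably cleaner since it does not require justifying that the claimed path is in fact the maximiser of the product. The trade-off is a slightly weaker numerical constant in the exponent: the elementary estimate $-\log(1-x)\leq 2x\log 2$ on $[0,1/2]$ gives you $\exp\bigl((\log 2)\tfrac{zm}{\mu}\log n\bigr)$ where the paper has $\exp\bigl(\tfrac{zm}{2\mu}\log n\bigr)$, because the paper implicitly gains a factor of roughly two from the staggered denominators $2k-m$ (which increase in steps of two as $k$ increments). You flag this yourself, and indeed the constant only rescales the admissible $\varsigma$ in $T\geq\varsigma\,n\log^2 n$, so it has no effect on the downstream use of the lemma. (Note also that the $e^{-8k t/n}$ in the lemma's display is a typo for $e^{-8m t/n}$, which is what both you and the paper's proof actually produce.)
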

		
		\begin{proof}  \emph{(Proof of Lemma \ref{lemma:conditioned_waiting_time})}
			We recall that this is the exactly the sequence visiting $m$ for $z m/ \mu$ (for simplicity, we assume it to be an integer) times and all other $j>m$ sites for $z / \mu$ times, hence
			\begin{equation}
			W(y_1)+\dots+W(y_S) \leq \sum_{i=1}^{z m/ \mu} E_{m,i}+\sum_{i=1}^{z/ \mu} \sum_{k=m+1}^r E_{k,i},
			\end{equation}
			where $E_{k,i}$ are i.i.d.\ exponential distributions with parameter $p(k)= {16k}/{n}$.
			Now applying Markov's inequality we obtain
			\begin{align}
			\mathbb{P}\left( \sum_{i=1}^{z m/ \mu} E_{m,i}+\sum_{i=1}^{z/ \mu} \sum_{k=m+1}^r E_{k,i} > t \ \right) & \leq \frac{ \E \left[ \exp \left\{\alpha \left(\sum_{i=1}^{z m/ \mu} E_{m,i}+\sum_{i=1}^{z/ \mu} \sum_{k=m+1}^r E_{k,i} \right)\right\} \right]  }{\e^{\alpha t}}\\
			&= \e^{-\alpha t} \; \left(\frac{p(m)}{p(m)-\alpha}\right)^{z m / \mu}\prod_{k=m+1}^r\left(\frac{p(k)}{p(k)-\alpha}\right)^{z/ \mu}
			\nonumber
			\end{align}
			for $\alpha<p(m)$. Let us choose $\alpha={p(m)}/{2}$, then we have
			\begin{align}
			\mathbb{P}\left( \sum_{i=1}^{z m/ \mu} E_{m,i}+\sum_{i=1}^{z/ \mu} \sum_{k=m+1}^r E_{k,i} > t \ \right) &\leq \e^{-\frac{8m}{n}t}\; 2^{z m / \mu} \left( \prod_{k=m+1}^{r} \frac{2k}{2k-m} \right) ^{z / \mu} \\
			&\leq \e^{-\frac{8m}{n}t}\; 2^{z m / \mu} \e^{z m / (2 \mu)\, \log n}.\nonumber
			\end{align}
			\proofend\end{proof}

		With this lemma we obtain an equivalent result for the waiting time as in ref.~\cite{Decoupling} up to the prefactor of $t$.
		Hence, for $t>\varsigma\,n \log^2 n$ with $\varsigma$ sufficiently large, applying the bounds on the probabilities $\mathbb{P}(M = m)$ for each value of $m \in \left\{1,\dots,\ell\right\}$ proved for the random quantum circuit case, we have
		\begin{align}
		\mathbb{P}\left(W_{r_{-}} >t\ | \ H\ \cap\ S\leq s\right)
		&=
		\sum_{m=1}^{\ell}\mathbb(M=m)\ \max_{\left\{y_i \right\}}\; \mathbb{P}\left( W(y_1)+\dots+W(y_S)\geq t\ | \ M=m \right) \\
		&\leq
		\frac{1}{(3-\eta)^\ell\binom{n}{\ell}}\frac{1}{{\rm poly}(n)}\nonumber.
		\end{align}
		The last two probability terms in eq.~\eqref{eq:3_events} depend only on the path of the accelerated random walk before reaching the interval $\left[r_{-},r_{+}\right]$.
		Looking at the accelerated chain and considering $\ell$ being in the region $\left[1,(3/4-\delta) n\right)$, we have
		${3(n-\ell)}/({3n-2l-1})\geq 1/2+\delta$ for any $n$. So, constructing a random walk $X'_k$ starting at the origin moving forward with probability $1/2+\delta$ and backward with $1/2-\delta$, it follows
		\begin{align}
		\mathbb{P}\left(S>s\right) & \leq \mathbb{P}\left(X'_s < r_{-}-\ell\right) \\
		&=\mathbb{P}\left(X'_s < 2\delta \, s-(2\delta \, s+\ell-r_{-})\right) \nonumber \\
		&\leq \exp\left ( -\frac{\left(2\delta s+\ell-r_{-}\right)^2}{2s}\right )\nonumber,
		\end{align}
		where in the last inequality we have used the Chernoff bound in ref.~\cite[Lemma A.3]{RandomCircuitsLow} assuming $2\delta s+\ell-r_{-}>0$. We conclude that the probability for the waiting time to be larger than $s\geq \phi n$ is exponentially decreasing in $n$ for large enough $\phi$.
		The last remaining term in eq.~\eqref{eq:3_events} can instead be bounded by (see ref.~\cite{Decoupling})
		\begin{equation}
		\mathbb{P}(H^c\ | \ S\leq s) \leq \frac{1}{(3-\eta)^\ell\binom{n}{\ell}}\frac{1}{{\rm poly}(n)}
		\end{equation}
		so that the proof of Lemma \ref{lemma:waiting_time} is now complete.
		\proofend

\end{appendices}

\end{document}